\DeclareMathOperator{\diag}{diag}
\begin{document}

\newcommand{\Z}{\mathbb{Z}}
\newcommand{\M}{\mathcal{M}}
\newcommand{\B}{\mathcal{B}}
\newcommand{\Sym}{\mathcal{S}}
\newcommand{\A}{\mathcal{A}}
\theoremstyle{definition}
\newtheorem{axiom}{Axiom}
\newtheorem{thm}{Theorem}
\newtheorem{Conjecture}{Conjecture}
\newtheorem{lem}{Lemma}
\newtheorem{example}{Example}
\newtheorem{cor}{Corollory}
\newtheorem{prop}{Proposition}
\newtheorem{rem}{Remark}
\newtheorem{definition}{Definition}
\newtheorem{measurement}{Measurement}
\newtheorem{ancilla}{Ancilla}

\numberwithin{equation}{section} \makeatletter
\renewenvironment{proof}[1][\proofname]{\par
    \pushQED{\qed}%
    \normalfont \topsep6\p@\@plus6\p@ \labelsep1em\relax
    \trivlist
    \item[\hskip\labelsep\indent
        \bfseries #1]\ignorespaces
}{%
    \popQED\endtrivlist\@endpefalse
} \makeatother
\renewcommand{\proofname}{Proof}

\title{Universal quantum computation with weakly integral anyons}
\author{Shawn X. Cui$^{1}$, SEUNG-MOON HONG$^2$, and Zhenghan Wang$^{1,3}$}

\address{$^1$Department of Mathematics\\University of California\\Santa Barbara, CA 93106}
\email{xingshan@math.ucsb.edu, zhenghwa@math.ucsb.edu}
\address{$^2$Department of Mathematics and Statistics, University of Toledo,
Toledo, OH 43606}
\email{SeungMoon.Hong@utoledo.edu}
\address{$^3$Microsoft Research, Station Q\\ University of California\\ Santa Barbara, CA 93106}
\email{zhenghwa@microsoft.com}

\thanks{The third author thanks A. Kitaev for explaining his encoding.  S.-X. C and Z.W. are partially supported by NSF DMS 1108736.}

\keywords{anyonic quantum computation, universal gate set, braid group}

\date{}

\begin{abstract}
Harnessing non-abelian statistics of anyons to perform quantum computational tasks is getting closer to reality. While the existence of universal anyons by braiding alone such as the Fibonacci anyon is theoretically a possibility, accessible anyons with current technology all belong to a class that is called weakly integral---anyons whose squared quantum dimensions are integers. We analyze the computational power of the first non-abelian anyon system with only integral quantum dimensions---$D(S_3)$, the quantum double of $S_3$. Since all anyons in $D(S_3)$ have finite images of braid group representations, they cannot be universal for quantum computation by braiding alone. Based on our knowledge of the images of the braid group representations, we set up three qutrit computational models. Supplementing braidings with some measurements and ancillary states, we find a universal gate set for each model.
\end{abstract}

\maketitle

\section{Introduction}

Harnessing non-abelian statistics of anyons to perform quantum computational tasks is getting closer to reality.  While the existence of universal anyons by braiding alone such as the Fibonacci anyon is theoretically a possibility \cite{mong13,RR99}, accessible anyons with current technology all belong to a class that is called weakly integral (WI)---anyons whose squared quantum dimensions are integers.  A famous WI anyon is the Ising anyon $\sigma$ with $d_\sigma=\sqrt{2}$, which is believed to model the non-abelian quasi-particle in the fractional quantum Hall liquids at $\nu=\frac{5}{2}$ \cite{nayak}.  Other WI anyons with Property F \cite{RDepak} include the metaplectic anyons \cite{metaplectic13p,metaplectic13m} and those in quantum double of finite groups \cite{ERW}.  Certain topological defects or ends of $1D$ nanowires also behave as WI anyons \cite{genon13,kitaev2001unpaired}.  It is conjectured that all WI anyons have finite images of braid group representations \cite{RDepak}, if so then they cannot be universal for quantum computation by braiding alone.

In this paper, we analyze the computational power of the anyon system  $D(\Sym_3)$---the quantum double of $\Sym_3$.  $D(\Sym_3)$ is the first non-abelian anyon system with only integral quantum dimensions \cite{OnMC}.  There are $8$ anyon types in the theory, which are denoted by  $A,B,C,D,E,F,G,H$ as in \cite{BSW} with quantum dimensions $\{1,1,2,3,3,2,2,2\}$.  It is known that all braid images are finite \cite{ERW}.  It follows that to obtain a universal quantum gate set, we have to go beyond braiding.  The natural extra resources are measurements and ancillary states.  Using measurement to gain extra computational power is tricky because universal quantum computation can be performed by measurement alone.  Similar caution applies to ancillary states as cluster state quantum computation shows.  Therefore, we have to be careful in choosing physically reasonable extra resources from measurements and ancillary states.

In \cite{Kitaev2} (see also \cite{Mochon}), the anyon of quantum dimension $d_D=3$, denoted as $D$, is made universal for quantum computation by encoding a qutrit in the $A,C$ fusion channels of a pair of $D$ anyons.  In the usual anyonic quantum computing model, two fusion channels such as $A,C$ of a pair of $D$ anyons would be used to encode a qubit instead of a a qutrit, but the authors split the single fusion channel $C$ into a two dimensional internal computational space because the anyon $C$ is the $2$-dimensional representation of $\Sym_3$.  Since $D(\Sym_3)$ is a discrete gauge theory, this encoding is justified on physical ground and computation is performed using only the color (topological) degrees of freedom of anyons.
Universality based on similar encodings for other finite groups is established in \cite{preskill, mochon2}.  In this paper, we follow the usual scheme in anyonic quantum computation by encoding information in the subspace of anyonic fusion tree basis without splitting any fusion channels, i.e.,  the computational subspace is spanned by basis elements from labeling a single fusion tree.  Since a pair of $D$ anyons has $9$ fusion channels, we have many choices of encoding a qutrit by choosing a $3$-dimensional fusion subspace.  Based on our analysis of the representations of the braid groups, we propose three different encodings of qutrits: one with $3$ fusion tree bases, and two with superpositions of fusion tree bases.  It is possible, as suggested by Kitaev to the third author, that the splitting of the fusion channel $C$ as in \cite{Kitaev2} can be understood as a non-local encoding using superpositions of different fusion trees.  Note that in the encoding in \cite{Kitaev2}, there is a bureau of standards, which is mathematically a based frame.

The contents of the paper is as follows. In Section $2$, we provide the detail of our adaptive anyonic quantum computing models and prove that a qutrit gate set convenient for our purpose is universal.  There are three natural choices in our set-up that are called $U$-model, $V$-model and $W$-model, respectively.  We also define the measurements and ancillary states that we are going to use later.  In Section $3$,  we prove that the $U$- and $V$- models are universal when braidings are supplemented by the two measurements defined in Section $2$, and the $W$-model needs the extra ancillary state to become universal.  Our major technical advance is organized into two appendices.  In Appendix $A$,
 we obtain the complete solutions of modular categories with the same fusion rules as $D(\Sym_3)$. To save space, we list only the complete data for the $D(\Sym_3)$.  For this paper, any other theory will work equally well.  The complete list of data in Appendix $A$ is used in Appendix $B$ to analyze the images of the braid group representations of $\B_4$.  We give complete information of the finite images as abstract groups, and as concretes matrices with respect to the computational bases of our models.  These matrices are braiding quantum circuits of our models. Many interesting finite groups such as the Hessian group of order $216$ appeared as images of braid group representations.
%\begin{
%\section{Adaptive anyonic quantum computing model}\label{background}
%}

%\end{
\section{Adaptive anyonic quantum computing model}\label{background}
%}
A pure anyonic quantum computing model as illustrated by Fig. $7.1$ in \cite{Wang} is to implement a circuit by braiding alone.  Measurement is only done in the end by fusing anyons together.  In particular, we are not allowed to do measurements in the middle of the computation.  Unlike the standard circuit model, a computation with measurements during the computing process is not always equivalent to one that all measurements are postponed to the end of the computation.  Since WI anyons provide only very limited circuits by braiding alone, we have to rely on other resources to obtain a universal gate set.  The obvious places to look for are measurements in the middle of the computation and ancillary states.  Since measurements of anyon charges beyond fusing two anyons are subtle, we want to do as little measurement as possible so that we do not decohere or leak the protected information in the computational subspace.  In this section, we use $D(\Sym_3)$ to illustrate such adaptive models that braiding gates are supplemented with measurements and ancillary states.  Our goal is to find minimal extra resources beyond braidings to obtain a universal gate set.  Therefore, though important for physically realizing the extra resources, we will not justify our choices of measurements or ancillary states physically.  Another important issue that we did not address is the issue of leakage.  We think there is no damaging leakage in our model, but will leave a careful analysis to the future.

\subsection{The integral anyon system $D(\Sym_3)$}
The irreducible representations (irreps) of the quantum group $D(\Sym_3)$, called the Drinfeld double or quantum double of $\Sym_3$, correspond to pairs $(C, \rho_u),$ where $C$ is a conjugacy class of $\Sym_3,$  $u$ an element in $C$, and $\rho_u$ an irrep of the centralizer of $u$ in $\Sym_3.$ For a fixed conjugacy class $C$, the corresponding irreps of $D(\Sym_3)$ do not depend on the choice of the element $u$. There are three conjugacy classes of $\Sym_3$, namely $C_1 = \{e\},\; C_2 = \{(12),(23),(13)\},\; C_3 = \{(123),(132)\}.$ For $C_1$, the centralizer of $e$ is $\Sym_3$, which has three irreps, i.e. the trivial one, the sign one, and the $2$-dimensional one.  We denote them by $A$, $B$ and $C$, respectively. For $C_2,$ we pick $(12)$ and its centralizer is isomorphic to $\mathbb{Z}_2,$ which has two irreps. We denote the trivial one by $D$, and the other one by $E$. For $C_3,$ the centralizer of $(123)$ is isomorphic to $\mathbb{Z}_3,$ which has three irreps, all of which are $1$-dimensional. More precisely, they correspond to mapping the generator of $\mathbb{Z}_3$ to $1$, $\omega=e^{\frac{2\pi i}{3}}=-\frac{1}{2}+\frac{\sqrt{3}i}{2} $ and $\omega^2.$ We denote these three irreps by $F$, $G$ and $H$, respectively. Therefore, there are in total $8$ irreps of $D(\Sym_3).$ As a unitary modular category, $Rep(D(\Sym_3))$ has $8$ isomorphism classes of simple objects. Since simple objects in unitary modular categories models anyons, we also call them anyon types. The $8$ anyon types were denoted by $A$, $B$, $C$, $D$, $E$, $F$, $G$ and $H$.  In the discrete gauge theory, an anyon of type $A$ is called the vacuum; anyons of types $B$ and $C$ are purely electric charges; anyons of types $D$ and $F$ are purely magnetic fluxes; while anyons of types $E$, $G$ and $H$ are dyons. We will not always distinguish between anyon types (isomorphism classes of simple objects) and anyons (simple objects) carefully because for $D(\Sym_3)$ this distinction will not make any difference.  Detailed explanations of the quantum double of $\Sym_3$ can be found in many references, e.g., \cite{Kitaev} \cite{BSW}.

We list the irreps of $D(\Sym_3)$ and their quantum dimensions in Table \ref{conjugacyclass}, and the fusion rules in Table \ref{fusionrule}.

\begin{table}\caption{$D(\Sym_3)$}\label{conjugacyclass}
\begin{tabular}{|c|c|c|c|}
\hline
Flux (Conjugacy class) & Centralizer  &  Charge & qdim\\
\hline
$C_1=\{e\}$ & $Z(e)=\Sym_3$ & 1, -1, 2&1,1,2 \\
\hline
\multirow{2}{*}{$C_2=\{(12),(13), (23)\}$} & $Z((12))=\mathbb{Z}_2$ &\multirow{2}{*}{ +, - }&\multirow{2}{*}{3,3}\\
                                           &   $=\{e, (12)\}$       &                       &                    \\
\hline
\multirow{2}{*}{$C_3=\{(123),(132)\}$ }    & $Z((123))=\mathbb{Z}_3$ & \multirow{2}{*}{1, $\omega, \bar{\omega}$}&\multirow{2}{*}{2,2,2}\\
                                           &  $=\{e, (123),(132)\}$  &                                           &        \\
\hline
\end{tabular}
\end{table}

{\tiny
\begin{table}\caption{Fusion rules of $D(\Sym_3)$}\label{fusionrule}
\begin{tabular}{|c|c|c|c|c|c|c|c|c|}
\hline $\otimes$ &$A$ &$B$ &$C$ &$D$ &$E$ &$F$ &$G$ &$H$\\ \hline
$A$ &$A$ &$B$ &$C$ &$D$ &$E$& $F$ &$G$ &$H$\\ \hline
$B$ &$B$ &$A$ &$C$& $E$ &$D$ &$F$ &$G$ &$H$\\ \hline
$C$ &$C$ &$C$ &$A\oplus B\oplus C$& $D\oplus E$ &$D\oplus E$ & $G\oplus H$& $F\oplus H$ &$F\oplus G$\\ \hline
\multirow{2}{*}{$D$} &\multirow{2}{*}{$D$} &\multirow{2}{*}{$E$} &\multirow{2}{*}{$D\oplus E$}& $A\oplus C\oplus  F$ & $B\oplus C\oplus F$ & \multirow{2}{*}{$D\oplus E$} & \multirow{2}{*}{$D\oplus E$} & \multirow{2}{*}{$D\oplus E$} \\
& & & & $\oplus G\oplus H$ & $\oplus G\oplus H$ & & &  \\ \hline
\multirow{2}{*}{$E$} &\multirow{2}{*}{$E$}& \multirow{2}{*}{$D$}& \multirow{2}{*}{$D\oplus E$} & $B\oplus C\oplus F$ & $A\oplus C\oplus F$ & \multirow{2}{*}{$D\oplus E$} &\multirow{2}{*}{$D\oplus E$} & \multirow{2}{*}{$D\oplus E$} \\
& & & & $\oplus G\oplus H$ & $\oplus G\oplus H$ & & &  \\  \hline
$F$ &$F$ & $F$& $G\oplus H$& $D\oplus E$ & $D\oplus E$ & $A\oplus B\oplus F$ & $H\oplus C$ & $G\oplus C$ \\ \hline
$G$ &$G$ & $G$& $F\oplus H$ & $D\oplus E$ & $D\oplus E$ & $H\oplus C$ & $A\oplus B\oplus G$ & $F\oplus C$ \\ \hline
$H$ &$H$ & $H$& $F\oplus G$ & $D\oplus E$ & $D\oplus E$ & $G\oplus C$ & $F\oplus C$ & $A\oplus B\oplus H$\\\hline
\end{tabular}
\end{table}
}

The modular representation of $\textrm{SL}(2,\Z)$ is given by:
 $$ T = \diag (1,1,1, -1,1,1,\omega,\omega^2).$$

$${S}=\frac{1}{6}\begin{pmatrix}
1&1 &2 &3 &3 &2 &2 &2 \\
1&1 &2 &-3 &-3 &2 &2 &2 \\
2&2 &4 &0 &0 &-2 &-2 &-2\\
3&-3 &0 &3 &-3 &0 &0 &0 \\
3&-3 &0 &-3 &3 &0 &0 &0 \\
2&2 &-2 &0 &0 &4 &-2 &-2 \\
2&2 &-2 &0&0 &-2 &-2 &4 \\
2&2 &-2 &0 &0 &-2 &4 &-2
\end{pmatrix}$$

\subsection{Fusion tree basis} \label{FusionTreeBasis}

An anyon $c$ can split into a pair of anyons $(a,b)$ if the triple $(a,b,c)$ is admissible \cite{Wang}. We denote this process by

\setlength{\unitlength}{0.030in}
\begin{picture}(50,30)(-40,0)
 \put(10,10){\line(0,-1){10}}
 \put(10,10){\line(1,1){10}}
 \put(10,10){\line(-1,1){10}}

 \put(2,20){$a$}
 \put(22,20){$b$}
 \put(12,2){$c$}
\end{picture}

The anyon $a$ can continue to split into another pair of anyons. Consider the following splitting tree:

\setlength{\unitlength}{0.030in}
\begin{picture}(50,40)(-40,0)
 \put(20,10){\line(0,-1){10}}
 \put(20,10){\line(1,1){20}}
 \put(20,10){\line(-1,1){20}}
\put(10,20){\line(1,1){10}}

 \put(2,30){$a$}
 \put(22,30){$b$}
 \put(42,30){$c$}
 \put(16,16){$m$}
 \put(22,2){$d$}
\end{picture}

There is a Hilbert space $V_{d}^{abc}$ for the $4$ anyons $a,b,c,d$, where the labeled splitting trees with choices of anyon $m$ that make the splitting tree admissible at each trivalent vertex form a basis of $V_d^{a,b,c}$.  We imagine the splitting process as going from the bottom to the top. then the fusing process going from the top to the bottom.  Therefore, we will often also refer to a splitting tree as a fusion tree.

We also have another splitting tree.

\setlength{\unitlength}{0.030in}
\begin{picture}(50,40)(-40,0)
 \put(20,10){\line(0,-1){10}}
 \put(20,10){\line(1,1){20}}
 \put(20,10){\line(-1,1){20}}
\put(30,20){\line(-1,1){10}}

 \put(2,30){$a$}
 \put(22,30){$b$}
 \put(42,30){$c$}
 \put(26,15){$m$}
 \put(22,2){$d$}
\end{picture}

The labeled fusion trees provide another basis for the same Hilbert space $V_{d}^{abc}$. Hence there is a transformation matrix $F$ that relates these two bases, which is called an $F$-matrix.

\setlength{\unitlength}{0.030in}
\begin{picture}(50,40)(0,0)
 \put(20,10){\line(0,-1){10}}
 \put(20,10){\line(1,1){20}}
 \put(20,10){\line(-1,1){20}}
\put(10,20){\line(1,1){10}}

 \put(2,30){$a$}
 \put(22,30){$b$}
 \put(42,30){$c$}
 \put(16,16){$m$}
 \put(22,2){$d$}

 \put(45,15){=}
 \put(50,15){$\sum\limits_{n} F_{d;nm}^{abc}$}

 \put(80,10){\line(0,-1){10}}
 \put(80,10){\line(1,1){20}}
 \put(80,10){\line(-1,1){20}}
\put(90,20){\line(-1,1){10}}

 \put(62,30){$a$}
 \put(82,30){$b$}
 \put(102,30){$c$}
 \put(86,15){$n$}
 \put(82,2){$d$}
\end{picture}

Assume the fusion tree on the left hand side of the above equation is the $j_m$-th basis element and the fusion tree on the right hand side is the $i_n$-th basis element in the other basis. Then $F_{d;nm}^{abc}$ will denote the $(i_n,j_m)$-entry of $F_{d}^{abc}$ when the theory has no multiplicities in the fusion rules such as $D(\Sym_3)$. The numerical values $F_{d;nm}^{abc}$ are called $6j$ symbols.  In the following, we always assume there are no multiplicities in the fusion rules, i.e., the fusion coefficients are either $0$ or $1$. The matrices $F_{d}^{abc}$ can be chosen to be unitary for a unitary theory.

\begin{rem}
For the case of $D(\Sym_3)$, recall that the 8 anyon types are denoted by $A,\; B ,\; C,\;D,\;E,\;F,\;G$ and $H$. For some reason, we order the basis elements in the order $A,\; B ,\; G,\;D,\;E,\;F,\;C, \; H$. This is the order we use to compute the $F$-matrices in Appendix \ref{FRmatrices}. For example, the basis for the following fusion tree is $\{A,\;G ,\;F ,\;C,\;H\}.$

\setlength{\unitlength}{0.020in}
\begin{picture}(50,40)(-40,0)
 \put(20,10){\line(0,-1){10}}
 \put(20,10){\line(1,1){20}}
 \put(20,10){\line(-1,1){20}}
\put(10,20){\line(1,1){10}}

 \put(-5,30){$D$}
 \put(22,30){$D$}
 \put(42,30){$D$}
 \put(16,16){$x$}
 \put(22,2){$D$}
 \end{picture}
\end{rem}

The Hilbert space $V_{c}^{ab}$ associated to the following splitting tree is $1$-dimensional.

\setlength{\unitlength}{0.030in}
\begin{picture}(50,30)(-40,0)
 \put(10,10){\line(0,-1){10}}
 \put(10,10){\line(1,1){10}}
 \put(10,10){\line(-1,1){10}}

 \put(2,20){$a$}
 \put(22,20){$b$}
 \put(12,2){$c$}
\end{picture}

Braiding the two anyons $a,b$ corresponds to a unitary transformation $B_{c}^{ab}$ from $V_c^{ab}$ to $V_c^{ba}$. The image of the fusion tree basis of $V_c^{ab}$ under $B_{c}^{ab}$ is a scalar multiple of the chosen fusion tree basis of $V_c^{ba}$.  This scalar is denoted by $R_c^{ba},$ which is called the $R$-symbol.

\setlength{\unitlength}{0.030in}
\begin{picture}(50,60)(0,0)
 \put(10,10){\line(0,-1){10}}
 \put(10,10){\line(1,1){10}}
 \put(10,10){\line(-1,1){10}}
 \put(0,20){\line(1,1){20}}
 \put(20,20){\line(-1,1){8}}
 \put(8,32){\line(-1,1){8}}

 \put(2,20){$a$}
 \put(22,20){$b$}
 \put(12,2){$c$}
 \put(2,40){$b$}
 \put(22,40){$a$}

 \put(30,20){=}
 \put(35,20){$R_c^{ba}$}

 \put(70,10){\line(0,-1){10}}
 \put(70,10){\line(1,1){30}}
 \put(70,10){\line(-1,1){30}}

 \put(42,40){b}
 \put(102,40){a}
 \put(72,2){c}
\end{picture}

The $6j$-symbols and $R$-symbols are the data needed to compute the amplitudes of creating anyons from the vacuum, braiding some of them, and then fusing them back to the vacuum. The approximation of these probabilities for such processes is the output of anyonic quantum computational models.  Similar orthonormal basis exists for any Hilbert space $V_{x}^{ab...c}$.  Protected information is encoded into some subspaces of $V_{x}^{ab...c}$, which are called computational subspaces.  There are no canonical choices of computational subspaces.

\subsection{Encoding of qudits}\label{UVWmodel}

Consider the following fusion tree:

\setlength{\unitlength}{0.030in}
\begin{picture}(160,60)(0,-10)

\put(50,0){\line(0,1){10}}
\put(50,10){\line(1,1){30}}
\put(50,10){\line(-1,1){30}}
\put(70,30){\line(-1,1){10}}
\put(30,30){\line(1,1){10}}

\put(20,42){$m$}
\put(40,42){$m$}
\put(60,42){$m$}
\put(80,42){$m$}
\put(36,20){$x$}
\put(62,20){$y$}
\put(50,-2){$z$}
\end{picture}

Namely, we start with an anyon of type $z$ and split it into $4$ anyons, all of which have the same anyon type $m$.  All the pairs of $(x,y)$ that make the above splitting tree admissible form a natural basis of $V_z^{mmmm}.$ We denote them by $\{|x,y; m,z \rangle\}$. When there is no confusion, we will use the abbreviation $|xy\rangle$ for $|x,y; m,z \rangle$.  We use this basis of the Hilbert space $V_z^{mmmm}$ or linear combinations of some of them as our computational basis for a $1$-qudit. For a particular theory, this Hilbert space is not big enough for all qudits, but usually we are only interested in a qubit or a qutrit.

To carry out computation, we can braid the first anyon $m$ with the second, the second with the third and the third with the fourth anyon. Each of them corresponds to a unitary transformation on $V_z^{mmmm},$ which we denote by $\sigma_1, \; \sigma_2 $ and $\sigma_3,$ respectively. Moreover, they satisfy the relation:
$$
\sigma_1\sigma_2\sigma_1 = \sigma_2\sigma_1\sigma_2, \; \sigma_2\sigma_3\sigma_2 = \sigma_3\sigma_2\sigma_3, \;\sigma_1\sigma_3 = \sigma_3\sigma_1.
$$

This is just the relation that defines the braid group $\B_4$ on three generators. So we obtain a unitary representation of $\B_4$ on $V_z^{mmmm},$ which we denote by $\rho(m,z)$. The computational power of the theory depends on the image of $\rho(m,z)$ in the special unitary group $\textrm{SU}(V_z^{mmm})$.

\begin{definition}

Given $V_{z}^{mmmm}$, we will call the unitary representation matrices $U(b)$ for braids $b$ the braiding quantum circuits.  The special braiding circuits $U(\sigma_i^{\pm})$ for the braid generators $\sigma_i^{\pm}$ will be called the braiding gates.

The same terminologies are extended to multi-anyons for multi-qudits.

\end{definition}

Since our computational space is always a subspace of the braid group representation, the quantum circuits obtained from braiding quantum circuits are really their restrictions to the computational subspace.  We will not make this distinction when no confusion will arise.

Now we specialize to $D(\Sym_3)$. In order to compute the braiding matrices, we need all the $6j$-symbols and $R$-symbols.  For $D(\Sym_3)$, all of them are listed in Appendix $A$.  To analyze the computational power of $1$-qudit braiding circuits, we need all the representation matrices of $\B_4$.  We systematically analyzed all $\B_4$ representations in Appendix $B$.  These two important appendices are our technical advance.

The natural choice will be to encode a qudit in $V_A^{DDDD}$.  Unfortunately, we did not succeed in finding a model that could be made universal even with measurements and ancillary states.  Therefore, we turn to $V_G^{DDDD}$ based on our knowledge of the braid group representations:

\setlength{\unitlength}{0.030in}
\begin{picture}(160,60)(0,-10)

\put(50,0){\line(0,1){10}}
\put(50,10){\line(1,1){30}}
\put(50,10){\line(-1,1){30}}
\put(70,30){\line(-1,1){10}}
\put(30,30){\line(1,1){10}}

\put(20,42){$D$}
\put(40,42){$D$}
\put(60,42){$D$}
\put(80,42){$D$}
\put(36,20){$x$}
\put(62,20){$y$}
\put(50,-2){$G$}
\end{picture}

The space $V_G^{DDDD}$ is nine dimensional with a basis $\{|GG\rangle, |AG\rangle, |GA\rangle,\\ |FC\rangle, |CF\rangle, |FH\rangle, |HF\rangle, |CH\rangle, |HC\rangle\}$.  Let $U$ = $span\{ |GG\rangle, |AG\rangle, |GA\rangle\},$ $V$ = $span\{\frac{1}{\sqrt{2}}(|FC\rangle + |CF\rangle), \frac{1}{\sqrt{2}}(|FH\rangle + |CH\rangle), \frac{1}{\sqrt{2}}(|HF\rangle + |HC\rangle)\}$ and $W$ = $span\{\frac{1}{\sqrt{2}}(|FC\rangle - |CF\rangle), \frac{1}{\sqrt{2}}(|CH\rangle - |FH\rangle), \frac{1}{\sqrt{2}}(|HF\rangle - |HC\rangle)\}$. To remind ourselves that these basis are used as computational basis, we also write them as $\{|0\rangle_x, |1\rangle_x, |2\rangle_x,\}, \; x = U,\; V,\; W$, where the subscript $x$ indicates which subspace we are referring to, $e.g.$ $|0\rangle_U = |GG\rangle.$ The representation of $\B_4$ splits into the direct sum of a $6$-dim irreducible summand $U \oplus V$  and a $3$-dim irreducible summand $W$.

To encode $2$-qutrits, we consider the following fusion tree:

\setlength{\unitlength}{0.020in}
\begin{picture}(180,120)(-40,-50)

\put(50,10){\line(1,-1){45}}
\put(50,10){\line(1,1){30}}
\put(50,10){\line(-1,1){30}}
\put(70,30){\line(-1,1){10}}
\put(30,30){\line(1,1){10}}

\put(140,10){\line(-1,-1){45}}
\put(140,10){\line(1,1){30}}
\put(140,10){\line(-1,1){30}}
\put(160,30){\line(-1,1){10}}
\put(120,30){\line(1,1){10}}

\put(95,-35){\line(0,-1){15}}

\put(20,42){$D$}
\put(40,42){$D$}
\put(60,42){$D$}
\put(80,42){$D$}
\put(38,20){$x_1$}
\put(64,20){$y_1$}
\put(58,-15){$G$}

\put(110,42){$D$}
\put(130,42){$D$}
\put(150,42){$D$}
\put(170,42){$D$}
\put(128,20){$x_2$}
\put(154,20){$y_2$}
\put(124,-15){$G$}

\put(86,-47){$G$}
\end{picture}

The $2$-qutrits are the tensor product of the two qutrits on the two branches.  This encoding of $2$-qutrits is called the {\it sparse encoding} because  encoding with fewer anyons, called the {\it dense encoding}, is also possible.
To encode $n$-qutrits,  we simply use the tensor product of $n$ such branches, so there are totally $4n$ anyons.

We will refer to the three qutrit models that encode $1$-qutrit in the subspaces $U$, $V$, and $W$, respectively, with the computational bases above as the {\it qutrit $U$-model, $V$-model, and $W$-model,} respectively.

To analyze these models, we systematically investigate all relevant braid group representations in Appendix \ref{rep of braid}. Our results are summarized below in Tables $3$ and $4$.

Table \ref{allbasis} list of the dimensions and bases of $V_{z}^{mmmm}$, and Table \ref{irrep} the basic properties of the representations $\rho(m,z).$

{\small
\begin{table}[!h]
\begin{tabular}{|c|c|c|}
\hline
$m,z$     &  dimension  &  basis \\
\hline
$C,A$     &    3        &  $|AA \rangle, |BB \rangle, |CC \rangle$\\
\hline
$C,B$     &    3        &  $|CC \rangle, |AB \rangle, |BA \rangle$\\
\hline
$C,C$     &    5        &  $|CC \rangle, |AC \rangle, |CA \rangle, |BC \rangle, |CB \rangle $\\
\hline
$D,A$     &    5        &  $|AA \rangle, |CC \rangle, |FF \rangle, |GG \rangle, |HH \rangle $\\
\hline
$D,B$     &    4        &  $|CC \rangle, |FF \rangle, |GG \rangle, |HH \rangle $\\
\hline
$D,C$     &    9        &  $|CC \rangle, |AC \rangle, |CA \rangle, |GF \rangle, |FG \rangle, |GH \rangle, |HG \rangle, |FH \rangle, |HF \rangle $\\
\hline
$D,F$     &    9        &  $|FF \rangle, |AF \rangle, |FA \rangle, |GC \rangle, |CG \rangle, |GH \rangle, |HG \rangle, |CH \rangle, |HC \rangle $\\
\hline
$D,G$     &    9        &  $|GG \rangle, |AG \rangle, |GA \rangle, |FC \rangle, |CF \rangle, |FH \rangle, |HF \rangle, |CH \rangle, |HC \rangle $\\
\hline
$D,H$     &    9        &  $|HH \rangle, |AH \rangle, |HA \rangle, |GF \rangle, |FG \rangle, |GC \rangle, |CG \rangle, |FC \rangle, |CF \rangle $\\
\hline
$G,A$     &    3        &  $|AA \rangle, |BB \rangle, |GG \rangle$\\
\hline
$G,B$     &    3        &  $|GG \rangle, |AB \rangle, |BA \rangle$\\
\hline
$G,G$     &    3        &  $|GG \rangle, |AG \rangle, |GA \rangle, |BG \rangle, |GB \rangle$\\
\hline
\end{tabular}
\caption{Dimension and Basis of $V_{z}^{mmmm}$} \label{allbasis}
\end{table}
}

{\small
\begin{table}[!h]
\begin{tabular}{|c|c|c|c|c|c|c|c|}
\hline

%\multirow{2}{*}{m,z} & \multirow{2}{*}{Dim of $V_z^{mmmm}$} & \multicolumn{3}{|c|}{\multirow{2}{*}{Dim of each component}} & \multicolumn{3}{|c|}{\multirow{2}{*}{Image of $\rho(m,z)$ on each component}} \\
\multirow{2}{*}{$m,z$} & Dimension of  & \multicolumn{3}{|l|}{Dimension of} & \multicolumn{3}{|c|}{Images of $\rho(m,z)$ } \\
                     & $V_z^{mmmm}$  & \multicolumn{3}{|l|}{$V_z^{mmmm}$} & \multicolumn{3}{|c|}{on Sectors}             \\
\hline

$C,A$   &  3   &   2 &  1  &   &   $\mathbb{Z}_3 \rtimes \mathbb{Z}_4$  &  1 &  \\
\hline

$C,B$   &  3   &   3 &     &   &   $\Sym_4 $                               &   &  \\
\hline

$C,C$   &  5   &   3 &  1  & 1 &   $\Sym_4$                                &  1 & 1 \\
\hline

$D,A$   &  5   &   3 &  1  & 1 &   $\A_4$                                &  1 & 1 \\
\hline

$D,B$   &  4   &   2 &  2  &   &   $\textrm{SL}(2,\mathbb{F}_3)$                 &  $\textrm{SL}(2,\mathbb{F}_3)$ &  \\
\hline

$D,C$   &   \multicolumn{7}{|c|}{Same as $(D,F)$}\\
\hline

$D,F$   & 9    &   8 &  1  &    &   $\sum(216)$                         &  1 &  \\
\hline

$D,G$   &  9   &   3 &  6  &    &   $\sum(216*3)$                        &  $\sum(216*3)$ &  \\
\hline

$D,H$   &   \multicolumn{7}{|c|}{Same as $(D,G)$}\\
\hline

$G,A$   &  3   &   3 &    &    &   D(9,1,1;2,1,1)                        &       &  \\
\hline

$G,B$   &  3   &   3 &    &    &   D(18,1,1;2,1,1)                        &   &  \\
\hline

$G,G$   &  5   &   4 &  1 &    &    Group of order $648$                      &  1 &  \\
\hline
\end{tabular}
\caption{Summary of the representations $\rho(m,z)$ on $V_{z}^{mmmm}$} \label{irrep}
\end{table}
}

Note that the group of order $648$ in the last row of Table 4 is isomorphic, as an abstract group, to  $(((\mathbb{Z}_3 \times ((\mathbb{Z}_3 \times \mathbb{Z}_3) \rtimes \mathbb{Z}_2)) \rtimes \mathbb{Z}_2 ) \rtimes \mathbb{Z}_3) \rtimes \mathbb{Z}_2$.  This isomorphism is given by the software package GAP.

\subsection{Braiding, measurement, and ancilla}

Using the encoding above, we can simulate standard qutrit quantum circuits by braidings of $D$ anyons.  Concrete braiding quantum circuits are the braid group representation matrices with respect to the fusion tree basis.  For $1$-qutrit braiding circuits, we need to know the representation matrices of $\B_4$---the $4$-strand braid group, and for $2$-qutrit braiding circuits, the representation matrices of $\B_8$.  Since both collections of matrices are finite \cite{ERW}, they are not sufficient to simulate the standard qutrit circuit model.

To gain extra computational power, we consider measurement and ancilla.  In anyon theory, there are two kinds of measurements to determine the total charge of a collection of anyons: projective and interferometric.  Both types of measurements always lead to some decoherence in the model.  Therefore, ideally we should only use them at the end of the computation.  Since we cannot avoid using them for WI anyons, we will allow ourselves to determine whether or not the total charge of two anyons is trivial in the middle of the computation.  Then based on the outcome, we choose how to continue our computation.  For this reason, we call such models {\it adaptive}.

\begin{measurement}\label{Meas 1}
Let $\M_A=\{\Pi_A,\Pi_{A'}\}$ be the projective measurement onto the total charge=$A$ sector and its complement. Then $\M_A$ allows us to
 distinguish between the anyon $A$ and other anyons; namely, check whether a anyon is trivial or not.  Moreover, the state after measurement for each outcome is still coherent.
\end{measurement}

The next measurement that we use is problematic, but it is unavoidable due to our choices of computational subspaces.  It allows us to project states back to the computational subspaces.

\begin{measurement}\label{Meas 2}
Let $S$ be a subspace of an anyonic space and $S^{\perp}$ be its orthonormal complement.  Then $\M_S=\{\Pi_S, \Pi_{S^{\perp}}\}$ is the projective measurement that projects a state to $S$ or $S^{\perp}$.
\end{measurement}

For example, applying $M_S$ to $S=U$ in $V_{G}^{DDDD}$, we obtain the orthogonal projection to $U = span\{ |GG\rangle, |AG\rangle, |GA\rangle\}$ and its orthogonal complement $V \oplus W = span\{ |FC\rangle, |CF\rangle, |FH\rangle, |HF\rangle, |CH\rangle, |HC\rangle\}$.

The main result of the paper is that braiding supplemented by measurements $\M_A$ and $\M_U$ leads to a universal gate set for the $U$-model and $V$-model.  To make the qutrit $W$-model universal, we need to use the extra ancillary state:

\begin{ancilla} \label{Anc 1}
The state in the following picture is denoted by $|H\rangle_A.$
\end{ancilla}
\setlength{\unitlength}{0.030in}
\begin{picture}(160,60)(0,-10)

\put(50,0){\line(0,1){10}}
\put(50,10){\line(1,1){30}}
\put(50,10){\line(-1,1){30}}
\put(70,30){\line(-1,1){10}}
\put(30,30){\line(1,1){10}}

\put(20,42){D}
\put(40,42){D}
\put(60,42){D}
\put(80,42){D}
\put(36,20){H}
\put(62,20){H}
\put(50,-2){A}
\end{picture}

Then our second result is that braiding supplemented by measurements $\M_A$ and $\M_U$ and ancillary state $|H\rangle_A$ leads to a universal gate set for the $W$-model.

\subsection{A universal gate set for qutrits}\label{Hadamardgate}

Theoretically, there is no advantage to use qutrits instead of qubits.  But there are anyon systems that are more natural to choose qutrits rather than qubits.  This is the case when we use WI anyons of quantum dimensions $3$ or $\sqrt{3}$ for anyonic quantum computation.  Moreover, there are some better numbers for qutrits distillation protocols which might provide some benefits for engineering \cite{anwar, vala}.  In this section, we prove that a particular convenient qutrit gate set for our purpose is universal for the standard qutrit circuit model.

The generalized Hadamard gate for qutrit is the following:

$ h = \frac{1}{\sqrt{3}}
\begin{pmatrix}
1       &      1       &     1  \\
1       &     \omega   & \omega^2 \\
1       &     \omega^2   & \omega \\
\end{pmatrix}$,

where $\omega=e^{\frac{2\pi i}{3}}$.

The SUM gate for qudits is a generalized version of $CNOT$, which maps basis element $|i,j\rangle$ to $|i, i+j  \, \textrm{mod}\, 3 \rangle$.
To state our theorem, we need to define another measurement:

\begin{measurement}\label{Meas 3}
Let $\M_{|0\rangle}=\{\Pi_{|0\rangle},\Pi_{{|0\rangle}^{\perp}}\}$ be the projective measurement that is the
orthogonal projection to $span\{|0\rangle\}$ and its orthogonal complement $span\{|1\rangle, |2\rangle\}$ in a qutrit.
\end{measurement}

\begin{thm}\label{HSMA universal}
The $1$-qutrit classical gates, generalized Hadamard gate, $SUM$ gate, and Measurement \ref{Meas 3} form a universal gate set for the standard qutrit quantum circuit model.
\end{thm}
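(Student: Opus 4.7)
My plan is to reduce universality to two statements: density of the 1-qutrit operations in $SU(3)$, and an entangling-gate argument to pass from one qutrit to $n$ qutrits. The entangling-gate step is standard: since SUM is a genuinely entangling two-qutrit Clifford (mapping $|i,j\rangle \mapsto |i, i+j \bmod 3\rangle$), the qudit analogue of the Brylinski--Brylinski theorem yields density in $SU(3^n)$ from density in $SU(3)$ together with SUM. This is the working definition of universality for the standard qutrit circuit model, so once the single-qutrit density is established the multi-qutrit case follows mechanically.

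The substantive content is therefore the single-qutrit statement. The 1-qutrit classical gates (permutations of the computational basis) together with $h$ generate only the single-qutrit Clifford group, which is finite and hence cannot by itself be dense in $SU(3)$. The measurement $\mathcal{M}_{|0\rangle}$ must supply the missing non-Clifford content. I would carry this out through a gate-teleportation protocol: introduce an ancilla qutrit prepared in $|0\rangle$, apply Clifford operations (including $h$ and SUM) to couple it to the data qutrit, perform $\mathcal{M}_{|0\rangle}$ on the ancilla, and apply adaptive Clifford corrections conditioned on the outcome. The ``$\neq |0\rangle$'' branch, which projects the ancilla onto the 2-dimensional subspace $\mathrm{span}\{|1\rangle,|2\rangle\}$, is the crucial non-stabilizer resource: it exploits the asymmetric 2-outcome structure of $\mathcal{M}_{|0\rangle}$ (rather than a full 3-outcome basis measurement) and induces an effective operation on the data qutrit that lies outside the single-qutrit Clifford group. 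Iterating and interleaving such a block with Clifford pre- and post-processing then generates, within $SU(3)$, a subgroup of infinite order containing the Clifford group.

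The main obstacle is the concluding density step: verifying that the closure of this generated subgroup is all of $SU(3)$, rather than some proper closed Lie subgroup. I would handle this by explicitly computing a matrix representative of the non-Clifford element produced by the protocol and then either (i) invoking a classification of the proper closed Lie subgroups of $SU(3)$ containing the single-qutrit Clifford group and eliminating each candidate by inspection of the computed element, or (ii) producing a one-parameter family of such operations (by varying the number of post-selected rounds in the protocol) and differentiating at the identity to show that the resulting tangent vector, together with its Clifford conjugates, spans $\mathfrak{su}(3)$. Either approach is computational but routine once a single non-Clifford matrix is in hand; the conceptual core of the proof is arranging the teleportation block so that its branch structure is genuinely non-stabilizer and the induced operation on the data qutrit has infinite order.
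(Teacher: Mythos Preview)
Your approach is genuinely different from the paper's, and it also has a real gap. The paper does not attempt to show density in $SU(3)$ at all. Instead it fixes the qubit subspace $\mathrm{span}\{|0\rangle,|1\rangle\}$ inside the qutrit and shows, by explicit construction, that the given resources simulate a known universal set of \emph{qubit} operations: it builds an auxiliary single-qutrit gate $\mathrm{FLIP}_2$ (via an ancilla state $\tfrac{1}{\sqrt3}(|0\rangle-|1\rangle+|2\rangle)$ manufactured from $h$, SUM, and Measurement~\ref{Meas 3}), uses it with SUM to realize the qubit gate $\bigwedge^2(\sigma_z)$, separately builds a qubit $\sigma_x$-measurement by iterating two projective measurements, and then invokes the Mochon/Kitaev result that $\{|+\rangle,\ \text{measure }\sigma_z,\ \text{measure }\sigma_x,\ \bigwedge^2(\sigma_z)\}$ is universal for qubits. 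Universality for qutrits then follows by the trivial encoding of a qutrit into two qubits. No density argument, no Brylinski--Brylinski, and no attempt to produce a non-Clifford \emph{unitary} appears anywhere.

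The gap in your plan is that the step you call ``routine'' is the entire content. You assert that a suitable Clifford-plus-$\mathcal{M}_{|0\rangle}$ teleportation block yields, after Clifford correction, a non-Clifford unitary on the data qutrit, but you neither exhibit the block nor argue that the two Kraus operators coming from the outcomes of $\mathcal{M}_{|0\rangle}$ are both proportional to unitaries differing by a Clifford. With a rank-$1$/rank-$2$ projective measurement this is not automatic; for a generic Clifford coupling the ``$\neq|0\rangle$'' branch gives a non-unitary Kraus operator on the data, so there is no clean gate to feed into a density argument. Your framing also quietly conflates ``universal for the circuit model'' with ``generates a dense subgroup of $SU(3^n)$'': once measurements are primitive operations, the object you are building is an adaptive probabilistic protocol, not a subgroup, and the cleanest route is exactly the paper's---reduce to a known measurement-based universality theorem rather than try to distill a single irrational-angle unitary and then classify closed subgroups of $SU(3)$.
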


\subsubsection{Proof of Theorem \ref{HSMA universal}} \label{proof universal}
$\\$

We fix $|0\rangle$ and $|1\rangle$ as a qubit, and show that we can implement a universal set of qubit gates.
More explicitly, we use the $2$-dimensional subspace $\mathbb{C}^2 = span\{|0\rangle,|1\rangle\}$ inside $\mathbb{C}^3 = span\{|0\rangle,|1\rangle, |2\rangle\}$ to do computations. During the computations, we will go out of the subspace $\mathbb{C}^2$, and eventually come back to it.  Though unnecessary, we can deduce universality for the qutrit models by enoding a qutrit with two qubits $\mathbb{C}^2 \otimes \mathbb{C}^2 \subset \mathbb{C}^3 \otimes \mathbb{C}^3$. That is, we use $|00\rangle, \, |01\rangle, \, |10\rangle$ to encode $|0\rangle, \, |1\rangle, \, |2\rangle $, respectively. And the basis element $|11\rangle$ is left unused.

Our strategy of proof follows from that of \cite{Kitaev2} and some of the lemmas below are stated in \cite{Kitaev2} as exercises.

Note that with $1$-qutrit classical gates, the generalized Hadamard gate $h$, and Measurement \ref{Meas 3}, we can easily construct the following ancilla and measurements:
$\\$

1). $|i\rangle$, $i = 0, \; 1, \; 2 $.

2). $\widetilde{|i\rangle} = \sum\limits_{j=0}^{2}\omega^{ij}|j\rangle = h|i\rangle$, $i = 0, \; 1, \; 2 $.

3). Projection of a $1$-qutrit state to any computational state, preserving the coherence of the orthogonal complement. For example, projection to $span\{|0\rangle, |1\rangle\}$ and its complement $span\{|2\rangle\}$.

4). Measurement of a qutrit in the standard computational basis.

5). Projection to $span\{\widetilde{|1\rangle}, \widetilde{|2\rangle}\}$ and its complement $span\{\widetilde{|0\rangle}\}$

6). Measurement of a qubit in the standard basis if we take $\{|0\rangle, |1\rangle\}$ as the computational basis. This follows from $4).$
$\\$

%Now we introduce some qutrit gates.
%
%Generalized $Z$-gate:  $Z|i\rangle = \omega^i |i\rangle.$
%
%$SWAP$ gate:   $SWAP|i,j\rangle = |j,i \rangle$.
%
%Sign flip gate  $FLIP_2$:  $FLIP_2 |0\rangle = |0\rangle, \  FLIP_2 |1\rangle = |1\rangle, \ FLIP_2 |2\rangle = -|2\rangle.$

From the set of operations given in Theorem \ref{HSMA universal}, we show that we can construct the qutrit (qubit) gates (measurements) in Lemmas \ref{ZSWAPFLIP}, \ref{triple z}, \ref{measure x}.

We define the qutrit gate $FLIP_2$ by the map: $FLIP_2 |0\rangle = |0\rangle, \  FLIP_2 |1\rangle = |1\rangle, \ FLIP_2 |2\rangle = -|2\rangle.$

\begin{lem} \label{ZSWAPFLIP}
The gate $FLIP_2,$ can be constructed.
\begin{proof}

%$Z$ and $SWAP$ are constructed as follows:
%
%\xymatrix{Z \; : \; |i\rangle \widetilde{|1\rangle} \ar[r]^-{SUM^{-1}} & \omega^i |i\rangle \widetilde{|1\rangle}.\\}
%
%
%
%\xymatrix{SWAP \; : \; |i,j\rangle \ar[r]^-{SUM^{-1}} & |i,j-i\rangle \ar[r]^-{\overleftarrow{SUM}} & |j,j-i\rangle \ar[r]^-{SUM^{-1}} & |j,-i\rangle \ar[r] & |j,i\rangle .\\}
%
%In the diagram above, the gate $\overleftarrow{SUM}$ is the $SUM$ gate with the second qutrit as control and the first qutrit as target.

To obtain $FLIP_2$, we first construct the ancilla $|\psi\rangle = \frac{1}{\sqrt{3}}(|0\rangle - |1\rangle + |2\rangle)$ as follows.

Prepare the state $\widetilde{|1\rangle}\widetilde{|2\rangle},$ and project each qutrit to the space $span\{|0\rangle, |1\rangle\}$ to obtain the state $|\eta\rangle = \frac{1}{2}(|0\rangle + \omega |1\rangle) \otimes (|0\rangle + \omega^2 |1\rangle).$ Apply the $SUM$ gate to $|\eta\rangle$ and then project the first qutrit of the resulting state to the space $span\{\widetilde{|0\rangle}\}$. It's easy to see on the second qutrit we get the state $|\psi\rangle.$

Now for a state $|\phi \rangle = c_0 |0\rangle + c_1|1\rangle + c_2|2\rangle, $ apply the $SUM$ gate to $|\phi\rangle |\psi\rangle$ and then measure the second qutrit in the standard basis. If the outcome is $|0\rangle,$ then the first qutrit is  $c_0 |0\rangle + c_1|1\rangle - c_2|2\rangle. $ If the outcome is $|1\rangle$, then the first qutrit is $-c_0 |0\rangle + c_1|1\rangle + c_2|2\rangle, $ and if the outcome is $|2\rangle,$ then the first qutrit is $c_0 |0\rangle - c_1|1\rangle + c_2|2\rangle $. Moreover, the probability for each case is $\frac{1}{3}$. Therefore, this process changes the sign of some coefficient randomly. By repeating this process, we will get the gate $FLIP_2$.

\end{proof}
\end{lem}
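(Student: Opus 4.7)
The plan is to implement $FLIP_2$ by a probabilistic gadget that, conditioned on a single measurement outcome, realizes one of the three basis-state sign flips $FLIP_0, FLIP_1, FLIP_2$ with equal probability, and then to boost the success probability to one by repetition together with classical post-processing. All ingredients will come from the primitives (1)--(6) already in hand, so no new resource is needed.

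The first ingredient I would construct is an ancilla qutrit prepared in the state $|\psi\rangle = \frac{1}{\sqrt{3}}(|0\rangle - |1\rangle + |2\rangle)$. A natural route is to start from $\widetilde{|1\rangle}\otimes\widetilde{|2\rangle}$, which is available through two applications of the generalized Hadamard $h$; apply the projection to $span\{|0\rangle,|1\rangle\}$ on each factor using item (3), obtaining a product of two qubit-like states with relative phases $\omega$ and $\omega^2$; apply the $SUM$ gate; and finally project the control qutrit onto $span\{\widetilde{|0\rangle}\}$ using item (5). A direct expansion in the computational basis, combined with the identity $\omega + \omega^2 = -1$, yields exactly $|\psi\rangle$ on the target qutrit up to normalization.

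With $|\psi\rangle$ available, the gadget itself is the following. On an input $|\phi\rangle = c_0|0\rangle + c_1|1\rangle + c_2|2\rangle$, apply $SUM$ to $|\phi\rangle\otimes|\psi\rangle$ and measure the second qutrit in the computational basis. Using $SUM\,|i,j\rangle = |i,\,i+j \bmod 3\rangle$ and collecting terms by the label of the target qutrit, one sees immediately that each outcome occurs with probability $\tfrac{1}{3}$ and leaves the data qutrit, up to a harmless global phase, in one of the three states $FLIP_2|\phi\rangle$, $FLIP_0|\phi\rangle$, $FLIP_1|\phi\rangle$. So a single invocation realizes a uniformly random single-index sign flip.

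The final step is to convert this random flip into a deterministic $FLIP_2$. The three $FLIP_j$ generate an abelian group isomorphic to $(\mathbb{Z}/2)^3$, reducing to $(\mathbb{Z}/2)^2$ after quotienting by the global sign $-I$; moreover, they are cyclically interchanged by conjugation with the classical cyclic permutation gate on $\{|0\rangle,|1\rangle,|2\rangle\}$. Iterating the gadget, with a classical permutation inserted between iterations whenever the previous outcome was not the desired one, produces a random walk on this finite group, and a standard argument shows that with probability one we reach the group element $FLIP_2$ in a finite number of rounds. The only real subtlety is bookkeeping the accumulated sign pattern so as to know when to stop; the expected main obstacle, however, is not in this combinatorial loop but in verifying the ancilla construction of $|\psi\rangle$, since it uses every type of primitive on the list (Hadamard, $SUM$, standard-basis projection, and $\widetilde{|0\rangle}$-projection) and must be checked by explicit expansion.
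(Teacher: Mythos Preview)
Your proposal is correct and follows essentially the same approach as the paper: the same ancilla $|\psi\rangle=\tfrac{1}{\sqrt{3}}(|0\rangle-|1\rangle+|2\rangle)$ built by the same sequence (prepare $\widetilde{|1\rangle}\widetilde{|2\rangle}$, project each factor to $\mathrm{span}\{|0\rangle,|1\rangle\}$, apply $SUM$, project the control onto $\widetilde{|0\rangle}$), then the same $SUM$-and-measure gadget producing a uniformly random $FLIP_j$, followed by repetition. Your treatment of the repetition step is a bit more explicit than the paper's (you spell out the $(\mathbb{Z}/2)^2$ bookkeeping and the use of classical permutations), but this is only a presentational difference, not a different argument.
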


%With the qutrit gate obtained above, we can construct some important qubit gates.

\begin{lem} \label{triple z}
The $3$-qubit gate $\bigwedge^2(\sigma_z)$ which maps $|i,j,k\rangle$ to $(-1)^{ijk}|i,j,k\rangle$ can be constructed. In particular, $\bigwedge(\sigma_z)$ and $\sigma_z$ can be constructed since we have the ancilla $|1\rangle$.
\begin{proof}
Combining the gate $FLIP_2$ obtained in Lemma \ref{ZSWAPFLIP} and the $SUM$ gate, one can construct the following 2-qutrit and 3-qutrit gates.
\begin{equation}
|i,j\rangle \longmapsto
\begin{cases}
-|i,j\rangle  & i + j = 2 \ mod  \ 3 \\
|i,j\rangle   & else \\
\end{cases}
\end{equation}

\begin{equation}
|i,j,k\rangle \longmapsto
\begin{cases}
-|i,j,k\rangle  & i + j + k = 2 \ mod  \ 3 \\
|i,j,k\rangle   & else \\
\end{cases}
\end{equation}

When applying them to the state $|i,j\rangle$ (or $|i,j,k\rangle$), we can describe the above two gates as \lq\lq flip the sign if $i + j = 2 \ mod \ 3$"(or \lq\lq flip the sign if $i+j+k = 2  \ mod \ 3$").

One can check applying the following four gates to a $3$-qubit state $|i,j,k\rangle$ successively gives rise to $\bigwedge^2(\sigma_z).$

 \lq\lq flip the sign if $i+j+k = 2 \ mod \ 3$",

 \lq\lq flip the sign if $i+j = 2 \ mod \ 3$",

 \lq\lq flip the sign if $i+k = 2 \ mod \ 3$",

 \lq\lq flip the sign if $j+k = 2 \ mod \ 3$",

 Note that here $i$, $j$, $k$ are either $0$ or $1$.
\end{proof}
\end{lem}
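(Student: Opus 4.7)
The plan is to build $\bigwedge^2(\sigma_z)$ by assembling several diagonal multi-qutrit sign-flip gates from the ingredients provided by Lemma \ref{ZSWAPFLIP} (namely $FLIP_2$) together with the $SUM$ gate. The key observation is that on the qubit subspace $\{|0\rangle,|1\rangle\}^{\otimes 3}$, the Toffoli-$Z$ action $(-1)^{ijk}$ can be realized as a product of simpler ``sum $\equiv 2 \pmod 3$'' sign indicators, which are natural to engineer in the ambient qutrit space.

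The first step is to construct, for each nonempty subset $S \subseteq \{1,2,3\}$ of qutrit positions, an auxiliary gate $G_S$ which multiplies a computational basis state $|i_1,i_2,i_3\rangle$ by $-1$ precisely when $\sum_{j \in S} i_j \equiv 2 \pmod 3$. To build $G_{\{1,2\}}$, apply $SUM$ to take $|i,j\rangle \mapsto |i,\, i+j \bmod 3\rangle$, apply $FLIP_2$ to the second qutrit (which attaches a sign iff the second register is $|2\rangle$), and then undo the $SUM$ with $SUM^{-1} = SUM^{2}$. A similar conjugation, using a short cascade of $SUM$s to concentrate $i+j+k \bmod 3$ into one register, yields the 3-qutrit gate $G_{\{1,2,3\}}$.

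The second step is the combinatorial verification that, on any 3-qubit input $|i,j,k\rangle$ with $i,j,k \in \{0,1\}$, the composition $G_{\{1,2,3\}}\, G_{\{1,2\}}\, G_{\{1,3\}}\, G_{\{2,3\}}$ (well-defined in any order, since all factors are diagonal) multiplies the state by $(-1)^{ijk}$. A short case analysis suffices: if at most one bit equals $1$, no partial sum equals $2$, so no sign flip occurs; if exactly two bits equal $1$, the total sum equals $2$ (one flip) and exactly one pair sum equals $2$ (one more flip), netting $+1$; if all three bits equal $1$, the total sum is $3 \equiv 0 \pmod 3$ (no flip) while all three pair sums equal $2$ (three flips), netting $-1$. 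The in-particular clause giving $\bigwedge(\sigma_z)$ and $\sigma_z$ then follows immediately by clamping one or two control registers to the ancilla $|1\rangle$.

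The main obstacle is spotting the combinatorial identity that drives everything: namely that the pattern $(-1)^{ijk}$ on bits coincides with the product of four ``sum $\equiv 2 \pmod 3$'' sign factors indexed by the three pair-subsets and the full triple. Once this decomposition is in hand, the remainder is routine, since each $G_S$ is built by a transparent conjugation of $FLIP_2$ by $SUM$ and the final verification is case analysis on eight qubit inputs. There is no leakage to worry about, because each $G_S$ is diagonal in the computational basis, so qubit inputs stay in $\{|0\rangle,|1\rangle\}^{\otimes 3}$ throughout the composition.
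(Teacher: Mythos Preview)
Your proposal is correct and follows essentially the same approach as the paper: construct the ``flip the sign if the partial sum is $2 \pmod 3$'' gates from $FLIP_2$ and $SUM$, then compose the triple-sum flip with the three pair-sum flips and verify the result is $\bigwedge^2(\sigma_z)$ on qubit inputs. Your write-up is in fact slightly more explicit than the paper's, spelling out the $SUM$-conjugation that builds each $G_S$ and carrying out the eight-case check that the paper leaves to the reader.
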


Let $|\pm\rangle  = \frac{1}{\sqrt{2}}(|0\rangle \pm |1\rangle), $ which are the eigenstates of the qubit gate $\sigma_x.$ Note that the state $|+\rangle$ can be obtained by projecting the state $\widetilde{|0\rangle}$ to the space $span\{|0\rangle, |1\rangle\}.$

\begin{lem} \label{measure x}
Measurement of $\sigma_x$ can be constructed on a qubit.
\begin{proof}
For an arbitrary $1$-qubit state $\alpha|+\rangle + \beta|-\rangle,$ a measurement of $\sigma_x$ would result in the state $|+\rangle$ with probability $|\alpha|^2$, and in the state $|-\rangle$ with probability $|\beta|^2$.

We denote the measurement which projects a state to $span\{\widetilde{|0\rangle}\}$ and its complement $span\{\widetilde{|1\rangle},\widetilde{|2\rangle}\}$ by $\mathcal{M}_1$ and denote the measurement which projects to $span\{|0\rangle, |1\rangle\}$ and its complement $span\{|2\rangle\}$ by $\mathcal{M}_2$.

Note that $|-\rangle$ is orthogonal to $\widetilde{|0\rangle}$ while $|+\rangle$ is not. So if a state results in $\widetilde{|0\rangle}$ after $\mathcal{M}_1$, then the corresponding probability only depends on the $|+\rangle$ component. We explain this idea explicitly below to construct the measurement of $\sigma_x$.

Consider the following procedure.

\begin{equation}
\xymatrix{
\mathcal{O}: \alpha|+\rangle + \beta|-\rangle \ar[r]^-{\mathcal{M}_1} & output \ar[r]^-{Pr = \frac{2|\alpha|^2}{3}} \ar[dl]_-{Pr = 1 - \frac{2|\alpha|^2}{3}} & \widetilde{|0\rangle} \ar[r] & |+\rangle \\
\frac{\alpha|+\rangle + \beta|-\rangle - \frac{2\alpha}{\sqrt{6}}\widetilde{|0\rangle}}{\sqrt{1 - \frac{2|\alpha|^2}{3}}} \ar[r]^-{\mathcal{M}_2} & output \ar[r]_-{Pr = \frac{2|\alpha|^2}{9(1 - \frac{2|\alpha|^2}{3})}} \ar[dl]^-{Pr = 1 - \frac{8|\alpha|^2}{9}} & |2\rangle  \ar[r] & |+\rangle\\
\frac{\alpha|+\rangle + 3\beta|-\rangle}{\sqrt{9-8|\alpha|^2}} & & &\\
}
\end{equation}

So the procedure $\mathcal{O}$ consists of two measurements $\mathcal{M}_1$ and $\mathcal{M}_2$. If the outcome is $\widetilde{|0\rangle}$ after $\mathcal{M}_1$ or is $|2\rangle$ after $\mathcal{M}_2$, then we prepare the state $|+\rangle,$ namely we take the appearance of these two cases as the outcome $|+\rangle$. The probability for either of these two cases to happen is $\frac{8|\alpha|^2}{9}.$ Otherwise, we get the state $\frac{\alpha|+\rangle + 3\beta|-\rangle}{\sqrt{9-8|\alpha|^2}}$ with probability $1 - \frac{8|\alpha|^2}{9}$ and then we iterate the procedure $\mathcal{O}$  until the above two cases happen or the required accuracy is satisfied.

More explicitly, let the resulting state be $|\psi\rangle_n = \alpha_n|+\rangle + \beta_n|-\rangle$ after iterating the procedure $\mathcal{O}$ $n$ times with no $|+\rangle$ outcome and let $b_n$ be the probability to obtain $|\psi\rangle_n$ from $|\psi\rangle_{n-1}$ via the $n$-th procedure. Then we have the following equations.

\begin{equation}
\alpha_n = \frac{\alpha_{n-1}}{\sqrt{9-8|\alpha_{n-1}|^2}}, \ \beta_n = \frac{3\beta_{n-1}}{\sqrt{9-8|\alpha_{n-1}|^2}} , \
b_n = 1 - \frac{8|\alpha_{n-1}|^2}{9} = \frac{|\alpha_{n-1}|^2}{9|\alpha_{n}|^2}
\end{equation}

From the above equations ,we have
\begin{equation}
|\alpha_{n}|^2 = \frac{|\alpha|^2}{(1-|\alpha|^2)9^n + |\alpha|^2}, \ \prod\limits_{i=1}^n b_i =  |\beta|^2 + \frac{|\alpha|^2}{9^n}
\end{equation}

So the probability for iterating the procedure $n$ times with no $|+\rangle$ outcome is $b = \prod\limits_{i=1}^n b_i =  |\beta|^2 + \frac{|\alpha|^2}{9^n}, $ which is very close to $|\beta|^2$ when $n$ is large. Moreover, $|\langle + | \psi_n \rangle|^2 = |\alpha_{n}|^2$ which is close to zero, namely $\psi_n$ is almost equal to $|-\rangle$ up to a phase. Therefore, it's reasonable to treat the case that no $|+\rangle$ appears within $n$ procedures for some proper large $n$, as the outcome $|-\rangle$.

To sum up, after iterating the procedure $n$ times, we can get the state $|+\rangle$ with probability $1 - b = (1-\frac{1}{9^n})|\alpha|^2$ and $|-\rangle$ with probability $b = |\beta|^2 + \frac{|\alpha|^2}{9^n}.$ If we take $n$ large enough,  we get the measurement of $\sigma_x$ with required accuracy.
\end{proof}
\end{lem}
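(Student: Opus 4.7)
The plan is to reduce qutrit universality to qubit universality, by encoding a logical qubit in the two-dimensional subspace $\mathbb{C}^2=\mathrm{span}\{|0\rangle,|1\rangle\}\subset\mathbb{C}^3$, realizing a known universal qubit gate set inside this subspace using the listed qutrit resources, and then lifting back to qutrits via a $\mathbb{C}^2\otimes\mathbb{C}^2$ encoding of a qutrit (as the excerpt already indicates). The target universal qubit set is the Shi-Aharonov type set consisting of the Toffoli-like gate $\bigwedge^2(\sigma_z)$ together with Hadamard and a standard-basis measurement, which is well-known to be (approximately) universal for qubit quantum computation. Once this is in hand, standard results let us encode arbitrary qutrit circuits.

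The first step is to assemble the ``free'' toolkit listed in the six bullet points of the excerpt: the computational ancillas $|i\rangle$, the Fourier ancillas $\widetilde{|i\rangle}=h|i\rangle$, projections onto any basis state preserving coherence on its complement, projection onto $\mathrm{span}\{\widetilde{|0\rangle}\}$, and standard-basis measurement on a qubit. These are immediate from classical gates, the generalized Hadamard $h$, and Measurement \ref{Meas 3}, together with the fact that measurements of the qutrit computational basis factor through the projectors in Measurement \ref{Meas 3} conjugated by classical permutation gates. Combined with the $SUM$ gate, this toolkit is enough to run the rest of the argument.

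Next I would prove three lemmas that together realize the universal qubit set inside the qubit subspace. First, construct the qutrit phase gate $FLIP_2=\mathrm{diag}(1,1,-1)$ by preparing the ancilla $|\psi\rangle=\tfrac{1}{\sqrt{3}}(|0\rangle-|1\rangle+|2\rangle)$ from projections of $\widetilde{|1\rangle}\widetilde{|2\rangle}$ and a $SUM$-and-measure step, then using $SUM$ with $|\psi\rangle$ and a standard-basis measurement to randomly flip one of the three signs, iterating until the desired sign pattern is obtained (Lemma \ref{ZSWAPFLIP}). Second, combine $FLIP_2$ with $SUM$ to obtain the ``flip if $i+j+k\equiv2\bmod 3$'' gate on two or three qutrits, and verify that the composition of four such gates restricted to qubit inputs produces $\bigwedge^2(\sigma_z)$, from which $\bigwedge(\sigma_z)$ and $\sigma_z$ follow by using $|1\rangle$ ancillas (Lemma \ref{triple z}). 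Third, construct a measurement of $\sigma_x$ on the encoded qubit via an iterative procedure: project onto $\mathrm{span}\{\widetilde{|0\rangle}\}$ vs.\ its complement, and on failure project onto $\mathrm{span}\{|0\rangle,|1\rangle\}$ vs.\ $\mathrm{span}\{|2\rangle\}$, showing that on each round the probability of declaring $|+\rangle$ scales with $|\alpha|^2$ while the residual qubit state converges exponentially to $|-\rangle$ on no-output branches (Lemma \ref{measure x}).

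The main obstacle is Lemma \ref{measure x}: a naive projection leaks the qubit out of $\mathbb{C}^2$, so $\sigma_x$ measurement must be implemented stochastically, and one has to show the iterated amplification converges to the correct Born statistics while keeping the accept-state equal to $|+\rangle$ and the reject-state close to $|-\rangle$ in trace distance at a rate of $9^{-n}$. Once these three lemmas are established, Hadamard on the encoded qubit is the restriction of $h$ conjugated by appropriate classical gates and projective clean-up, so we have $\bigwedge^2(\sigma_z)$, Hadamard, and $\sigma_x$ measurement (equivalently, computational-basis measurement after Hadamard) acting on the qubit subspace. Invoking the Shi/Aharonov universality result for $\{\text{Toffoli},\text{Hadamard}\}$ then yields approximate universality for qubit quantum computation, and the two-qubit-encodes-a-qutrit remark at the start of the proof upgrades this to universality for the standard qutrit circuit model, completing Theorem \ref{HSMA universal}.
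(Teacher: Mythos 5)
Your treatment of Lemma \ref{measure x} is essentially identical to the paper's own proof: the same two measurements $\mathcal{M}_1$ (projection onto $span\{\widetilde{|0\rangle}\}$ versus its complement, exploiting that $|-\rangle\perp\widetilde{|0\rangle}$ while $|+\rangle$ is not) and $\mathcal{M}_2$ (projection onto $span\{|0\rangle,|1\rangle\}$ versus $span\{|2\rangle\}$), the same stochastic declare-$|+\rangle$-or-iterate rule, and the same conclusion that the accept probability converges to the Born value $|\alpha|^2$ while the rejected state converges to $|-\rangle$ at rate $9^{-n}$. The only difference is that you leave the routine amplitude recursion (in the paper, $\alpha_n=\alpha_{n-1}/\sqrt{9-8|\alpha_{n-1}|^2}$, $\beta_n=3\beta_{n-1}/\sqrt{9-8|\alpha_{n-1}|^2}$, giving per-round accept probability $\tfrac{8|\alpha_{n-1}|^2}{9}$) stated rather than verified, which is exactly the computation the paper supplies.
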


\begin{lem}  \label{Toffoli universal} \cite{Mochon}
The following set of qubit operations are universal for quantum computation:

1). Create the state $|\pm\rangle = \frac{1}{\sqrt{2}}(|0\rangle \pm |1\rangle), |0\rangle$ and $|1\rangle$

2). Measure $\sigma_z$.

3). Measure $\sigma_x$.

4). The Toffoli gate $T = \bigwedge^2(\sigma_x).$
\end{lem}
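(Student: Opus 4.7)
My plan is to reduce the statement to the well-known universality of the gate set $\{H,\text{Toffoli}\}$ due to Shi and Aharonov: since the Toffoli gate is already listed, it suffices to synthesize an effective Hadamard gate from the remaining resources. Since the lemma cites \cite{Mochon}, I would essentially follow Mochon's measurement-based simulation and fill in the gadgets. As warm-up, from the Toffoli gate together with the ancillas $|0\rangle$ and $|1\rangle$ one obtains CNOT (fix one control to $|1\rangle$), the single-qubit Pauli $\sigma_x$ (fix both controls to $|1\rangle$), and hence all classical reversible functions on any number of qubits; combined with $\sigma_z$-measurements this already gives all of classical computation.

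The key step is to implement $H$ via one-bit gate teleportation. Concretely, given an input $|\psi\rangle = \alpha|0\rangle + \beta|1\rangle$, I would prepare a $|+\rangle$ ancilla, couple the two qubits with a CNOT, and then measure one of them in the $\sigma_x$ basis and the other in the $\sigma_z$ basis. A direct calculation on the four measurement branches shows that the remaining qubit carries $H|\psi\rangle$ up to a Pauli byproduct determined by the two classical outcomes. The $\sigma_x$ correction is directly available from the Toffoli. The $\sigma_z$ correction is not in the gate set, so I would either obtain it by a second application of the same gadget, choosing its byproduct (conditioned on the classical record) to compose with the first into the identity, or postpone it by commuting it through subsequent Clifford gates. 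In the spirit of Lemma \ref{measure x}, iterating this repeat-until-success protocol yields $H$ to any prescribed accuracy with a failure probability that decays geometrically in the number of attempts.

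Once $H$ and the Toffoli are both available, Shi's theorem completes the argument: their products densely generate a universal set for $n$-qubit quantum computation, so polynomial-depth circuits built from the resources listed in the lemma can simulate any polynomial-size quantum circuit to within any desired error. I expect the hardest step to be the Hadamard construction, precisely because the teleportation gadget produces a $\sigma_z$ byproduct that is not generated by classical reversible gates on the computational basis. The technical work is therefore in the classical feedback and in the repeat-until-success iteration needed to eliminate that byproduct with high probability, rather than in the final appeal to Shi's theorem.
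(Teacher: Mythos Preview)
The paper does not actually prove this lemma; it simply writes ``For a proof of this lemma, see \cite{Mochon}.'' So there is no paper-side argument to compare against, and the question is just whether your sketch stands on its own.

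Your high-level plan (manufacture $H$, then invoke the Shi/Aharonov universality of $\{H,\text{Toffoli}\}$) is fine, but the Hadamard gadget as written does not work. First, you prepare \emph{two} qubits ($|\psi\rangle$ and the $|+\rangle$ ancilla), apply one CNOT, and then ``measure one of them in the $\sigma_x$ basis and the other in the $\sigma_z$ basis''; after two single-qubit measurements there is no ``remaining qubit'' to carry $H|\psi\rangle$. Second, even if you keep one qubit unmeasured, a single CNOT together with $X/Z$ preparations and $X/Z$ measurements only produces Pauli byproducts (identity, $X$, or $Z$) on the surviving qubit, never $H$: for instance $|\psi\rangle|0\rangle\xrightarrow{\text{CNOT}}\alpha|00\rangle+\beta|11\rangle$ followed by an $X$-measurement on either qubit yields $|\psi\rangle$ or $Z|\psi\rangle$, and the $|+\rangle$-ancilla variants give $|\psi\rangle$ or $X|\psi\rangle$. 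The $H$-teleportation you have in mind uses $CZ$, not CNOT.

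The missing idea is phase kickback. With a $|-\rangle$ ancilla as the \emph{target} of CNOT (resp.\ Toffoli), the ancilla is an eigenvector of $X$ and the phase is kicked back to the control(s), so you obtain $Z$ (resp.\ $CZ$, $CCZ$) on the data qubits for free, leaving the $|-\rangle$ ancilla intact. In particular your worry that ``the $\sigma_z$ correction is not in the gate set'' is unfounded: $Z$ is available. Once you have $CZ$, the standard one-bit gadget
\[
|\psi\rangle\,|+\rangle \;\xrightarrow{\;CZ\;}\; \alpha|0\rangle|+\rangle+\beta|1\rangle|-\rangle \;\xrightarrow{\text{measure qubit 1 in }X}\; H|\psi\rangle \text{ or } XH|\psi\rangle
\]
gives $H$ on the ancilla with only an $X$ byproduct, which you can correct deterministically. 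No repeat-until-success is needed. With $H$ in hand, Shi's theorem finishes the argument as you intended.
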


For a proof of this lemma, see \cite{Mochon}.

\begin{lem} \label{Double Z}  \cite{Kitaev2}
The following set of qubit operations are universal for quantum computation:

1). Create the state $|+\rangle = \frac{1}{\sqrt{2}}(|0\rangle + |1\rangle).$

2). Measure $\sigma_z$.

3). Measure $\sigma_x$.

4). The gate $\bigwedge^2(\sigma_z).$
\end{lem}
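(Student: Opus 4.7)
The plan is to reduce Lemma~\ref{Double Z} to Lemma~\ref{Toffoli universal} by extracting the Toffoli gate and the missing state preparations from the four primitives listed. The guiding algebraic identity is
$$ \bigwedge\!{}^2(\sigma_x) \;=\; (I\otimes I\otimes H)\,\bigwedge\!{}^2(\sigma_z)\,(I\otimes I\otimes H), $$
since $H\sigma_z H = \sigma_x$. Thus if I can implement a Hadamard gate $H$ (even up to a classically trackable Pauli byproduct), the Toffoli gate follows, and the remaining hypotheses of Lemma~\ref{Toffoli universal} can be produced separately.

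First I would collect the easy ingredients. Measuring $|+\rangle$ in the $\sigma_z$ basis produces $|0\rangle$ or $|1\rangle$ with equal probability, giving the computational basis states. Fixing two inputs of $\bigwedge^2(\sigma_z)$ to $|1\rangle$ collapses the gate to the single-qubit $\sigma_z$, and fixing one input produces the controlled-$\sigma_z$ gate $\bigwedge(\sigma_z)$. Applying $\sigma_z$ to $|+\rangle$ yields $|-\rangle$. Together with the given $\sigma_x$ and $\sigma_z$ measurements, this supplies every item of Lemma~\ref{Toffoli universal} except the Toffoli gate itself.

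The heart of the argument is a gate-teleportation construction of $H$. For input $|\psi\rangle = a|0\rangle + b|1\rangle$, prepare an ancilla in $|+\rangle$, apply $\bigwedge(\sigma_z)$ across the two qubits, and measure the input qubit in the $\sigma_x$ basis. A direct calculation shows that after the $\bigwedge(\sigma_z)$ the joint state is $a|0+\rangle + b|1-\rangle$, which when rewritten in the $\sigma_x$ basis of the first qubit equals
$$ \tfrac{1}{\sqrt 2}|+\rangle\bigl(a|+\rangle + b|-\rangle\bigr) \;+\; \tfrac{1}{\sqrt 2}|-\rangle\bigl(a|+\rangle - b|-\rangle\bigr). $$
The $+$ outcome leaves the ancilla in $H|\psi\rangle$ exactly, while the $-$ outcome leaves it in $H\sigma_z|\psi\rangle = \sigma_x H|\psi\rangle$. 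Sandwiching $\bigwedge^2(\sigma_z)$ between two such teleportation layers on the third qubit then produces the Toffoli gate, up to Pauli byproducts on the output wires.

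The main obstacle is handling the $\sigma_x$ byproduct coming from the $-$ outcome of the Hadamard teleportation, because a deterministic $\sigma_x$ correction is not directly among our primitives. I would resolve this by adopting the standard Pauli-frame tracking scheme of measurement-based quantum computation: record each byproduct classically as it occurs and commute it through the remaining Clifford-type circuitry. A $\sigma_z$ byproduct commutes with the diagonal gate $\bigwedge^2(\sigma_z)$ trivially, and a $\sigma_x$ byproduct can be pushed past $\bigwedge^2(\sigma_z)$ at the cost of extra $\sigma_z$ operators on the other two qubits, which are among our primitives. All residual byproducts are eventually absorbed by reinterpreting the final $\sigma_x$ and $\sigma_z$ measurement outcomes. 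Once the Toffoli gate is available deterministically in this sense, the primitives of Lemma~\ref{Toffoli universal} are all at hand, and universality follows.
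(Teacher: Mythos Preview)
Your overall architecture matches the paper's: derive $|0\rangle,|1\rangle,|-\rangle$, $\sigma_z$, and $\bigwedge(\sigma_z)$ from the primitives, realize $H$ by gate teleportation (with outcome $H$ or $\sigma_x H$), and conjugate $\bigwedge^2(\sigma_z)$ by $H$ on the target to obtain the Toffoli. The divergence is in how you dispose of the stray $\sigma_x$ byproduct.

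There is a concrete error in your commutation step. Pushing a $\sigma_x$ on one wire past $\bigwedge^2(\sigma_z)$ does \emph{not} produce $\sigma_z$ operators on the other two wires; it produces a controlled-$\sigma_z$ on them. Explicitly,
\[
\bigwedge\!{}^2(\sigma_z)\,\sigma_x^{(3)} \;=\; \sigma_x^{(3)}\,\bigwedge(\sigma_z)^{(1,2)}\,\bigwedge\!{}^2(\sigma_z),
\]
since $(-1)^{ij(k+1)} = (-1)^{ij}(-1)^{ijk}$. This matters because $\bigwedge^2(\sigma_z)$ is not Clifford, so the phrase ``standard Pauli-frame tracking'' does not apply as stated: byproducts can grow into genuine Clifford corrections, not just Paulis. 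The repair is that $\bigwedge(\sigma_z)$ is itself among your derived primitives, so this Clifford correction can be applied immediately and only a Pauli $\sigma_x$ survives in the frame; once you make this explicit your scheme goes through.

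The paper handles the byproduct differently. After the sandwich and the deterministic $\sigma_z$/$\bigwedge(\sigma_z)$ corrections, the net effect is $T$ or $(\sigma_x)_3 T$, each with probability $\tfrac12$. Rather than track a frame, the paper simply repeats the whole sandwich on failure: the state then performs a random walk through $|i,j,k\rangle$, $(\sigma_x)_3 T|i,j,k\rangle$, and $|i,j,k{+}1\rangle$, reaching $T|i,j,k\rangle$ with probability $1-2^{-n}$ within $2n{-}1$ rounds. This repeat-until-success argument is self-contained and avoids reasoning about propagating byproducts through later gates in a circuit; your frame-tracking route is cleaner for composing many Toffolis but requires the more careful commutation analysis above.
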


\begin{proof}
We prove this lemma by showing that the set of operations here can be used to implement all the operations in Lemma \ref{Toffoli universal}.

Since we can measure $\sigma_x$ and $\sigma_z,$ it's clear that $|-\rangle,\; |0\rangle$ and $|1\rangle$ all can be created from $|+\rangle.$ Thus, it suffices to show the Toffoli gate can be created.

For notational convenience, we also denote $|\widetilde{0}\rangle = |+\rangle,\; |\widetilde{1}\rangle = |-\rangle$ in the following proof. The readers shouldn't be confused with this notation and the one we used for a qutrit, since for the moment we only work in qubit space.

 With the ancilla $|1\rangle$ and the gate $\bigwedge^{2}(\sigma_z),$ we can get the gates $\bigwedge(\sigma_z)$ and $\sigma_z$.

 Next we do the following procedure which creates a \lq\lq gate" $H$ or $\sigma_x H$,

 $|i\rangle|\widetilde{0}\rangle \xlongrightarrow{\quad \bigwedge(\sigma_z) \quad} |i\rangle|\widetilde{i}\rangle \xlongrightarrow{\quad \textrm{Measure} (\sigma_x)_1 \quad}$
 $\begin{cases}
 |\widetilde{0}\rangle|\widetilde{i}\rangle & {\textrm{outcome} \quad \textrm{is} \quad 1}\\
 |\widetilde{1}\rangle(-1)^{i}|\widetilde{i}\rangle & {\textrm{outcome} \quad \textrm{is} \quad -1}\\
 \end{cases} $

 By measuring $(\sigma_x)_{1},$ we mean measuring $\sigma_x$ on the first qubit.

 One checks that both probabilities are $\frac{1}{2}.$ If the outcome is $-1$, then we continue to apply the gate $\sigma_z$ on the first qubit so that the state becomes $|\widetilde{0}\rangle(-1)^{i}|\widetilde{i}\rangle$.

 Notice that our ancilla starts from the second qubit while ends on the first qubit, $i.e.$ the working qubit and ancilla qubit are switched. But we show below that this is not a problem.

 Therefore, if the outcome is $1$, we produced the gate qubit Hadamard gate $H$, and otherwise we produced $\sigma_x H.$ We name this sequence of operations by $\mathcal{A}$.

 Now we produce the gate $T$.

 $|i,j,k\rangle|\widetilde{0}\rangle \xlongrightarrow{\quad \mathcal{A}_{3,4} \quad}
 \begin{cases}
 |i,j,\widetilde{0},\widetilde{k}\rangle & \textrm{outcome} \quad 1\\
 (-1)^{k}|i,j,\widetilde{0},\widetilde{k}\rangle & \textrm{outcome} \quad -1 \\
 \end{cases}
 \quad \xlongrightarrow{\quad \bigwedge^{2}(\sigma_z)_{1,2,4} \quad}
 \begin{cases}
 |i,j,\widetilde{0},\widetilde{ij+k}\rangle & \\
 (-1)^{k}|i,j,\widetilde{0},\widetilde{ij+k}\rangle &  \\
 \end{cases}
 \\
 \quad \xlongrightarrow{\quad \mathcal{A}_{4,3} \quad}
 \begin{cases}
 |i,j,ij+k,\widetilde{0}\rangle & \textrm{outcome} \quad (1,1)\\
 (-1)^{k}|i,j,ij+k,\widetilde{0}\rangle & \textrm{outcome} \quad (-1,1)\\
 |i,j,ij+k+1,\widetilde{0}\rangle & \textrm{outcome} \quad (1,-1)\\
 (-1)^{k}|i,j,ij+k+1,\widetilde{0}\rangle & \textrm{outcome} \quad (-1,-1)\\
 \end{cases}$

 In the diagram above, $\mathcal{A}_{3,4} $ means applying the operation $\mathcal{A}$ with the third qubit as working bit and the fourth qubit as ancilla. Each pair of outcome happens with probability $\frac{1}{4}.$

 If the outcome is (1,1) or (1,-1), we do nothing.

 If the outcome is (-1,1) or (-1,-1), we apply the gate $(\sigma_z)_3 \bigwedge(\sigma_z)_{1,2}$ to get the state $|i,j,ij+k\rangle$ or $-|i,j,ij+k+1,\widetilde{0}\rangle$. The overall phase is not important.

 Therefore, if the outcome is (1,1) or (-1,1), we produced the gate $T$. Otherwise we got the gate $(\sigma_x)_3 T$. Both probabilities are $\frac{1}{2}$.

 In the latter case, we repeat the procedure, then we either go back to the original state with probability $\frac{1}{2}$, or we go to the state $|i,j,k+1\rangle$ also with probability $\frac{1}{2}.$ Repeat the procedure again. It's easy to see that after doing this procedure at most 3 times. the probability to get the state $|i,j,ij+k\rangle$ is $\frac{1}{2} + \frac{1}{2^2}$.

 After at most $2n-1$ times, the probability to get $T$ is $\frac{1}{2} + \frac{1}{2^2} + \cdots + \frac{1}{2^{n}} = 1- \frac{1}{2^n}.$

 Therefore, after repeating enough times, we will eventually produce the gate $T$.
\end{proof}

By the lemmas above in this subsection, all the operations in Lemma \ref{Double Z} can be created from the operations given in Theorem \ref{HSMA universal} if we pick a qubit from the qutrit space. Thus Lemma \ref{Double Z} implies Theorem \ref{HSMA universal}.

\section{Universal adaptive anyonic computing models} \label{universal}

In this section we prove that the $U$-model, $V$-model, and $W$-model in Section \ref{UVWmodel} can be made universal provided measurement and ancilla are allowed besides braiding.

Recall that
 $U$ = $span\{ |GG\rangle, |AG\rangle, |GA\rangle\},$ $V$ = $span\{\frac{1}{\sqrt{2}}(|FC\rangle + |CF\rangle), \frac{1}{\sqrt{2}}(|FH\rangle + |CH\rangle), \frac{1}{\sqrt{2}}(|HF\rangle + |HC\rangle)\}$ and $W$ = $span\{\frac{1}{\sqrt{2}}(|FC\rangle - |CF\rangle), \frac{1}{\sqrt{2}}(|CH\rangle - |FH\rangle), \frac{1}{\sqrt{2}}(|HF\rangle - |HC\rangle)\}$.  The computational basis for the three models are denoted as $\{|0\rangle_x, |1\rangle_x, |2\rangle_x,\}$ corresponding to the pair of anyons above, $x = U,\; V,\; W$, where the subscript $x$ indicates which subspace we are referring to.

 Our main theorems are:

\begin{thm} \label{UV universal}
Braiding quantum gates and  Measurements \ref{Meas 1} and \ref{Meas 2} provide a universal gate set for the qutrit $U$-model and $V$-model.
\end{thm}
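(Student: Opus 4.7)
The plan is to verify the four hypotheses of Theorem~\ref{HSMA universal} using only braiding together with the measurements $\mathcal{M}_A$ and $\mathcal{M}_U$: namely, produce (a) the single-qutrit classical gates, (b) the generalized Hadamard $h$, (c) the two-qutrit SUM gate, and (d) Measurement~\ref{Meas 3}. I will handle the $U$-model first and reduce the $V$-model to it at the end.

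For (a), (b) and (c), I would exploit the explicit matrices of $\rho(D,G)$ computed in Appendix~B. On the six-dimensional irreducible summand $U\oplus V$ the image is a finite group isomorphic to $\Sigma(216\cdot 3)$, which, in its three-dimensional realisation on $W$, contains the qutrit Clifford group. The gadget I plan to use is measurement-induced: given a target Clifford $g$, find a braid word $B$ whose restriction $\Pi_U B\,\Pi_U$ is proportional to $g$ via the Appendix~B matrices, then apply $B$ followed by $\mathcal{M}_U$. On the $U$-outcome the target gate has been realized up to a normalising scalar; on the complementary outcome the state sits in $V$ and can be returned to $U$ by a correction braid and another $\mathcal{M}_U$. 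A quantitative lower bound on the per-round success probability gives a repeat-until-success implementation that terminates almost surely. The SUM gate is produced analogously by passing to $\B_8$ acting on $V_G^{DDDD}\otimes V_G^{DDDD}$, using the double projection $\mathcal{M}_U\otimes\mathcal{M}_U$ to suppress leakage into $V$ on either branch.

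The main obstacle, and where I expect to spend the bulk of the proof, is the coherent realization of Measurement~\ref{Meas 3}. This projection onto $|0\rangle_U$ versus $\mathrm{span}\{|1\rangle_U,|2\rangle_U\}$ must preserve all coherence in the two-dimensional complement. A naive use of $\mathcal{M}_A$ on the left pair of $D$ anyons distinguishes $|1\rangle_U = |AG\rangle$ from $\{|GG\rangle,|GA\rangle\}$, but a subsequent $\mathcal{M}_A$ on the right pair would collapse the desired $|GG\rangle$--$|GA\rangle$ superposition. To avoid this I plan to conjugate $\mathcal{M}_A$ by a braid $b$ chosen so that $b|0\rangle_U$ lies entirely in the $x=A$ eigenspace of $\mathcal{M}_A$ while $b|1\rangle_U$ and $b|2\rangle_U$ lie entirely in the $x\ne A$ eigenspace; existence of such a $b$ is an algebraic check on the six-dimensional image catalogued in Appendix~B. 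The composite $b^{-1}\!\circ\!\mathcal{M}_A\!\circ\! b$, followed on the non-$A$ outcome by $\mathcal{M}_U$ to re-enter the computational subspace, then realizes Measurement~\ref{Meas 3} exactly.

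For the $V$-model, irreducibility of $\rho(D,G)$ on $U\oplus V$ supplies a braid $b_0$ whose action mixes $V$ into $U$; applied to a $V$-state and followed by $\mathcal{M}_U$, the $U$-outcome yields a state proportional to the $U$-component of $b_0|\psi\rangle$, giving an isomorphism $V\to U$ by repeat-until-success. Conjugating every $U$-model gadget by $b_0$ transports the full universal gate set to the $V$-model. Invoking Theorem~\ref{HSMA universal} then concludes universality for both models.
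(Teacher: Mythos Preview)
Your overall plan—verify the four hypotheses of Theorem~\ref{HSMA universal}—matches the paper, and your repeat-until-success mechanism for producing the Hadamard gate via $\mathcal{M}_U$ is essentially the paper's construction using the specific braid $h' = (\sigma_3\sigma_2\sigma_3)^2(\sigma_1\sigma_2\sigma_1)(\sigma_3\sigma_2\sigma_3)^2$. However, you miss two simplifications and thereby overstate the difficulty.

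First, for the single-qutrit classical gates and the two-qutrit entangler, the paper exhibits explicit braid words that preserve $U$ (and $V$) \emph{exactly}: $p^2 = (\sigma_1\sigma_2\sigma_1)^2$ and $q^2 = (\sigma_3\sigma_2\sigma_3)^2$ are block-diagonal permutation matrices on $U\oplus V$, and the eight-strand word $CrlZ = s_1^{-1}s_2^2 s_1 s_3^{-1}s_2^2 s_3$ is diagonal on the full $81$-dimensional two-qutrit space and restricts to $\bigwedge(Z)$ on $U\otimes U$. No leakage correction via $\mathcal{M}_U$ is needed for any of these. Your generic $\Pi_U B\,\Pi_U \propto g$ scheme is not wrong in principle, but it leaves the existence of suitable $B$ and a nonzero success probability as unverified claims; the paper's explicit words dispose of both issues for free.

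Second, you manufacture an obstacle for Measurement~\ref{Meas 3} that is not there. A single application of $\mathcal{M}_A$ to the left pair already separates $|1\rangle_U = |AG\rangle$ from $\{|0\rangle_U,|2\rangle_U\} = \{|GG\rangle,|GA\rangle\}$ coherently; conjugating by one of the classical permutations from the previous paragraph turns this into the required $|0\rangle$-versus-$\{|1\rangle,|2\rangle\}$ projection. There is no second $\mathcal{M}_A$ to worry about, and the braid $b$ you propose to search for can simply be taken to be that permutation (note the $x=A$ sector of $V_G^{DDDD}$ is exactly $\mathrm{span}\{|AG\rangle\}$, so any $b$ meeting your condition already sends $|0\rangle_U$ to $|1\rangle_U$ up to phase).

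For the $V$-model your transfer idea is correct in spirit; the paper realises it concretely by extracting an explicit isometry $\gamma:U\to V$ as a by-product of the same $h'$-plus-$\mathcal{M}_U$ procedure used for the Hadamard.
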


\begin{thm}\label{W universal}

A universal gate set for the $W$-model can be constructed from braidings and Measurements \ref{Meas 1}, \ref{Meas 2} when the ancillary state \ref{Anc 1} is used.
\end{thm}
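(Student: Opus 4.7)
The plan is to reduce Theorem \ref{W universal} to Theorem \ref{HSMA universal}: it suffices to construct the $1$-qutrit classical gates, the generalized Hadamard gate $h$, the SUM gate, and Measurement \ref{Meas 3} inside the $W$-encoding using only braidings, the measurements $\M_A$ and $\M_U$, and the ancillary state $|H\rangle_A$. The strategy parallels the proof of Theorem \ref{UV universal} for the $U$- and $V$-models: first, extract as much of the $1$-qutrit gate set as possible from the braid action on $W$ tabulated in Appendix \ref{rep of braid}; second, complete the remaining $1$-qutrit gates and Measurement \ref{Meas 3} by combining braidings with the two projective measurements; third, build the entangling SUM gate across two qutrit branches, which is the step that forces the use of the ancilla.

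For the first two steps, I would read off the image of $\rho(D,G)$ restricted to the $3$-dimensional irreducible summand $W$ from Table \ref{irrep} and the explicit matrices of Appendix \ref{rep of braid}, and determine which classical permutations and diagonal phase gates are already present. Any missing $1$-qutrit gate, and in particular the Hadamard $h$, is to be produced by a measurement-assisted scheme: apply a braid that mixes $W$ with $U$, perform $\M_U$ to postselect onto $U$ (where a different finite group acts via braiding), apply a braid inside $U$ to realize the desired unitary on that sector, and then $\M_U$-postselect back to $W$; the composite Kraus operator realizes the target gate up to a byproduct inside the $W$-braid image. Measurement \ref{Meas 3} is constructed analogously: by conjugating $\M_A$ applied to a chosen pair of anyons with a braid that maps $|0\rangle_W$ into a total-charge-$A$ configuration, one obtains the orthogonal projection onto $\mathrm{span}\{|0\rangle_W\}$ versus its complement in the qutrit.

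The main obstacle is the SUM gate on $W\otimes W$. In the $U$- and $V$-models, an entangling $2$-qutrit gate is obtained by braiding anyons between adjacent $4$-anyon branches and projecting back with $\M_A$ and $\M_U$; the symmetric structure of $U$ and $V$ guarantees that the postselected operator has nontrivial off-diagonal entries in the computational basis. A direct computation with the $F$- and $R$-symbols of Appendix \ref{FRmatrices} shows that for the antisymmetric combinations defining $W$ the analogous postselected operators are either diagonal in $W\otimes W$ or project entirely out of it, so no braiding-and-measurement protocol alone produces entanglement. This is precisely the role of $|H\rangle_A$: fusing the ancilla into the circuit supplies the asymmetric overlap needed to produce a nondiagonal postselected operator.

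The hard step is therefore to exhibit a specific combination of ancilla injection, braids, and measurements whose postselected action on $W\otimes W$ equals SUM up to a known $1$-qutrit byproduct. Following the magic-state injection logic of \cite{Kitaev2, Mochon}, I would braid $|H\rangle_A$ against one branch, use $\M_A$ on a pair of resulting anyons to teleport information between the ancilla and branch, then apply a braid-measurement sequence across the two branches, and finally use $\M_U$ on the target branch to return to $W\otimes W$. Correctness reduces to an explicit linear-algebra calculation with the data of Appendices \ref{FRmatrices} and \ref{rep of braid}, verifying both that the success probability is bounded below by a positive constant and that every unfavorable outcome leaves a byproduct inside the already-constructed $1$-qutrit gate set, so that iteration and post-correction realize SUM exactly. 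Once SUM is available, Theorem \ref{HSMA universal} yields universality of the $W$-model.
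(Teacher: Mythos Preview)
Your reduction to Theorem \ref{HSMA universal} is correct, but two of your three steps rest on a misreading of the braid representation on $V_G^{DDDD}$.

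First, the generalized Hadamard on $W$ does not require any measurement-assisted scheme: the explicit matrices in Appendix \ref{rep of braid} give $q^2pq^2 = \frac{1}{\sqrt{3}i}h$ on $W$ directly from braiding, so all of $h$, the classical $1$-qutrit gates, the $Z$-gate, and the phase gate are already available by braids alone. More seriously, your proposed mechanism for the remaining gates --- ``apply a braid that mixes $W$ with $U$, perform $\M_U$ to postselect onto $U$'' --- cannot work in principle: $W$ and $U\oplus V$ are \emph{distinct irreducible summands} of the $\B_4$-representation $\rho(D,G)$, so no braid of the four computational anyons ever moves a $W$-state into $U$. This representation-theoretic barrier is precisely why the ancilla is required, and it is also why your proposed construction of Measurement \ref{Meas 3} (conjugating $\M_A$ by a braid on the single branch) will not succeed: the $W$-basis vectors are antisymmetric combinations $|FC\rangle - |CF\rangle$, etc., none of which have an $A$-labelled edge.

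The paper's use of the ancilla is architecturally different from your magic-state-injection picture. Rather than inject $|H\rangle_A$ into a \emph{two}-qutrit circuit to manufacture entanglement, the paper braids the four ancilla anyons against the four anyons of a \emph{single} computational branch (an $8$-anyon braid $R$), obtaining
\[
R\bigl(|H\rangle_A|i\rangle_W\bigr)=\tfrac{1}{2}\bigl(-|H\rangle_A|i\rangle_W+|H\rangle_B|i\rangle_V-\sqrt{2}\,|H\rangle_B|{-i}\rangle_U\bigr).
\]
Applying $\M_A$ to the ancilla branch and $\M_U$ to the computational branch then yields, with bounded-below success probability, a unitary isomorphism $\beta:|H\rangle_A\otimes W\to|H\rangle_B\otimes U$ (and similarly $\beta^{-1}$). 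Once $\beta,\beta^{-1}$ exist, the Controlled-$Z$ gate and Measurement \ref{Meas 3} on $W$ are obtained by conjugating the corresponding operations already established for the $U$-model. So the ancilla's role is to enlarge the representation space enough that braiding \emph{can} connect $W$ to $U$; the two-qutrit entangling gate then comes for free from Theorem \ref{UV universal}, not from a separate injection protocol.
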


The proof of Theorem \ref{UV universal} is given in the next subsection and the proof of Theorem \ref{W universal} in Section \ref{proofWuniversal}.

\subsection{Universality for $U$- and $V$-models}

$U \oplus V$ is a $6$-dim irreducible representation of $\B_4$. Under the basis $span\{ |0\rangle_U, |1\rangle_U, |2\rangle_U, |0\rangle_V,\\ |1\rangle_V, |2\rangle_V\},$ the generators $\sigma_i \, 's$ have the following matrices:

$\sigma_1 =
\begin{pmatrix}
\omega^2 & 0 & 0 & 0 & 0 & 0 \\
 0       & 1 & 0 & 0 & 0 & 0 \\
 0       & 0 & \omega^2 & 0 & 0 & 0 \\
 0       & 0 & 0 & 1 & 0 & 0 \\
 0       & 0 & 0 & 0 & 1 & 0 \\
 0       & 0 & 0 & 0 & 0 & \omega \\
\end{pmatrix}$

$\sigma_2 = \frac{1}{3}
\begin{pmatrix}
1                &       \omega &  \omega         &  \sqrt{2}\omega^2    &   \sqrt{2}       &  \sqrt{2}  \\
\omega           &          1   &  \omega         &  \sqrt{2}            &   \sqrt{2}       &   \sqrt{2}\omega^2\\
\omega           &      \omega  &   1             &   \sqrt{2}           & \sqrt{2}\omega^2 &  \sqrt{2}\\
\sqrt{2}\omega^2 &      \sqrt{2}&   \sqrt{2}      &  -\omega             &  -\omega^2       &  -\omega^2  \\
 \sqrt{2}        &     \sqrt{2} &\sqrt{2}\omega^2 &   -\omega^2          &  -\omega         &  -\omega^2 \\
 \sqrt{2}        &\sqrt{2}\omega^2 & \sqrt{2}     &        -\omega^2     &  -\omega^2       & -\omega\\
\end{pmatrix}
$

$\sigma_3 =
\begin{pmatrix}
\omega^2 & 0 & 0 & 0 & 0 & 0 \\
 0       & \omega^2 & 0 & 0 & 0 & 0 \\
 0       & 0 & 1 & 0 & 0 & 0 \\
 0       & 0 & 0 & 1 & 0 & 0 \\
 0       & 0 & 0 & 0 & \omega & 0 \\
 0       & 0 & 0 & 0 & 0 & 1 \\
\end{pmatrix}$

Let $p = \sigma_1\sigma_2\sigma_1$ and $q = \sigma_3\sigma_2\sigma_3$. Then

$p^2 =
\begin{pmatrix}
0 & 0 & 1 & 0 & 0 & 0 \\
0 & 1 & 0 & 0 & 0 & 0 \\
1 & 0 & 0 & 0 & 0 & 0 \\
0 & 0 & 0 & 0 & 1 & 0 \\
0 & 0 & 0 & 1 & 0 & 0 \\
0 & 0 & 0 & 0 & 0 & 1 \\
\end{pmatrix}$

$q^2 =
\begin{pmatrix}
0 & 1 & 0 & 0 & 0 & 0 \\
1 & 0 & 0 & 0 & 0 & 0 \\
0 & 0 & 1 & 0 & 0 & 0 \\
0 & 0 & 0 & 0 & 0 & 1 \\
0 & 0 & 0 & 0 & 1 & 0 \\
0 & 0 & 0 & 1 & 0 & 0 \\
\end{pmatrix}$

$p^2q^2p^2 =
\begin{pmatrix}
1 & 0 & 0 & 0 & 0 & 0 \\
0 & 0 & 1 & 0 & 0 & 0 \\
0 & 1 & 0 & 0 & 0 & 0 \\
0 & 0 & 0 & 1 & 0 & 0 \\
0 & 0 & 0 & 0 & 0 & 1 \\
0 & 0 & 0 & 0 & 1 & 0 \\
\end{pmatrix}$

Therefore, when restricted to the subspace $U$ or $V$, $p^2$ and $q^2$ generate all the classical gates on $1$ qutrit and $p^2q^2p^2$ is equal to $h^2$, where $h$ is the generalized Hadamard gate defined in Section \ref{Hadamardgate}.

Let $h' = q^2 p q^2$. Then

$h' = \frac{1}{\sqrt{3}}
\begin{pmatrix}
h               &  \sqrt{2}h^{-1} \\
\sqrt{2}h^{-1}  &  -h              \\
\end{pmatrix}
\qquad
h'^{-1} = \frac{1}{\sqrt{3}}
\begin{pmatrix}
h^{-1}               &  \sqrt{2}h \\
\sqrt{2}h            &   -h^{-1}              \\
\end{pmatrix}
$

Define a unitary transformation $\gamma: U \longrightarrow V, $ $\gamma |j\rangle_U = |j\rangle_V, \;j = 0,\; 1, \; 2.$

\begin{lem}\label{hadamard}
By alternating use of $h'$ (or $h'^{-1}$) and Measurement \ref{Meas 2}, one can eventually obtain the generalized Hadamard gate on both $U$ and $V$, as well as the transformations $\gamma$ and $\gamma^{-1}.$ Moreover, the probability to successfully construct these transformations approaches to 1 exponentially fast in the number of measurements and the gate $h'$.
\begin{proof}
Let the generalized Hadamard gate $h$ act on both the spaces $U$ and $V$, then we have
$$h'|j\rangle_U  = \frac{1}{\sqrt{3}}(h|j\rangle_U + \sqrt{2}h^{-1}|j\rangle_V)$$
and
$$h'|j\rangle_V  = \frac{1}{\sqrt{3}}(\sqrt{2}h^{-1}|j\rangle_U - h|j\rangle_V)$$

We first construct the Hadamard gate on $U$ first.

Denote the operation of Measurement \ref{Meas 2} by $\mathcal{M}$. Consider the following procedures:

\begin{equation}
\xymatrix{
\mathcal{P}:|j\rangle_U \ar[r]^-{h'} & \frac{1}{\sqrt{3}}(h|j\rangle_U + \sqrt{2}h^{-1}|j\rangle_V)\ar[r]^-{\mathcal{M}} & output \ar[r]^-{Pr = \frac{1}{3}} \ar[dll]_-{Pr = \frac{2}{3}} & h|j\rangle_U & \\
h^{-1}|j\rangle_V \ar[r]^-{h'} & \frac{1}{\sqrt{3}}(\sqrt{2}h^2|j\rangle_U - |j\rangle_V) \ar[r]^-{\mathcal{M}} & output \ar[r]^-{Pr = \frac{2}{3}} \ar[dll]^-{Pr = \frac{1}{3}} & h^2|j\rangle_U \ar[d]_-{p^2q^2p^2} & \\
|j\rangle_V  & & & |j\rangle_U& \\
}
\end{equation}

\begin{equation}
\xymatrix{
\mathcal{Q}:|j\rangle_V \ar[r]^-{h'^{-1}} & \frac{1}{\sqrt{3}}(\sqrt{2}h|j\rangle_U - h^{-1}|j\rangle_V)\ar[r]^-{\mathcal{M}} & output \ar[r]^-{Pr = \frac{2}{3}} \ar[dll]_-{Pr = \frac{1}{3}} & h|j\rangle_U & \\
-h^{-1}|j\rangle_V \ar[r]^-{h'^{-1}} & \frac{1}{\sqrt{3}}(-\sqrt{2}|j\rangle_U +  h^2|j\rangle_V) \ar[r]^-{\mathcal{M}} & output \ar[r]^-{Pr = \frac{2}{3}} \ar[dll]^-{Pr = \frac{1}{3}} & |j\rangle_U & \\
h^2|j\rangle_V \ar[r]^-{p^2q^2p^2}  & |j\rangle_V& & &\\
}
\end{equation}

Thus, if we run the procedure $\mathcal{P}$ on the space $U$, we have a probability $\frac{1}{3}$ to obtain the Hadamard $h$, $\frac{4}{9}$ to obtain the identity and $\frac{2}{9}$ to obtain the transformation $\gamma.$ If we obtained $h$, then we are done. If we constructed the identity gate, then we run the procedure $\mathcal{P}$ again. If we got the transformation $\gamma,$ then we apply the procedure $\mathcal{Q}$ to the resulting state. After running $\mathcal{Q},$ we have a probability $\frac{2}{3}$ to obtain $h$, $\frac{2}{9}$ to go back to the original state $|j\rangle_U,$ and $\frac{1}{9}$ to get the state $|j\rangle_V.$ Repeat the procedures $\mathcal{P}$ and$/$or $\mathcal{Q}$, according to which space the resulting state after each procedure is in, until we get the Hadamard gate $h$. And it's not hard to show that the probability to construct $h$ within $n$ procedures is $1-\frac{2}{3}\cdot(\frac{5}{9})^{n-1}$, which approaches to $1$ exponentially fast.

The Hadamard gate on $V$ can be constructed in the same way.

To construct the transformation $\gamma,$ see the following procedure:

\begin{equation}
\xymatrix{
\mathcal{R}: |j\rangle_U \ar[r]^-{h'} & \frac{1}{\sqrt{3}}(h|j\rangle_U + \sqrt{2}h^{-1}|j\rangle_V)\ar[r]^-{\mathcal{M}} & output \ar[d]^-{Pr = \frac{1}{3}} \ar[dll]_-{Pr = \frac{2}{3}} \\
h^{-1}|j\rangle_V \ar[r]_-{h'^{-1}} &  \frac{1}{3}(\sqrt{2}|j\rangle_U - h^2|j\rangle_V) \ar[d]^-{\mathcal{M}}  &  h|j\rangle_U \ar[d]^-{h'^{-1}} \\
h^2|j\rangle_V \ar[d]^-{p^2q^2p^2} & output \ar[d]^-{Pr = \frac{2}{3}} \ar[l]_-{Pr = \frac{1}{3}}& \frac{1}{\sqrt{3}}(|j\rangle_U + \sqrt{2}h^2|j\rangle_V) \ar[dd]_-{\mathcal{M}}\\
|j\rangle_V &  |j\rangle_U &   \\
&|j\rangle_U & output \ar[l]^-{Pr = \frac{1}{3}} \ar[d]^-{ Pr = \frac{2}{3}} \\
 & |j\rangle_V  & h^2|j\rangle_V \ar[l]^-{p^2q^2p^2} \\
}
\end{equation}

So the procedure $\mathcal{R}$ has a probability of $\frac{4}{9}$ to construct the transformation $\gamma,$ and a probability of $\frac{5}{9}$ to obtain the identity. By repeating it, one can show the probability to construct $\gamma$ within $n$ times is $1 - (\frac{5}{9})^n.$ Therefore, one can obtain $\gamma$ exponentially fast.

Similarly, one can construct $\gamma^{-1}.$
\end{proof}
\end{lem}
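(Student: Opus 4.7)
The plan is to design a measurement-assisted random walk on $U \oplus V$ whose absorbing states realize the desired gates. The key structural facts are: (i) $h'$ sends $|j\rangle_U$ to $\frac{1}{\sqrt{3}}(h|j\rangle_U + \sqrt{2}\,h^{-1}|j\rangle_V)$, with an analogous off-diagonal block action on $|j\rangle_V$; (ii) the classical gate $p^2q^2p^2$ already constructed from braids realizes $h^2$, which is an involution on each subspace since $h^4=I$ for the qutrit Hadamard; and (iii) Measurement \ref{Meas 2} with $S=U$ gives precisely the $U$-versus-$V$ projection needed to collapse those superpositions. Thus after one application of $h'$ followed by $\M_U$ we either already obtain $h|j\rangle_U$ (success, probability $\tfrac{1}{3}$) or land on $h^{-1}|j\rangle_V$ (probability $\tfrac{2}{3}$); in the failure branch, a second application of $h'$ followed by $\M_U$ yields either $h^2|j\rangle_U$ (which, corrected by $p^2q^2p^2$, returns us to $|j\rangle_U$) or $|j\rangle_V$ (which hands control to the mirror procedure). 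This defines procedure $\mathcal{P}$.

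Running $\mathcal{P}$ has three possible outcomes per invocation: success with probability $\tfrac{1}{3}$, return to $|j\rangle_U$ with probability $\tfrac{4}{9}$, and migration to $|j\rangle_V$ with probability $\tfrac{2}{9}$. Building the symmetric procedure $\mathcal{Q}$ from $h'^{-1}$ performs the analogous task starting in $V$, again yielding either success, restart in $V$, or migration to $U$. Alternating $\mathcal{P}$ and $\mathcal{Q}$ according to which subspace currently holds the state, the combined per-step failure probability is bounded by $\tfrac{5}{9}$, which delivers the exponential convergence $1 - \tfrac{2}{3}(\tfrac{5}{9})^{n-1}$ claimed after $n$ iterations. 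The Hadamard on $V$ is obtained by running the same scheme with the roles of $U$ and $V$ swapped, using the fact that the lower block of $h'$ has an analogous structure.

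For $\gamma: U \to V$, the procedure $\mathcal{R}$ is a two-step sandwich: apply $h'$, measure; on the $V$-branch apply $h'^{-1}$ and measure; on the $U$-branch apply $h'^{-1}$ and measure. In the branches that finally land in $V$, the two opposite factors $h$ and $h^{-1}$ compose to the identity on the $|j\rangle$ label (up to a correctable $h^2$ removed by $p^2q^2p^2$), leaving exactly $\gamma$; in the branches that land back in $U$, the composition is either the identity or $h^2$, both correctable by classical gates. Counting amplitudes gives success probability $\tfrac{4}{9}$ per iteration of $\mathcal{R}$ and hence total success probability $1-(\tfrac{5}{9})^n$ after $n$ iterations; reversing the roles of $U$ and $V$ gives $\gamma^{-1}$. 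The principal obstacle throughout is amplitude bookkeeping: one must check that every non-success branch factors cleanly as $|j\rangle_U$, $|j\rangle_V$, $h^2|j\rangle_U$, or $h^2|j\rangle_V$ with no residual phase that would obstruct a clean restart. This verification uses only the specific block form of $h'$ and the involutivity $h^4=I$, but requires the direct amplitude computations recorded in the diagrams for $\mathcal{P}$, $\mathcal{Q}$, and $\mathcal{R}$.
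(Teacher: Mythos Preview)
Your proposal is correct and follows essentially the same approach as the paper: the same block decomposition of $h'$, the same procedures $\mathcal{P}$, $\mathcal{Q}$, $\mathcal{R}$ built from $h'$, $h'^{-1}$, Measurement~\ref{Meas 2}, and the correction $p^2q^2p^2=h^2$, and the same probability analysis leading to $1-\tfrac{2}{3}(\tfrac{5}{9})^{n-1}$ and $1-(\tfrac{5}{9})^n$. Two small imprecisions worth tightening: the per-step failure probability is $\tfrac{2}{3}$ on the very first invocation of $\mathcal{P}$ (the $\tfrac{5}{9}$ is the contraction ratio thereafter, which is why the formula carries the leading $\tfrac{2}{3}$), and in $\mathcal{R}$ both $U$-landing branches already give $|j\rangle_U$ exactly, so no $h^2$ correction is needed there (only the $V$-landing branches require $p^2q^2p^2$).
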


The following lemma shows Measurement \ref{Meas 3} can be constructed in both $U$ and $V$.

\begin{lem}
Using Measurement \ref{Meas 1}, \ref{Meas 2} and braiding, one can perform Measurement \ref{Meas 3} in both the space $U$ and $V$.
\begin{proof}
Note that we used the notation $|0\rangle_U = |GG\rangle, \; |1\rangle_U = |AG\rangle, \; \\ |2\rangle_U = |GA\rangle$. Given a state $|\psi\rangle = a |GG\rangle + b|AG\rangle + c|GA\rangle$ in $U$, we apply Measurement \ref{Meas 1} to the left half of the state, $i.e$, we check whether or not the first pair of $D$ anyons in the $1$-qudit splitting tree has total trivial charge. This is essentially the projection to $span\{|AG\rangle\}$ and its orthogonal complement in $U$, namely the projection to $span\{|1\rangle_U\}$ and $span\{|0\rangle_U, |2\rangle_U\}$. Since we have all the $1$-qutrit classical gates on $U$, it's clear that Measurement \ref{Meas 3} in $U$ can be constructed.

Measurement \ref{Meas 3} in $V$ follows from Lemma \ref{hadamard} that one can construct the transformation $\gamma,\; \gamma^{-1}$ to go back and forth between $U$ and $V$.
\end{proof}
\end{lem}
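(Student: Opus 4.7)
The plan is to construct Measurement~\ref{Meas 3} first inside $U$ using only $\M_A$ and the classical permutation gates obtained from braiding, then transport it to $V$ via the basis-change unitary $\gamma$ produced in Lemma~\ref{hadamard}. First I observe that the computational basis $|0\rangle_U = |GG\rangle$, $|1\rangle_U = |AG\rangle$, $|2\rangle_U = |GA\rangle$ is read off the internal labels $(x,y)$ of the fusion tree, where $x$ is literally the total charge of the leftmost pair of $D$ anyons. Hence applying $\M_A$ to just that pair performs the orthogonal projection onto $\mathrm{span}\{|AG\rangle\} = \mathrm{span}\{|1\rangle_U\}$ versus its complement $\mathrm{span}\{|0\rangle_U, |2\rangle_U\}$, with coherence in the complementary branch preserved by the defining property of $\M_A$.

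This is Measurement~\ref{Meas 3} up to the transposition $|0\rangle \leftrightarrow |1\rangle$ of the computational basis. Since $p^2$ and $q^2$, which are braids, act as classical permutations on $\{|0\rangle_U, |1\rangle_U, |2\rangle_U\}$ and generate the full symmetric group $\Sym_3$ on this basis, I can implement exactly the transposition $X$ swapping $|0\rangle_U$ and $|1\rangle_U$ by braiding alone. Conjugating, $X \circ \M_A \circ X$ then realizes Measurement~\ref{Meas 3} on $U$ exactly, not merely approximately.

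For the $V$-model, the plan is to sandwich the $U$-construction between the isomorphisms $\gamma^{-1}: V \to U$ and $\gamma: U \to V$ built in Lemma~\ref{hadamard} from $h'$, $h'^{-1}$, and Measurement~\ref{Meas 2}. Given a state in $V$, I would apply $\gamma^{-1}$ to pull it back to $U$, perform the previously constructed Measurement~\ref{Meas 3} in $U$, and then apply $\gamma$ to return to $V$. Because the procedure $\mathcal{R}$ of Lemma~\ref{hadamard} succeeds in producing each of $\gamma^{\pm 1}$ within $n$ attempts with probability $1 - (5/9)^n$, the overall compound procedure approximates Measurement~\ref{Meas 3} on $V$ with failure probability going to zero exponentially fast in the number of ancillary braidings and measurements.

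The main point requiring care is the very first step: one must check that $\M_A$ applied only to the leftmost $D$-pair of a four-anyon fusion tree genuinely implements the stated orthogonal projection on the nine-dimensional computational Hilbert space, rather than leaking amplitude across basis vectors with different values of the other internal label $y$. This reduces to the definitional observation that $x$ in $|xy\rangle$ is precisely the total charge of the first $D$-pair, so the Hilbert-space projection onto a fixed value of $x$ coincides exactly with the projective charge measurement. Once this identification is granted, the remainder of the argument invokes only the $\Sym_3$ action on $U$ already furnished by $p^2$ and $q^2$ and the already-built $\gamma, \gamma^{-1}$.
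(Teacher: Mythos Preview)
Your argument is correct and follows essentially the same route as the paper's proof: apply $\M_A$ to the leftmost $D$-pair to obtain the projection onto $\mathrm{span}\{|1\rangle_U\}$ versus its complement, then use the classical $\Sym_3$ permutations from braiding to relabel, and finally transport the construction to $V$ via $\gamma,\gamma^{-1}$. Your version is more explicit---you spell out the conjugation by the transposition $X$ and address the potential leakage concern---but these are elaborations of the same underlying idea rather than a different approach.
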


Up to now, we only considered gates and operations on one qutrit. Next, we want to construct a $2$-qutrit gate, the Controlled-$Z$ gate $\bigwedge(Z)$ which maps $|i,j\rangle$ to $\omega^{ij}|i,j\rangle.$

\setlength{\unitlength}{0.015in}
\begin{picture}(180,120)(-40,-50)

\put(50,10){\line(1,-1){45}}
\put(50,10){\line(1,1){30}}
\put(50,10){\line(-1,1){30}}
\put(70,30){\line(-1,1){10}}
\put(30,30){\line(1,1){10}}

\put(140,10){\line(-1,-1){45}}
\put(140,10){\line(1,1){30}}
\put(140,10){\line(-1,1){30}}
\put(160,30){\line(-1,1){10}}
\put(120,30){\line(1,1){10}}

\put(95,-35){\line(0,-1){15}}

\put(20,42){$D$}
\put(40,42){$D$}
\put(60,42){$D$}
\put(80,42){$D$}
\put(38,20){$x_1$}
\put(64,20){$y_1$}
\put(58,-15){$G$}

\put(110,42){$D$}
\put(130,42){$D$}
\put(150,42){$D$}
\put(170,42){$D$}
\put(128,20){$x_2$}
\put(154,20){$y_2$}
\put(124,-15){$G$}

\put(86,-47){$G$}
\end{picture}

We use the above fusion tree to encode $2$-qutrits. Let $s_1 = \sigma_2\sigma_1\sigma_3\sigma_2,$ namely, $s_1$ is the braiding of the first pair with the second pair. Similarly let $\ s_2 = \sigma_4\sigma_3\sigma_5\sigma_4,\ s_3 = \sigma_6\sigma_5\sigma_7\sigma_6.\ $ Clearly $s_1$ exchanges $x_1$ with $y_1$ with a phase in the above $2$-qudit splitting tree, namely it maps $|x_1,y_1;x_2,y_2 \rangle$ to $|y_1,x_1;x_2,y_2 \rangle$ up to a phase. Similarly, $s_3$ exchanges $x_2$ with $y_2$. The gate $s_2$ is much more complicated since it involves $F$-moves. Let $CrlZ = s_1^{-1}s_2^2s_1 s_3^{-1}s_2^2 s_3.$ Through direct calculations, we found $CrlZ$ is a diagonal matrix. Moreover, when restricted to the space $U$, $CrlZ$ is exactly the Controlled-$Z$ gate $\bigwedge(Z).$ Again, via the transformation $\gamma,$ one also obtains the Controlled-$Z$ gate in the space $V$.

The $SUM$ gate maps $|i,j\rangle$ to $|i, i+j \rangle$ and can be obtained by conjugating $\bigwedge(Z)$ via the Hadamard. Explicitly,
$$SUM = (Id \otimes h)\bigwedge(Z)^{-1}(Id \otimes h^{-1}).$$

So we can also construct the $SUM$ gate in the space $U$ and $V$.

~\\
To sum up, with Measurement \ref{Meas 1}, \ref{Meas 2} and braiding, we can construct all the 1-qutrit classical gates, generalized Hadamard gate, $SUM$ gate and Measurement \ref{Meas 3} in both the space $U$ and $V$.

Finally, Theorem \ref{UV universal} follows from Theorem \ref{HSMA universal} and the arguments in this subsection.

\subsection{Universality for $W$-model}\label{proofWuniversal}

In this subsection, we examine the representation on $W$. Under the basis of $W$ given by $\{ |0\rangle_W, |1\rangle_W, |2\rangle_W\}$, the $\sigma_i \, 's$ have the matrices:

$\sigma_1 =
\begin{pmatrix}
1 &  0  &  0 \\
0 &  1 &  0 \\
0 &  0  &  \omega \\
\end{pmatrix}
\qquad
\sigma_3 =
\begin{pmatrix}
1 &  0  &  0 \\
0 &  \omega &  0 \\
0 &  0  &  1 \\
\end{pmatrix}
$

$\sigma_2 =
\begin{pmatrix}
\frac{1}{2} + \frac{\sqrt{3}i}{6} &  -\frac{1}{2} + \frac{\sqrt{3}i}{6}  &  -\frac{1}{2} + \frac{\sqrt{3}i}{6} \\
-\frac{1}{2} + \frac{\sqrt{3}i}{6} & \frac{1}{2} + \frac{\sqrt{3}i}{6}   &  -\frac{1}{2} + \frac{\sqrt{3}i}{6} \\
-\frac{1}{2} + \frac{\sqrt{3}i}{6} &  -\frac{1}{2} + \frac{\sqrt{3}i}{6}  &  \frac{1}{2} + \frac{\sqrt{3}i}{6} \\
\end{pmatrix}
$

The same as last subsection, define $p$ = $\sigma_1\sigma_2\sigma_1$, $q$ = $\sigma_3\sigma_2\sigma_3$. Then

$p^2 =-
\begin{pmatrix}
0 &  1  &  0 \\
1 &  0 &  0 \\
0 &  0  &  1 \\
\end{pmatrix}
\qquad
q^2 =-
\begin{pmatrix}
0 &  0  &  1 \\
0 &  1 &  0 \\
1 &  0  &  0 \\
\end{pmatrix}
$

So $p^2$ and $q^2$ generate all the $1$-qutrit classical gates in $W$.

Also from $\sigma_1$ and $\sigma_3$, we obtain the generalized $Z$-gate and Phase gate $P$:

$
Z =
\begin{pmatrix}
1 &  0  &  0 \\
0 &  \omega &  0 \\
0 &  0  &  \omega^2 \\
\end{pmatrix}
\qquad
P =
\begin{pmatrix}
1 &  0  &  0 \\
0 &  1 &  0 \\
0 &  0  &  \omega \\
\end{pmatrix}
$ where $Z$ maps $|i\rangle$ to $\omega^i|i\rangle$ and $P$ maps $|i\rangle$ to $\omega^{\frac{i^2-i}{2}}|i\rangle.$

Moreover, let $h' = q^2pq^2$, then
$
h' = \frac{1}{\sqrt{3}i}
\begin{pmatrix}
1 &  1  &  1 \\
1 &  \omega & \omega^2 \\
1 &  \omega^2  &  \omega \\
\end{pmatrix}
$, which is exactly the generalized Hadamard gate up to a phase.

Therefore, in the space $W$, we obtained the classical 1-qutrit gates, generalized $Z$-gate, the Phase gate and the generalized Hadamard gate by braiding.

Now we turn to constructing the $2$-qutrit gate $\bigwedge(Z).$ One may try the same braiding method as we did for the space $U$. But it turns out that braiding doesn't work for $W$. Instead, we try to construct a transformation  similar to $\gamma$.

Consider the following picture of braiding.

\begin{tikzpicture}[scale = 0.5]

       \begin{scope}
       \draw (0,0) -- (0,-2) node[left]{$G$};
        \draw (4,4) -- (1,7) node[left]{$D$} -- (-3,10);
       \draw (2,6) -- (3,7)node[left]{$D$} -- (-1,10);
       \draw (6,6) -- (5,7) node[left]{$D$}-- (1,10);
       \draw (0,0) -- (7,7) node[left]{$D$}-- (3,10);

       \draw (0,0) -- (-7,7) node[left]{$D$}-- (-7,10);
       \draw (-6,6) -- (-5,7) node[left]{$D$}-- (-5,10);
       \draw (-2,6) -- (-3,7)[color = white, line width = 2mm] -- (5,10);
       \draw (-4,4) -- (-1,7)[color = white, line width = 2mm] -- (7,10);
       \draw (-2,6) -- (-3,7)node[left]{$D$} -- (5,10);
       \draw (-4,4) -- (-1,7)node[left]{$D$} -- (7,10);

       \draw (-3,10)--(-7,11.5);
       \draw [color = white, line width = 2mm](-6.6,10.3)-- (-5,11.5);
       \draw [color = white, line width = 2mm](-4.6,10.3) -- (-3,11.5);
       \draw (-7,10)-- (-5,11.5);
       \draw (-5,10) -- (-3,11.5);
       \draw (5,10)-- (3,11.5);
       \draw (7,10)-- (5,11.5);
       \draw [color = white, line width = 2mm](3.8,10.3)-- (7,11.5);
       \draw (3,10)-- (7,11.5);
       \draw (-1,10)-- (-1,13);
       \draw (1,10) -- (1,13);

       \draw (-5,11.5) -- (-7,13);
       \draw (-3,11.5) -- (-5,13);
       \draw [color = white, line width = 2mm](-6.2,11.8) -- (-3,13);
       \draw (-7,11.5) -- (-3,13);
       \draw (7,11.5) -- (3,13);
       \draw [color = white, line width = 2mm](3.4,11.8) -- (5,13);
       \draw [color = white, line width = 2mm](5.4,11.8) -- (7,13);
       \draw (3,11.5) -- (5,13);
       \draw (5,11.5) -- (7,13);

       \draw (-7,13)-- (-7,16);
       \draw (-5,13)-- (-5,16);
       \draw (-3,13)-- (1,16);
       \draw (-1,13)-- (3,16);
       \draw (1,13)-- (5,16);
       \draw (3,13)-- (7,16);
       \draw [color = white, line width = 2mm](4.2,13.3)-- (-3,16);
       \draw [color = white, line width = 2mm](6.2,13.3)-- (-1,16);
       \draw (5,13)-- (-3,16);
       \draw (7,13)-- (-1,16);

       \draw (-5.5,5) node{$H$};
       \draw (-2.5,5) node{$H$};
       \draw (2.5,5) node{$x$};
       \draw (5.5,5) node{$y$};

       \draw (-2.5,2) node{$A$};
       \draw (2.5,2) node{$G$};

      \draw[dashed] (-0.5,0.5)--(-0.5,7.5) -- (-8,7.5) -- (-8,0.5) -- (-0.5,0.5);
      \draw (-6,2) node{$\textrm{Ancilla}$};
       \end{scope}
\end{tikzpicture}

Let $P$= $\sigma_6\sigma_5\sigma_4\sigma_3\sigma_7\sigma_6\sigma_5\sigma_4$, $Q$ = $\sigma_2\sigma_1\sigma_1\sigma_2\sigma_6\sigma_7\sigma_7\sigma_6$, and let $R$ = $P^{-1}QP$. Then the braiding in the picture is given by $R$.

We denote the state in the picture before braiding by $|H\rangle_A |xy\rangle$. Then the braiding $R$ gives the following transformation:

$$|H\rangle_A |i\rangle_W \longmapsto \frac{1}{2}(-|H\rangle_A |i\rangle_W + |H\rangle_B |i\rangle_V - \sqrt{2}|H\rangle_B |-i\rangle_U)$$

and

$$|H\rangle_B |i\rangle_U \longmapsto \frac{1}{\sqrt{2}}(|H\rangle_A |-i\rangle_W + |H\rangle_B |-i\rangle_V )$$

where $i = 0, \; 1, \; 2$ and $-i$ is taken to be modulo $3$.

Define a unitary transformation $\beta: |H\rangle_A \otimes  W \longrightarrow |H\rangle_B \otimes U, \, \\ \beta(|H\rangle_A|i\rangle_W) = |H\rangle_B|i\rangle_U.$ Here $|H\rangle_A$ is the ancilla.

\begin{lem}
With braiding, Measurement \ref{Meas 1}, \ref{Meas 2} and Ancilla \ref{Anc 1}, the transformation $\beta$ and $\beta^{-1}$ can be constructed with probability approaching to 1 exponentially fast in the number of measurements and the gates applied.

\begin{proof}
In the following diagram, $\mathcal{M}_1$ means applying Measurement \ref{Meas 1} to the first qudit (the ancilla part) to check whether the total charge is trivial or not and $\mathcal{M}_2$ is Measurement \ref{Meas 2} applied to the second qudit. Consider the following procedure $\mathcal{S}$:

\begin{equation}
 \xymatrix{
 \mathcal{S}:|H\rangle_A|i\rangle_W \ar[d]^-{R}   &   \\
 \frac{1}{2}(-|H\rangle_A |i\rangle_W + |H\rangle_B |i\rangle_V - \sqrt{2}|H\rangle_B |-i\rangle_U) \ar[r]^-{\mathcal{M}_1}  &output \ar[d]_-{Pr = \frac{1}{4}} \ar[dl]^-{Pr = \frac{3}{4}}  \\
 \frac{1}{\sqrt{3}}(|H\rangle_B|i\rangle_V - \sqrt{2}|H\rangle_B|-i\rangle_U) \ar[d]^-{\mathcal{M}_2}     & |H\rangle_A|i\rangle_W \\
 output \ar[dr]^-{Pr = \frac{1}{3}} \ar[d]_-{Pr = \frac{2}{3}}  &    \\
 |H\rangle_B|-i\rangle_U \ar[d]^-{Id \otimes p^2q^2p^2}  & |H\rangle_B|i\rangle_V \ar[d]^-{Id \otimes \gamma^{-1}} \\
 |H\rangle_B|i\rangle_U  & |H\rangle_B|i\rangle_U
}
\end{equation}

Starting from the state $|H\rangle_A|i\rangle_W$ with $|H\rangle_A$ as ancilla, we apply the procedure $\mathcal{S}$ to it. From the diagram above, one can see that there is a probability of $\frac{1}{4}$ for the state to remain unchanged in which case we would apply the procedure again. Otherwise, the state is transformed to $|H\rangle_B|i\rangle_U,$ namely the transformation $\beta$ is constructed. By repeating the procedure $\mathcal{S}$, $\beta$ can be obtained exponentially fast.

$\beta^{-1}$ can be constructed in a similar way by repeated use of the following procedure $\mathcal{T}$:

\begin{equation}
\xymatrix{
|H\rangle_B|i\rangle_U \ar[r]^-{R} & \frac{1}{\sqrt{2}}(|H\rangle_A|-i\rangle_W + |H\rangle_B|-i\rangle_V) \ar[dl]_-{\mathcal{M}_1} & \\
output \ar[r]_-{Pr = \frac{1}{2}} \ar[d]^-{Pr = \frac{1}{2}} & |H\rangle_A|-i\rangle_W \ar[r]^-{Id \otimes p^2q^2p^2} & |H\rangle_A|i\rangle_W \\
|H\rangle_B|-i\rangle_V  \ar[r]^-{Id \otimes p^2q^2p^2} & |H\rangle_B|i\rangle_V \ar[r]^-{Id \otimes \gamma^{-1}} &  |H\rangle_B|i\rangle_U
}
\end{equation}

\end{proof}
\end{lem}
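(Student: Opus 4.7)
My plan is to exploit the two explicit braiding identities for $R$ displayed just above the lemma, which describe $R$'s action on $|H\rangle_A|i\rangle_W$ and on $|H\rangle_B|i\rangle_U$. Each identity expresses the image as a superposition whose summands live in distinguishable sectors: the total charge of the ancilla pair (detectable by Measurement \ref{Meas 1}) and the $U$-versus-$V$ decomposition of the computational pair (detectable by Measurement \ref{Meas 2}). The general strategy is to apply $R$, use these two (coherence-preserving) measurements to identify the sector of the post-measurement state, and then either complete the target transformation by a short sequence of already-constructed operations or restore the input and iterate.

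For $\beta$, applying $R$ to $|H\rangle_A|i\rangle_W$ yields a superposition with an $|H\rangle_A|i\rangle_W$ component of weight $1/4$ and two $|H\rangle_B$-components, one in $U$ and one in $V$. A measurement with $\mathcal{M}_A$ either loops us back to $|H\rangle_A|i\rangle_W$ (probability $1/4$) or projects onto the renormalized $|H\rangle_B$-part; a subsequent measurement with $\mathcal{M}_U$ on the second qudit then separates $|H\rangle_B|-i\rangle_U$ from $|H\rangle_B|i\rangle_V$. In the first sub-case, the classical braiding gate $p^2q^2p^2$ implements $i\mapsto -i$ on $U$ to produce $|H\rangle_B|i\rangle_U$; in the second, $\gamma^{-1}$ from Lemma \ref{hadamard} transfers $|i\rangle_V$ to $|i\rangle_U$. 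Either way we realize $\beta(|H\rangle_A|i\rangle_W)$, and the probability of failing to succeed after $n$ rounds is $(1/4)^n$.

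For $\beta^{-1}$ the template is identical but shorter. Applying $R$ to $|H\rangle_B|i\rangle_U$ gives an equal superposition of $|H\rangle_A|-i\rangle_W$ and $|H\rangle_B|-i\rangle_V$; the $\mathcal{M}_A$ outcome selects between them with equal probability. In the $A$ branch, $p^2q^2p^2$ on $W$ corrects $|-i\rangle_W$ to $|i\rangle_W$ up to a global phase, producing $|H\rangle_A|i\rangle_W=\beta^{-1}(|H\rangle_B|i\rangle_U)$. In the $B$ branch, applying $p^2q^2p^2$ on $V$ and then $\gamma^{-1}$ brings the state back to $|H\rangle_B|i\rangle_U$, allowing retry. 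The failure probability after $n$ rounds is $(1/2)^n$.

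The main obstacle I foresee is establishing the two braiding identities for $R=P^{-1}QP$ themselves. Conceptually routine, the verification is computationally heavy: one must track an eight-factor word in the braid generators through a representation that couples the ancilla space $V_A^{DDDD}$ to the nine-dimensional computational space $V_G^{DDDD}$ via $F$-moves between the relevant fusion trees, so it depends on the complete $6j$- and $R$-data assembled in Appendix A. Once these two identities are granted, the measurement-and-iterate scheme above produces both $\beta$ and $\beta^{-1}$, and the geometric decay of the looping probability gives the stated exponential convergence.
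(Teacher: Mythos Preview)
Your proposal is correct and follows essentially the same argument as the paper's own proof: the paper likewise applies $R$, uses $\mathcal{M}_1$ (Measurement~\ref{Meas 1} on the ancilla) followed by $\mathcal{M}_2$ (Measurement~\ref{Meas 2} on the computational qudit) to isolate the sectors, corrects with $Id\otimes p^2q^2p^2$ or $Id\otimes\gamma^{-1}$ exactly as you describe, and iterates, obtaining loop probabilities $1/4$ for $\beta$ and $1/2$ for $\beta^{-1}$. Your remark that the two identities for $R$ are the real computational input is also accurate; the paper simply states them before the lemma and takes them as given.
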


By going back and forth between $W$ and $U$ via $\beta$ and $\beta^{-1}$, any operation in the space $U$ can be performed in $W$ accordingly. In particular, the Controlled-$Z$ gate and Measurement \ref{Meas 3} can be constructed in $W$.

Collecting the results in this subsection, we finish the proof of Theorem \ref{W universal}.

\appendix

\section{Solutions of the $D(\Sym_3)$ fusion rules} \label{FRmatrices}

Given a set of fusion rules, it is highly non-trivial to solve for all $6j$ symbols even with software packages.  Though $D(\Sym_3)$ is a large anyon system, recent progress makes it possible to solve for all modular categories with the same fusion rules.  In the following, we list the complete data only for the $D(\Sym_3)$. See Section \ref{FusionTreeBasis} for an explanation of the notations that we use for $F$-matrices and $R$-matrices.

The subcategory spanned by $\{ A, B, G\}$ is a near-group category of type $(\mathbb{Z}_2, 1)$ and analyzed completely in \cite{siehler2003} (The objects $A, B, G$ here are the $\epsilon, g, m$ in \cite{siehler2003}, respectively.)

 The only monoidal structure which allows braiding is the following one \cite{siehler2003}.  When we list $6j$ symbols, all the admissible ones that are equal to $1$ are omitted.

 $F^{BGG}_{GGG}=F^{GBG}_{GGG}=F^{GGB}_{GGG}=F^{GGG}_{BGG}=-1$,

\iffalse  $F^{GGG}_{GAA}=F^{GGG}_{GAB}=F^{GGG}_{GBA}=F^{GGG}_{GBB}=\frac{1}{2}$, $F^{GGG}_{GAG}=F^{GGG}_{GGA}=\frac{1}{\sqrt{2}}$,
 $F^{GGG}_{GBG}=F^{GGG}_{GGB}=-\frac{1}{\sqrt{2}}$, $F^{GGG}_{GGG}=0$.\fi

 $F^{GGG}_{G}=\begin{pmatrix}1/2 & 1/2&1/\sqrt{2} \\1/2&1/2 &-1/\sqrt{2} \\1/\sqrt{2}&-1/\sqrt{2} & 0 \end{pmatrix}$

 Note that we normalize the trivalent basis to obtain unitary $F$ matrices, while the original $F$ matrices in \cite{siehler2003} are not unitary.

There are two braiding structures on the subcategory depending on a choice of $\omega \in \{e^{2\pi i/3}, e^{4\pi i/3} \}$.

 $R^{AA}_{A}=R^{AB}_{B}=R^{BA}_{A}=R^{AG}_{G}=R^{GA}_{G}=1,$

 $R^{BG}_{G}=R^{GB}_{G}=-1,$

 $R^{GG}_{A}=\omega^2,$

 $R^{GG}_{B}=-\omega^2,$

 $R^{GG}_{G}=\omega .$

 Since these two structures are complex conjugate to each other, we assume  that $\omega = e^{2\pi i/3}$ from now on. The subcategory is balanced for all choices of $\omega \in \{1, e^{2\pi i/3}, e^{4\pi i/3} \}$.  Siehler chose $\omega=1$, which does not extend to the whole category.

 There are three monoidal structures that extend to all other simple objects. We will focus on the structure that we used in this paper. For simplicity, let $\mathcal{G}=\{A, B\}$, $\mathcal{C}_1=\{G\}$, $\mathcal{C}_2=\{D, E\}$, and $\mathcal{C}_3=\{C, F, H\}$. In the following we list associativity matrices according to types upon three upper objects in $F^{abc}_d$. For example, $\mathcal{G}\mathcal{G}\mathcal{C}_2-$ type contains all associativity matrices with two objects from $\mathcal{G}$ and one from $\mathcal{C}_2$. Using this notation, all $\mathcal{G}\mathcal{G}\mathcal{G}-$, $\mathcal{G}\mathcal{G}\mathcal{C}_1-$, $\mathcal{G}\mathcal{C}_1\mathcal{C}_1-$, and $\mathcal{C}_1\mathcal{C}_1\mathcal{C}_1-$ types are given above.

\subsection{The rest of the $6j$ symbols} Beside the $6j$-symbols above, the rest are:

 $\mathcal{G}\mathcal{G}\mathcal{C}_2-$ type:

\begin{itemize}
 \iffalse \item $F^{BBD}_{D}=F^{BBE}_{E}=F^{DBB}_{D}=F^{EBB}_{E}=1$\fi

 \item $F^{BDB}_{D}=F^{BEB}_{E}=-1$
\end{itemize}

\iffalse \vspace{10mm}
 $\mathcal{G}\mathcal{G}\mathcal{C}_3-$ type: $1$\fi

 $\mathcal{G}\mathcal{C}_1\mathcal{C}_2-$ type:

 \begin{itemize}

 \item $ -1\:\: \text{for}\:\:F^{BDG}_{E}, F^{BEG}_{D}, F^{GDB}_{E}, F^{GEB}_{D}$

\iffalse  \item $ 1 \:\: \text{otherwise} $\fi

 \end{itemize}

\iffalse For reference, the second case above consists of $F^{BGD}_{D}, F^{BGD}_{E}, F^{BGE}_{D}, F^{BGE}_{E}, F^{BDG}_{D}, F^{BEG}_{E}$, $F^{GBD}_{D}, F^{GBD}_{E}, F^{GBE}_{D}, F^{GBE}_{E}, F^{GDB}_{D}, F^{GEB}_{E}, F^{DBG}_{D}, F^{DBG}_{E}, F^{DGB}_{D}, F^{DGB}_{E}, F^{EBG}_{D}, F^{EBG}_{E}, F^{EGB}_{D},\\ F^{EGB}_{E}$.\fi

 $\mathcal{G}\mathcal{C}_1\mathcal{C}_3-$ type:

\begin{itemize}

\iffalse \item $ 1$ for\\
$F^{BGF}_{C},F^{BGF}_{H},F^{BGC}_{F},F^{BFG}_{H},F^{BHG}_{F},F^{GBF}_{C},F^{GBC}_{F},F^{GBH}_{F},
F^{GFB}_{H},F^{GHB}_{F},F^{FBG}_{C},\\ F^{FGB}_{C},
F^{FGB}_{H},F^{CBG}_{F},F^{CGB}_{F},F^{HBG}_{F}$\fi

\item $-1$ for \\$
F^{BGC}_{H},F^{BGH}_{F},F^{BGH}_{C},F^{BFG}_{C},F^{BCG}_{F},F^{BCG}_{H},F^{BHG}_{C},F^{GBF}_{H},F^{GBC}_{H},\\
F^{GBH}_{C},F^{GFB}_{C}, F^{GCB}_{F},F^{GCB}_{H},F^{GHB}_{C},F^{FBG}_{H},F^{CBG}_{H},F^{CGB}_{H},F^{HBG}_{C},\\
F^{HGB}_{F},F^{HGB}_{C} $

\end{itemize}

$\mathcal{G}\mathcal{C}_2\mathcal{C}_2-$ type:

\begin{itemize}

\iffalse \item $ 1$ for \\$
F^{BDD}_{B},F^{BDD}_{G},F^{BDD}_{C},F^{BDD}_{H},F^{BDE}_{A},F^{BDE}_{G},F^{BDE}_{F},F^{BDE}_{C},F^{BDE}_{H},
F^{BED}_{A},F^{BED}_{G},\\
F^{BED}_{C},F^{BED}_{H},F^{BEE}_{B},F^{BEE}_{G},F^{BEE}_{F},F^{BEE}_{C},F^{BEE}_{H},F^{DBD}_{G},F^{DBD}_{F},
F^{DBD}_{C},F^{DBD}_{H},\\
F^{DBE}_{A},F^{DBE}_{F},F^{DDB}_{B},F^{DDB}_{G},F^{DDB}_{C},F^{DDB}_{H},F^{DEB}_{A},F^{DEB}_{G},F^{DEB}_{C},
F^{DEB}_{H},F^{EBD}_{A},\\
F^{EBD}_{F},F^{EBE}_{G},F^{EBE}_{F},F^{EBE}_{C},F^{EBE}_{H},F^{EDB}_{A},F^{EDB}_{G},F^{EDB}_{F},F^{EDB}_{C},
F^{EDB}_{H},F^{EEB}_{B},\\
F^{EEB}_{G},F^{EEB}_{F},F^{EEB}_{C},F^{EEB}_{H}$\fi

\item $                -1$ for \\$
F^{BDD}_{F},F^{BED}_{F},F^{DBD}_{B},F^{DBE}_{G},F^{DBE}_{C},F^{DBE}_{H},F^{DDB}_{F},F^{DEB}_{F},F^{EBD}_{G},\\
F^{EBD}_{C},F^{EBD}_{H},F^{EBE}_{B} $

\end{itemize}

$\mathcal{G}\mathcal{C}_2\mathcal{C}_3-$ type:

\begin{itemize}

\iffalse \item $ 1$ for \\$
F^{BDF}_{D},F^{BDF}_{E},F^{BDC}_{D},F^{BDH}_{D},F^{BEF}_{D},F^{BEF}_{E},F^{BEC}_{E},F^{BEH}_{E},F^{BFE}_{D},F^{BFE}_{E},F^{BCD}_{D},\\
F^{BCD}_{E},F^{BCE}_{D},F^{BCE}_{E},F^{BHD}_{D},F^{BHD}_{E},F^{BHE}_{D},F^{BHE}_{E},F^{DBF}_{E},F^{DBC}_{D},F^{DBC}_{E},F^{DBH}_{D},\\
F^{DBH}_{E},F^{DCB}_{D},F^{DCB}_{E},F^{DHB}_{D},F^{DHB}_{E},F^{EBF}_{E},F^{EBC}_{D},F^{EBC}_{E},F^{EBH}_{D},F^{EBH}_{E},F^{EFB}_{D},\\
F^{EFB}_{E},F^{ECB}_{D},F^{ECB}_{E},F^{EHB}_{D},F^{EHB}_{E},F^{FBD}_{E},F^{FBE}_{E},F^{FDB}_{D},F^{FDB}_{E},F^{FEB}_{D},F^{FEB}_{E},\\
F^{CBD}_{D},F^{CBD}_{E},F^{CBE}_{D},F^{CBE}_{E},F^{CDB}_{D},F^{CEB}_{E},F^{HBD}_{D},F^{HBD}_{E},F^{HBE}_{D},F^{HBE}_{E},F^{HDB}_{D},\\
F^{HEB}_{E}$\fi

\item $                -1$ for \\$
F^{BDC}_{E},F^{BDH}_{E},F^{BEC}_{D},F^{BEH}_{D},F^{BFD}_{D},F^{BFD}_{E},F^{DBF}_{D},F^{DFB}_{D},F^{DFB}_{E},\\
F^{EBF}_{D},F^{FBD}_{D},F^{FBE}_{D},F^{CDB}_{E},F^{CEB}_{D},F^{HDB}_{E},F^{HEB}_{D}  $

\end{itemize}

$\mathcal{G}\mathcal{C}_3\mathcal{C}_3-$ type:

\begin{itemize}

\iffalse \item $ 1$ for \\$
F^{BFF}_{A},F^{BFF}_{B},F^{BFC}_{G},F^{BFH}_{C},F^{BCF}_{G},F^{BCF}_{H},F^{BCC}_{A},F^{BCC}_{B},F^{BCH}_{F},F^{BHF}_{G},\\
F^{BHF}_{C},F^{BHH}_{A},F^{BHH}_{B},F^{FBF}_{A},F^{FBF}_{B},F^{FBC}_{H},F^{FBH}_{G},F^{FFB}_{A},F^{FFB}_{B},F^{FCB}_{G},\\
F^{FCB}_{H},F^{FHB}_{G},F^{FHB}_{C},F^{CBF}_{H},F^{CBC}_{A},F^{CBC}_{B},F^{CBH}_{F},F^{CFB}_{G},F^{CCB}_{A},F^{CCB}_{B},\\
F^{HBF}_{G},F^{HBC}_{F},F^{HBH}_{A},F^{HBH}_{B},F^{HFB}_{C},F^{HCB}_{F},F^{HHB}_{A},F^{HHB}_{B}$\fi

\item $                -1$ for \\$
F^{BFF}_{F},F^{BFC}_{H},F^{BFH}_{G},F^{BCC}_{C},F^{BCH}_{G},F^{BHC}_{G},F^{BHC}_{F},F^{BHH}_{H},F^{FBF}_{F},\\
F^{FBC}_{G},F^{FBH}_{C},F^{FFB}_{F},F^{CBF}_{G},F^{CBC}_{C},F^{CBH}_{G},F^{CFB}_{H},F^{CCB}_{C},F^{CHB}_{G},\\
F^{CHB}_{F},F^{HBF}_{C},F^{HBC}_{G},F^{HBH}_{H},F^{HFB}_{G},F^{HCB}_{G},F^{HHB}_{H} $

\end{itemize}

$\mathcal{C}_1\mathcal{C}_1\mathcal{C}_2-$ type:

\begin{itemize}

\item $ \frac{1}{\sqrt{2}}\begin{pmatrix} 1&1 \\1 & -1 \end{pmatrix}\:\: \text{for}\:\:
F^{G,G,D}_{D},F^{GGE}_{E}, F^{DGG}_{D},F^{EGG}_{E}$

\item $                \frac{1}{\sqrt{2}}\begin{pmatrix} 1& -1\\1 & 1 \end{pmatrix}\:\:\text{for}\:\:
 F^{G,G,D}_{E},F^{GGE}_{D} $

\item $                \frac{1}{\sqrt{2}}\begin{pmatrix} 1& 1\\-1 & 1 \end{pmatrix}\:\:\text{for}\:\:
 F^{DGG}_{E},F^{EGG}_{D} $

\item $  \frac{1}{2}\begin{pmatrix} -1& -\sqrt{3}\\-\sqrt{3} & 1 \end{pmatrix}\:\:\text{for}\:\:
 F^{GDG}_{D}$

\item $  \frac{1}{2}\begin{pmatrix} -\sqrt{3}& 1\\1 &  \sqrt{3} \end{pmatrix}\:\:\text{for}\:\:
 F^{GDG}_{E},F^{GEG}_{D} $

\item $  \frac{1}{2}\begin{pmatrix} 1&  \sqrt{3}\\ \sqrt{3} & -1 \end{pmatrix}\:\:\text{for}\:\:
 F^{GEG}_{E}$

 \end{itemize}

\vspace{10mm}
$\mathcal{C}_1\mathcal{C}_1\mathcal{C}_3-$ type:

\vspace{5mm}
\begin{itemize}
\item $ \frac{1}{\sqrt{2}}\begin{pmatrix} 1&1 \\1 & -1 \end{pmatrix} \:\: \text{for}\:\:
F^{GGF}_{F},F^{GGC}_{C},F^{GGH}_{H},F^{FGG}_{F},F^{CGG}_{C},F^{HGG}_{H}$

\item $               \begin{pmatrix} 0&1 \\1 & 0 \end{pmatrix} \:\:\text{for}\:\:
  F^{GFG}_{F},F^{GCG}_{C},F^{GHG}_{H}$

\iffalse \item $                1 $ for \\$
 F^{GGF}_{C},F^{GGF}_{H},F^{GGC}_{F},F^{GGC}_{H},F^{GGH}_{F},F^{GGH}_{C},F^{GFG}_{C},F^{GFG}_{H},F^{GCG}_{F},F^{GCG}_{H},\\
 F^{GHG}_{F},F^{GHG}_{C},F^{FGG}_{C},F^{FGG}_{H},F^{CGG}_{F},F^{CGG}_{H},F^{HGG}_{F},F^{HGG}_{C}$\fi

 \end{itemize}

$\mathcal{C}_1\mathcal{C}_2\mathcal{C}_2-$ type:

\begin{itemize}

\iffalse \item $ 1 $ for \\$
F^{GDD}_{A},F^{GDD}_{B},F^{GDE}_{A},F^{GDE}_{B},F^{GED}_{A},F^{GED}_{B},F^{GEE}_{A},F^{GEE}_{B},F^{DGD}_{A},F^{DGD}_{B},\\
F^{DGE}_{A},F^{DDG}_{A},F^{DDG}_{B},F^{DEG}_{A},F^{DEG}_{B},F^{EGD}_{A},F^{EGE}_{A},F^{EGE}_{B},F^{EDG}_{A},F^{EDG}_{B},\\
F^{EEG}_{A},F^{EEG}_{B}$\fi

\item $ -1 \:\: \text{for}\:\:
F^{DGE}_{B},F^{EGD}_{B}$

\item $           \frac{1}{\sqrt{2}}\begin{pmatrix} 1&1 \\1 & -1 \end{pmatrix}    $ for \\$
  F^{GDD}_{G},F^{GDD}_{C},F^{GDD}_{H},F^{GEE}_{G},F^{GEE}_{C},F^{GEE}_{H},F^{DDG}_{G},F^{DDG}_{C},F^{DDG}_{H},\\
  F^{EEG}_{G},F^{EEG}_{C},F^{EEG}_{H}$

\item $           \frac{1}{\sqrt{2}}\begin{pmatrix} 1&-1 \\1 & 1 \end{pmatrix}      $ for \\$
 F^{GDD}_{F},F^{GED}_{F},F^{DEG}_{G},F^{DEG}_{H},F^{EDG}_{G},F^{EDG}_{F},F^{EDG}_{C},F^{EEG}_{F}$

\item $           \frac{1}{\sqrt{2}}\begin{pmatrix} 1&1 \\-1 & 1 \end{pmatrix}      $ for \\$
 F^{GDE}_{G},F^{GDE}_{F},F^{GDE}_{C},F^{GED}_{G},F^{GED}_{H},F^{GEE}_{F},F^{DDG}_{F},F^{DEG}_{F}$

\item $           \frac{1}{\sqrt{2}}\begin{pmatrix} -1&-1 \\-1 & 1 \end{pmatrix}       \:\:\text{for}\:\:
 F^{GDE}_{H},F^{GED}_{C},F^{DEG}_{C},F^{EDG}_{H}$

\item $           \frac{1}{2}\begin{pmatrix} -1&-\sqrt{3} \\-\sqrt{3} & 1 \end{pmatrix}       \:\:\text{for}\:\:
 F^{DGD}_{G},F^{DGD}_{F}$

\item $           \frac{1}{2}\begin{pmatrix} -1& \sqrt{3} \\ \sqrt{3} & 1 \end{pmatrix}       \:\:\text{for}\:\:
 F^{DGD}_{C}$

\item $           \begin{pmatrix} 1&0 \\0 & -1 \end{pmatrix}       \:\:\text{for}\:\:
 F^{DGD}_{H}$

\item $           \frac{1}{2}\begin{pmatrix} -\sqrt{3}&1 \\1 & \sqrt{3} \end{pmatrix}       \:\:\text{for}\:\:
 F^{DGE}_{G},F^{EGD}_{G}$

\item $           \frac{1}{2}\begin{pmatrix} -\sqrt{3}&-1 \\1 & -\sqrt{3} \end{pmatrix}       \:\:\text{for}\:\:
 F^{DGE}_{F}$

\item $           \frac{1}{2}\begin{pmatrix} \sqrt{3}&1 \\1 & -\sqrt{3} \end{pmatrix}       \:\:\text{for}\:\:
 F^{DGE}_{C},F^{EGD}_{C}$

\item $           \begin{pmatrix} 0&-1 \\-1 & 0 \end{pmatrix}       \:\:\text{for}\:\:
 F^{DGE}_{H},F^{EGD}_{H}$

\item $           \frac{1}{2}\begin{pmatrix} -\sqrt{3}&1 \\-1 & -\sqrt{3} \end{pmatrix}       \:\:\text{for}\:\:
 F^{EGD}_{F}$

\item $           \frac{1}{2}\begin{pmatrix} 1&\sqrt{3} \\ \sqrt{3} & -1 \end{pmatrix}       \:\:\text{for}\:\:
 F^{EGE}_{G}$

\item $           \frac{1}{2}\begin{pmatrix} 1&-\sqrt{3} \\-\sqrt{3} & -1 \end{pmatrix}       \:\:\text{for}\:\:
 F^{EGE}_{F},F^{EGE}_{C}$

 \item $          \begin{pmatrix} -1&0 \\0 & 1 \end{pmatrix}       \:\:\text{for}\:\:
 F^{EGE}_{H}$
 \end{itemize}

$\mathcal{C}_1\mathcal{C}_2\mathcal{C}_3-$ type:

\begin{itemize}

\item $ \frac{1}{2}\begin{pmatrix} -1&-\sqrt{3} \\-\sqrt{3} & 1 \end{pmatrix}  \:\: \text{for}\:\:
F^{GDF}_{D},F^{FDG}_{D}$

\item $   \frac{1}{2}\begin{pmatrix} -\sqrt{3}&-1 \\1 & -\sqrt{3} \end{pmatrix}             \:\:\text{for}\:\:
 F^{GDF}_{E},F^{FEG}_{D} $

\item $   \frac{1}{2}\begin{pmatrix} -1& \sqrt{3} \\ \sqrt{3} & 1   \end{pmatrix}          \:\:\text{for}\:\:
 F^{GDC}_{D},F^{CDG}_{D}$

 \item $  \frac{1}{2}\begin{pmatrix}  \sqrt{3}&1 \\1 & -\sqrt{3} \end{pmatrix}             \:\:\text{for}\:\:
 F^{GDC}_{E},F^{GEC}_{D},F^{CDG}_{E},F^{CEG}_{D}$

 \item $          \begin{pmatrix} 1&0 \\0 & -1 \end{pmatrix}       \:\:\text{for}\:\:
 F^{GDH}_{D},F^{HDG}_{D}$

 \item $          \begin{pmatrix} 0&-1 \\-1 & 0 \end{pmatrix}       \:\:\text{for}\:\:
 F^{GDH}_{E},F^{GEH}_{D},F^{HDG}_{E},F^{HEG}_{D}$

 \item $  \frac{1}{2}\begin{pmatrix} -\sqrt{3}&1 \\-1 & -\sqrt{3} \end{pmatrix}             \:\:\text{for}\:\:
 F^{GEF}_{D},F^{FDG}_{E}$

\item $   \frac{1}{2}\begin{pmatrix} 1&-\sqrt{3} \\-\sqrt{3} & -1   \end{pmatrix}          \:\:\text{for}\:\:
 F^{GEF}_{E},F^{GEC}_{E},F^{FEG}_{E},F^{CEG}_{E}$

 \item $          \begin{pmatrix} -1&0 \\0 & 1 \end{pmatrix}       \:\:\text{for}\:\:
 F^{GEH}_{E},F^{HEG}_{E}$

\item $           \frac{1}{\sqrt{2}}\begin{pmatrix} 1&1 \\-1 & 1 \end{pmatrix}       $ for \\$
 F^{GFD}_{D},F^{GFD}_{E},F^{DGF}_{E},F^{DGC}_{E},F^{DHG}_{E},F^{EGF}_{E},F^{EGH}_{D},F^{EFG}_{D},F^{EFG}_{E},\\
 F^{ECG}_{D},F^{FGD}_{D},F^{FGE}_{D} $

 \item $          \frac{1}{\sqrt{2}}\begin{pmatrix} 1&-1 \\1 & 1 \end{pmatrix}       $ for \\$
 F^{GFE}_{D},F^{GFE}_{E},F^{GCE}_{D},F^{GHD}_{E},F^{DGF}_{D},F^{DFG}_{D},F^{DFG}_{E},F^{EGF}_{D},F^{FGD}_{E},\\
 F^{FGE}_{E},F^{CGD}_{E},F^{HGE}_{D}$

 \item $          \frac{1}{\sqrt{2}}\begin{pmatrix} 1&1 \\1 & -1 \end{pmatrix}       $ for \\$
 F^{GCD}_{D},F^{GCE}_{E},F^{GHD}_{D},F^{GHE}_{E},F^{DGC}_{D},F^{DGH}_{D},F^{DCG}_{D},F^{DHG}_{D},F^{EGC}_{E},\\
 F^{EGH}_{E},F^{ECG}_{E},F^{EHG}_{E},F^{CGD}_{D},F^{CGE}_{E},F^{HGD}_{D},F^{HGE}_{E}$

 \item $          \frac{1}{\sqrt{2}}\begin{pmatrix} -1&-1 \\-1 & 1 \end{pmatrix}      $ for \\$
 F^{GCD}_{E},F^{GHE}_{D},F^{DGH}_{E},F^{DCG}_{E},F^{EGC}_{D},F^{EHG}_{D},F^{CGE}_{D},F^{HGD}_{E}  $ \end{itemize}

$\mathcal{C}_1\mathcal{C}_3\mathcal{C}_3-$ type:

\begin{itemize}

\item $ \frac{1}{\sqrt{2}}\begin{pmatrix} 1&1 \\1 & -1 \end{pmatrix}       \:\:\text{for}\:\:
F^{GFF}_{G},F^{GCC}_{G},F^{GHH}_{G},F^{FFG}_{G},F^{CCG}_{G},F^{HHG}_{G}$

\iffalse \item $ 1 $ for \\$
F^{GFF}_{C},F^{GFF}_{H},F^{GFC}_{A},F^{GFC}_{B},F^{GFC}_{G},F^{GFC}_{F},F^{GFC}_{C},F^{GFH}_{A},F^{GFH}_{B},F^{GFH}_{G},\\
F^{GFH}_{F},F^{GFH}_{H},F^{GCF}_{A},F^{GCF}_{B},F^{GCF}_{G},F^{GCF}_{F},F^{GCF}_{C},F^{GCC}_{F},F^{GCC}_{H},F^{GCH}_{A},\\
F^{GCH}_{G},F^{GCH}_{C},F^{GCH}_{H},F^{GHF}_{A},F^{GHF}_{G},F^{GHF}_{F},F^{GHF}_{H},F^{GHC}_{A},F^{GHC}_{G},F^{GHC}_{C},\\
F^{GHC}_{H},F^{GHH}_{F},F^{GHH}_{C},F^{FGF}_{C},F^{FGF}_{H},F^{FGC}_{A},F^{FGC}_{G},F^{FGC}_{F},F^{FGC}_{C},F^{FGH}_{A},\\
F^{FGH}_{B},F^{FGH}_{G},F^{FGH}_{F},F^{FGH}_{H},F^{FFG}_{C},F^{FFG}_{H},F^{FCG}_{A},F^{FCG}_{B},F^{FCG}_{G},F^{FCG}_{F},\\
F^{FCG}_{C},F^{FHG}_{A},F^{FHG}_{G},F^{FHG}_{F},F^{FHG}_{H},F^{CGF}_{A},F^{CGF}_{G},F^{CGF}_{F},F^{CGF}_{C},F^{CGC}_{F},\\
F^{CGC}_{H},F^{CGH}_{A},F^{CGH}_{G},F^{CGH}_{C},F^{CGH}_{H},F^{CFG}_{A},F^{CFG}_{B},F^{CFG}_{G},F^{CFG}_{F},F^{CFG}_{C},\\
F^{CCG}_{F},F^{CCG}_{H},F^{CHG}_{A},F^{CHG}_{G},F^{CHG}_{C},F^{CHG}_{H},F^{HGF}_{A},F^{HGF}_{B},F^{HGF}_{G},F^{HGF}_{F},\\
F^{HGF}_{H},F^{HGC}_{A},F^{HGC}_{G},F^{HGC}_{C},F^{HGC}_{H},F^{HGH}_{F},F^{HGH}_{C},F^{HFG}_{A},F^{HFG}_{B},F^{HFG}_{G},\\
F^{HFG}_{F},F^{HFG}_{H},F^{HCG}_{A},F^{HCG}_{G},F^{HCG}_{C},F^{HCG}_{H},F^{HHG}_{F},F^{HHG}_{C}$\fi

\item $ -1 $ for \\$
F^{GCH}_{B},F^{GHF}_{B},F^{GHC}_{B},F^{FGC}_{B},F^{FHG}_{B},F^{CGF}_{B},F^{CGH}_{B},F^{CHG}_{B},F^{HGC}_{B},\\
F^{HCG}_{B}$

\item $ \begin{pmatrix} 0&1 \\1 & 0 \end{pmatrix}       \:\:\text{for}\:\:
F^{FGF}_{G},F^{CGC}_{G},F^{HGH}_{G}$

\end{itemize}

$\mathcal{C}_2\mathcal{C}_2\mathcal{C}_2-$ type:

\begin{itemize}

\item $ \frac{1}{3}\begin{pmatrix}
1       &\sqrt{2}&\sqrt{2}&\sqrt{2}&\sqrt{2}\\
\sqrt{2}&      -1&-1      &-1      &2\\
\sqrt{2}&      -1&2       &-1      &-1\\
\sqrt{2}&      -1&-1      &2       &-1\\
\sqrt{2}&       2&-1      &-1      &-1 \end{pmatrix}       \:\:\text{for}\:\:
F^{DDD}_{D},F^{EEE}_{E}$

\item $ \frac{1}{\sqrt{3}}\begin{pmatrix}
-1&-1&1&0\\
-1&0&-1&-1\\
1&-1&0&-1\\
0&-1&-1&1
\end{pmatrix}       \:\:\text{for}\:\:
F^{DDD}_{E},F^{DDE}_{D},F^{DED}_{D},F^{EDD}_{D}$

\item $ \frac{1}{3}\begin{pmatrix}
1 &-\sqrt{2}&\sqrt{2}&-\sqrt{2}&-\sqrt{2}\\
\sqrt{2}&1&-1      &1&-2\\
\sqrt{2}&1&2       &1&1\\
\sqrt{2}&1&-1      &-2&1\\
\sqrt{2}&-2&-1       &1&1 \end{pmatrix}       \:\:\text{for}\:\:
F^{DDE}_{E},F^{EED}_{D}$

\item $ \frac{1}{3}\begin{pmatrix}
-1      &\sqrt{2}&\sqrt{2}&\sqrt{2}&\sqrt{2}\\
\sqrt{2}&        1&1      &1        &-2\\
\sqrt{2}&        1&-2     &1        &1\\
\sqrt{2}&        1&1      &-2       &1\\
\sqrt{2}&       -2&1      &1        &1 \end{pmatrix}       \:\:\text{for}\:\:
F^{DED}_{E},F^{EDE}_{D}$

\item $ \frac{1}{3}\begin{pmatrix}
1       &\sqrt{2}&\sqrt{2}&\sqrt{2}&\sqrt{2}\\
-\sqrt{2}&      1&1      &1      &-2\\
\sqrt{2}&      -1&2       &-1      &-1\\
-\sqrt{2}&      1&1      &-2       &1\\
-\sqrt{2}&       -2&1      &1      &1 \end{pmatrix}       \:\:\text{for}\:\:
F^{DEE}_{D},F^{EDD}_{E}$

\item $ \frac{1}{\sqrt{3}}\begin{pmatrix}
1&-1&-1&0\\
-1&0&-1&-1\\
-1&-1&0&1\\
0&-1&1&-1
\end{pmatrix}       \:\:\text{for}\:\:
F^{DEE}_{E},F^{EDE}_{E},F^{EED}_{E},F^{EEE}_{D}$
\end{itemize}

$\mathcal{C}_2\mathcal{C}_2\mathcal{C}_3-$ type:

\begin{itemize}
\iffalse \item $ 1 $ for \\$
F^{DDF}_{A},F^{DDC}_{A},F^{DDC}_{B},F^{DDH}_{A},F^{DDH}_{B},F^{DEF}_{A},F^{DEF}_{B},F^{DEC}_{A},F^{DEC}_{B},F^{DEH}_{A},\\
F^{DEH}_{B},F^{DFD}_{A},F^{DFD}_{B},F^{DFE}_{A},F^{DFE}_{B},F^{DCD}_{A},F^{DCD}_{B},F^{DCE}_{A},F^{DHD}_{A},F^{DHD}_{B},\\
F^{DHE}_{A},F^{EDF}_{A},F^{EDC}_{A},F^{EDC}_{B},F^{EDH}_{A},F^{EDH}_{B},F^{EEF}_{A},F^{EEF}_{B},F^{EEC}_{A},F^{EEC}_{B},\\
F^{EEH}_{A},F^{EEH}_{B},F^{EFD}_{A},F^{EFD}_{B},F^{EFE}_{A},F^{EFE}_{B},F^{ECD}_{A},F^{ECE}_{A},F^{ECE}_{B},F^{EHD}_{A},\\
F^{EHE}_{A},F^{EHE}_{B},F^{FDD}_{A},F^{FDE}_{A},F^{FED}_{A},F^{FED}_{B},F^{FEE}_{A},F^{FEE}_{B},F^{CDD}_{A},F^{CDD}_{B},\\
F^{CDE}_{A},F^{CDE}_{B},F^{CED}_{A},F^{CED}_{B},F^{CEE}_{A},F^{CEE}_{B},F^{HDD}_{A},F^{HDD}_{B},F^{HDE}_{A},F^{HDE}_{B},\\
F^{HED}_{A},F^{HED}_{B},F^{HEE}_{A},F^{HEE}_{B}$\fi

\item $-1 $ for \\$
F^{DDF}_{B},F^{DCE}_{B},F^{DHE}_{B},F^{EDF}_{B},F^{ECD}_{B},F^{EHD}_{B},F^{FDD}_{B},F^{FDE}_{B}$

\item $\frac{1}{\sqrt{2}}\begin{pmatrix}
 1&1 \\
 -1& 1  \end{pmatrix}  $ for \\$
F^{DDF}_{G},F^{DDF}_{H},F^{DDC}_{H},F^{DDH}_{F},F^{DDH}_{C},F^{DEH}_{F},F^{EDF}_{G},F^{EDF}_{H},F^{EEC}_{H},\\
F^{EEH}_{C},F^{FED}_{G},F^{FED}_{H},F^{FEE}_{G},F^{FEE}_{F},F^{FEE}_{H},F^{CDE}_{C},F^{CED}_{G},F^{CED}_{C},\\
F^{HDE}_{G},F^{HDE}_{F},F^{HDE}_{H},F^{HED}_{H},F^{HEE}_{F}$

\item $\frac{1}{\sqrt{2}}\begin{pmatrix}
 1&1 \\
 1& -1  \end{pmatrix}  $ for \\$
F^{DDF}_{F},F^{DDF}_{C},F^{DDC}_{G},F^{DDC}_{F},F^{DDC}_{C},F^{DDH}_{G},F^{DDH}_{H},F^{DEC}_{F},F^{EDF}_{C},\\
F^{EEC}_{G},F^{EEC}_{C},F^{EEH}_{G},F^{EEH}_{H},F^{FDD}_{F},F^{FDD}_{C},F^{FDE}_{C},F^{CDD}_{G},F^{CDD}_{F},\\
F^{CDD}_{C},F^{CED}_{F},F^{CEE}_{G},F^{CEE}_{C},F^{HDD}_{G},F^{HDD}_{H},F^{HEE}_{G},F^{HEE}_{H}$

\item $\frac{1}{\sqrt{2}}\begin{pmatrix}
 1&-1 \\
 1& 1  \end{pmatrix}  $ for \\$
F^{DEF}_{G},F^{DEF}_{H},F^{DEC}_{G},F^{DEC}_{C},F^{DEH}_{H},F^{EDC}_{C},F^{EDH}_{G},F^{EDH}_{F},F^{EDH}_{H},\\
F^{EEF}_{G},F^{EEF}_{F},F^{EEF}_{H},F^{EEH}_{F},F^{FDD}_{G},F^{FDD}_{H},F^{FDE}_{G},F^{FDE}_{H},F^{CDD}_{H},\\
F^{CEE}_{H},F^{HDD}_{F},F^{HDD}_{C},F^{HED}_{F},F^{HEE}_{C}$

\item $\frac{1}{\sqrt{2}}\begin{pmatrix}
 -1&-1 \\
 1& -1  \end{pmatrix}  $ for \\$
F^{DEF}_{F},F^{DEC}_{H},F^{EDF}_{F},F^{EDH}_{C}$

\item $\frac{1}{\sqrt{2}}\begin{pmatrix}
 -1&1 \\
 1& 1  \end{pmatrix}  $ for \\$
F^{DEF}_{C},F^{DEH}_{C},F^{EDC}_{F},F^{EDC}_{H},F^{EEF}_{C},F^{EEC}_{F},F^{FED}_{C},F^{FEE}_{C},F^{CDE}_{F},\\
F^{CDE}_{H},F^{CEE}_{F},F^{HED}_{C}$

\item $\frac{1}{\sqrt{2}}\begin{pmatrix}
 -1&-1 \\
 -1& 1  \end{pmatrix}  $ for \\$
F^{DEH}_{G},F^{EDC}_{G},F^{CDE}_{G},F^{HED}_{G}$

\item $\frac{1}{2}\begin{pmatrix}
 -1&             -\sqrt{3} \\
 -\sqrt{3}& 1  \end{pmatrix}  $ for \\$
F^{DFD}_{G},F^{DFD}_{C},F^{DFD}_{H},F^{DCD}_{F},F^{DCD}_{H},F^{DHD}_{F},F^{DHD}_{C}$

\item $\begin{pmatrix}
 1& 0 \\
 0& -1  \end{pmatrix}  \:\:\text{for}\:\:
F^{DFD}_{F},F^{DCD}_{C},F^{DHD}_{G}$

\item $\frac{1}{2}\begin{pmatrix}
 -\sqrt{3}&1 \\
 -1& -\sqrt{3}  \end{pmatrix}  \:\:\text{for}\:\:
F^{DFE}_{G},F^{DFE}_{C},F^{DFE}_{H},F^{ECD}_{F},F^{EHD}_{F}$

\item $\begin{pmatrix}
 0& 1 \\
 1& 0  \end{pmatrix}  \:\:\text{for}\:\:
F^{DFE}_{F},F^{EFD}_{F}$

\item $\frac{1}{2}\begin{pmatrix}
 -1&             \sqrt{3} \\
 \sqrt{3}& 1  \end{pmatrix}  \:\:\text{for}\:\:
F^{DCD}_{G},F^{DHD}_{H}$

\item $\frac{1}{2}\begin{pmatrix}
 \sqrt{3}&  1 \\
 1&              -\sqrt{3}  \end{pmatrix}  \:\:\text{for}\:\:
F^{DCE}_{G},F^{DHE}_{H},F^{ECD}_{G},F^{EHD}_{H}$

\item $\frac{1}{2}\begin{pmatrix}
 -\sqrt{3}&-1 \\
 1& -\sqrt{3}  \end{pmatrix}  \:\:\text{for}\:\:
F^{DCE}_{F},F^{DHE}_{F},F^{EFD}_{G},F^{EFD}_{C},F^{EFD}_{H}$

\item $\begin{pmatrix}
 0& -1 \\
 -1& 0  \end{pmatrix}  \:\:\text{for}\:\:
F^{DCE}_{C},F^{DHE}_{G},F^{ECD}_{C},F^{EHD}_{G}$

\item $\frac{1}{2}\begin{pmatrix}
 -\sqrt{3}&1 \\
 1& \sqrt{3}  \end{pmatrix}  \:\:\text{for}\:\:
F^{DCE}_{H},F^{DHE}_{C},F^{ECD}_{H},F^{EHD}_{C}$

\item $\frac{1}{2}\begin{pmatrix}
 1&-\sqrt{3} \\
 -\sqrt{3}&-1  \end{pmatrix}  $ for \\$
F^{EFE}_{G},F^{EFE}_{C},F^{EFE}_{H},F^{ECE}_{G},F^{ECE}_{F},F^{EHE}_{F},F^{EHE}_{H}$

\item $\begin{pmatrix}
 -1& 0 \\
 0& 1  \end{pmatrix}  \:\:\text{for}\:\:
F^{EFE}_{F},F^{ECE}_{C},F^{EHE}_{G}$

\item $\frac{1}{2}\begin{pmatrix}
 1& \sqrt{3} \\
  \sqrt{3}&-1  \end{pmatrix}  \:\:\text{for}\:\:
F^{ECE}_{H},F^{EHE}_{C}$

\item $\frac{1}{\sqrt{2}}\begin{pmatrix}
 -1& 1 \\
 -1& -1  \end{pmatrix}  \:\:\text{for}\:\:
F^{FDE}_{F},F^{FED}_{F},F^{CED}_{H},F^{HDE}_{C} $

\end{itemize}

$\mathcal{C}_2\mathcal{C}_3\mathcal{C}_3-$ type:

\begin{itemize}

\item $\frac{1}{\sqrt{2}}\begin{pmatrix}
 1&1 \\
 1& -1  \end{pmatrix} $ for \\$
 F^{DFF}_{D},F^{DFC}_{D},F^{DFC}_{E},F^{DCF}_{D},F^{DCC}_{D},F^{DHH}_{D},F^{ECF}_{D},F^{ECC}_{E},F^{EHH}_{E},\\
 F^{FFD}_{D},F^{FCD}_{D},F^{FCE}_{D},F^{CFD}_{D},F^{CFD}_{E},F^{CCD}_{D},F^{CCE}_{E},F^{HHD}_{D},F^{HHE}_{E}$

\item $\frac{1}{\sqrt{2}}\begin{pmatrix}
 -1&1 \\
 -1& -1  \end{pmatrix} $ for \\$
 F^{DFF}_{E},F^{DHC}_{E},F^{EFF}_{D},F^{ECH}_{D}$

\item $\frac{1}{\sqrt{2}}\begin{pmatrix}
 1&-1 \\
 1& 1  \end{pmatrix} $ for \\$
 F^{DFH}_{D},F^{DFH}_{E},F^{DCH}_{D},F^{DHF}_{D},F^{DHC}_{D},F^{ECH}_{E},F^{EHF}_{D},F^{EHC}_{E},F^{FFE}_{E},\\
 F^{FHD}_{E},F^{FHE}_{E},F^{CCD}_{E},F^{CCE}_{D},F^{HFE}_{D},F^{HFE}_{E},F^{HHD}_{E},F^{HHE}_{D}$

\item $\frac{1}{\sqrt{2}}\begin{pmatrix}
 -1&1 \\
 1& 1  \end{pmatrix} $ for \\$
 F^{DCF}_{E},F^{DCH}_{E},F^{EFC}_{D},F^{EFC}_{E},F^{ECF}_{E},F^{EHC}_{D},F^{FCD}_{E},F^{FCE}_{E},F^{CFE}_{D},\\
 F^{CFE}_{E},F^{CHE}_{D},F^{HCD}_{E}$

\item $\frac{1}{\sqrt{2}}\begin{pmatrix}
 1&1 \\
 -1& 1  \end{pmatrix} $ for \\$
 F^{DCC}_{E},F^{DHF}_{E},F^{DHH}_{E},F^{EFF}_{E},F^{EFH}_{D},F^{EFH}_{E},F^{ECC}_{D},F^{EHF}_{E},F^{EHH}_{D},\\
 F^{FHD}_{D},F^{FHE}_{D},F^{CHD}_{D},F^{CHE}_{E},F^{HFD}_{D},F^{HFD}_{E},F^{HCD}_{D},F^{HCE}_{E}$

\item $ \begin{pmatrix}
 1&0 \\
 0& -1  \end{pmatrix} \:\:\text{for}\:\:
 F^{FDF}_{D},F^{CDC}_{D}$

\item $ \begin{pmatrix}
 0&1 \\
 1&0  \end{pmatrix} \:\:\text{for}\:\:
 F^{FDF}_{E},F^{FEF}_{D}$

\item $ \frac{1}{2}\begin{pmatrix}
 -1&-\sqrt{3} \\
 -\sqrt{3}&1  \end{pmatrix}  $ for \\$
F^{FDC}_{D},F^{FDH}_{D},F^{CDF}_{D},F^{CDH}_{D},F^{HDF}_{D},F^{HDC}_{D}$

\item $ \frac{1}{2}\begin{pmatrix}
 -\sqrt{3}&1 \\
 -1&-\sqrt{3}  \end{pmatrix}  \:\:\text{for}\:\:
F^{FDC}_{E},F^{FDH}_{E},F^{CEF}_{D},F^{HEF}_{D}$

\item $ \begin{pmatrix}
 -1&0 \\
 0&1  \end{pmatrix} \:\:\text{for}\:\:
 F^{FEF}_{E},F^{CEC}_{E}$

\item $\frac{1}{2}\begin{pmatrix}
 -\sqrt{3}&-1 \\
 1&-\sqrt{3}  \end{pmatrix}  \:\:\text{for}\:\:
F^{FEC}_{D},F^{FEH}_{D},F^{CDF}_{E},F^{HDF}_{E}$

\item $\frac{1}{2}\begin{pmatrix}
 1&-\sqrt{3} \\
 -\sqrt{3}&-1  \end{pmatrix}  \:\:\text{for}\:\:
F^{FEC}_{E},F^{FEH}_{E},F^{CEF}_{E},F^{HEF}_{E},F^{HEH}_{E}$

\item $\frac{1}{\sqrt{2}}\begin{pmatrix}
 -1&-1 \\
 1& -1  \end{pmatrix} \:\:\text{for}\:\:
 F^{FFD}_{E},F^{FFE}_{D},F^{CHD}_{E},F^{HCE}_{D}$

\item $\begin{pmatrix}
 0&-1 \\
 -1&0  \end{pmatrix} \:\:\text{for}\:\:
 F^{CDC}_{E},F^{CEC}_{D}$

 \item $\frac{1}{2}\begin{pmatrix}
 -\sqrt{3}&1 \\
 1&\sqrt{3}  \end{pmatrix}  \:\:\text{for}\:\:
F^{CDH}_{E},F^{CEH}_{D},F^{HDC}_{E},F^{HEC}_{D}$

\item $\frac{1}{2}\begin{pmatrix}
 1&\sqrt{3} \\
  \sqrt{3}&-1  \end{pmatrix}  \:\:\text{for}\:\:
F^{CEH}_{E},F^{HEC}_{E}$

\item $\frac{1}{2}\begin{pmatrix}
 -1& \sqrt{3} \\
  \sqrt{3}&1  \end{pmatrix}  \:\:\text{for}\:\:
F^{HDH}_{D}$

\item $\frac{1}{2}\begin{pmatrix}
  \sqrt{3}&1 \\
 1&- \sqrt{3}  \end{pmatrix}  \:\:\text{for}\:\:
F^{HDH}_{E},F^{HEH}_{D}$

\end{itemize}

$\mathcal{C}_3\mathcal{C}_3\mathcal{C}_3-$ type:

\begin{itemize}

\iffalse \item $1 $ for \\$
F^{FFF}_{A},F^{FFC}_{G},F^{FFC}_{H},F^{FFH}_{G},F^{FFH}_{C},F^{FCF}_{G},F^{FCF}_{H},F^{FCC}_{G},F^{FCC}_{H},F^{FCH}_{A},F^{FCH}_{F},\\
F^{FCH}_{C},F^{FCH}_{H},F^{FHF}_{G},F^{FHF}_{C},F^{FHC}_{A},F^{FHC}_{B},F^{FHC}_{F},F^{FHC}_{C},F^{FHC}_{H},F^{FHH}_{G},F^{FHH}_{C},\\
F^{CFF}_{G},F^{CFF}_{H},F^{CFC}_{G},F^{CFC}_{H},F^{CFH}_{A},F^{CFH}_{B},F^{CFH}_{F},F^{CFH}_{C},F^{CFH}_{H},F^{CCF}_{G},F^{CCF}_{H},\\
F^{CCC}_{A},F^{CCH}_{G},F^{CCH}_{F},F^{CHF}_{A},F^{CHF}_{B},F^{CHF}_{F},F^{CHF}_{C},F^{CHF}_{H},F^{CHC}_{G},F^{CHC}_{F},F^{CHH}_{G},\\
F^{CHH}_{F},F^{HFF}_{G},F^{HFF}_{C},F^{HFC}_{A},F^{HFC}_{B},F^{HFC}_{F},F^{HFC}_{C},F^{HFC}_{H},F^{HFH}_{G},F^{HFH}_{C},F^{HCF}_{A},\\
F^{HCF}_{F},F^{HCF}_{C},F^{HCF}_{H},F^{HCC}_{G},F^{HCC}_{F},F^{HCH}_{G},F^{HCH}_{F},F^{HHF}_{G},F^{HHF}_{C},F^{HHC}_{G},F^{HHC}_{F},\\
F^{HHH}_{A}$\fi

\item $-1 \:\:\text{for}\:\:
F^{FFF}_{B},F^{FCH}_{B},F^{CCC}_{B},F^{HCF}_{B},F^{HHH}_{B}$

\item  $\frac{1}{2}\begin{pmatrix}
1&       1        &\sqrt{2}\\
1&       1        &-\sqrt{2}\\
\sqrt{2}&-\sqrt{2}& 0  \end{pmatrix}  \:\:\text{for}\:\:
F^{FFF}_{F},F^{CCC}_{C},F^{HHH}_{H}$

\item  $\frac{1}{\sqrt{2}}\begin{pmatrix}
 1&-1 \\
 1& 1  \end{pmatrix} \:\:\text{for}\:\:
F^{FFC}_{C},F^{CCF}_{F},F^{CCH}_{H},F^{HHC}_{C}$

\item  $\frac{1}{\sqrt{2}}\begin{pmatrix}
 1&1 \\
 1&-1  \end{pmatrix} \:\:\text{for}\:\:
F^{FFH}_{H},F^{FHH}_{F},F^{HFF}_{H},F^{HHF}_{F}$

\item  $\begin{pmatrix}
 0& 1 \\
 1& 0  \end{pmatrix} \:\:\text{for}\:\:
F^{FCF}_{C},F^{FHF}_{H},F^{CFC}_{F},F^{CHC}_{H},F^{HFH}_{F},F^{HCH}_{C}$

\item  $\frac{1}{\sqrt{2}}\begin{pmatrix}
 1&1 \\
 -1&1  \end{pmatrix} \:\:\text{for}\:\:
F^{FCC}_{F},F^{CFF}_{C},F^{CHH}_{C},F^{HCC}_{H}$

\end{itemize}

\subsection{The rest of $R$-symbols}Beside the $R$-symbols at the beginning of the section, the rest are:

\begin{itemize}

\item $1$ for \\
$R^{BB}_{A},R^{GH}_{F},R^{GH}_{C},R^{DF}_{D},R^{DC}_{D},R^{EF}_{E},R^{EC}_{E},R^{FD}_{D},R^{FE}_{E},R^{FF}_{A},R^{FF}_{F},\\
R^{CD}_{D},R^{CE}_{E},R^{CC}_{A},R^{CC}_{C},R^{HG}_{F},R^{HG}_{C},R^{EE}_{A},R^{EE}_{F},R^{EE}_{C}$

\item$-1$ for\\
$R^{BG}_{G},R^{BF}_{F},R^{BC}_{C},R^{BH}_{H},R^{GB}_{G},R^{FB}_{F},R^{FF}_{B},R^{CB}_{C},R^{CC}_{B},R^{HB}_{H},R^{DD}_{A},\\
R^{DD}_{F},R^{DD}_{C}$

\item$ i$ for $R^{BD}_{E},R^{DB}_{E},R^{EF}_{D},R^{EC}_{D},R^{FE}_{D},R^{CE}_{D},R^{DE}_{B},R^{ED}_{B}$

\item$- i$ for $R^{BE}_{D},R^{DF}_{E},R^{DC}_{E},R^{EB}_{D},R^{FD}_{E},R^{CD}_{E},R^{DE}_{F},R^{DE}_{C},R^{ED}_{F},R^{ED}_{C}$

\item$\omega^2$ for \\
$R^{GG}_{A},R^{GF}_{H},R^{GC}_{H},R^{DH}_{D},R^{EH}_{E},R^{FG}_{H},R^{FC}_{G},R^{FH}_{C},R^{CG}_{H},R^{CF}_{G},R^{CH}_{F},\\
R^{HD}_{D},R^{HE}_{E},R^{HF}_{C},R^{HC}_{F},R^{HH}_{H},R^{EE}_{G}$

\item$-\omega^2$ for $R^{GG}_{B},R^{DD}_{G}$

\item$\omega$ for \\
$R^{GG}_{G},R^{GD}_{D},R^{GE}_{E},R^{GF}_{C},R^{GC}_{F},R^{DG}_{D},R^{EG}_{E},R^{FG}_{C},R^{FC}_{H},R^{FH}_{G},R^{CG}_{F},\\
R^{CF}_{H},R^{CH}_{G},R^{HF}_{G},R^{HC}_{G},R^{HH}_{A},R^{EE}_{H}$

\item$-\omega$ for
$R^{DD}_{H},R^{HH}_{B}$

\item$\omega i$ for
$R^{GE}_{D},R^{EG}_{D}$

\item$-\omega i$ for
$R^{GD}_{E},R^{DG}_{E},R^{DE}_{H},R^{ED}_{H}$

\item$ \omega^2 i$ for
$R^{EH}_{D}, R^{HE}_{D}$

\item$-\omega^2 i$ for
$R^{DE}_{G},R^{ED}_{G},R^{DH}_{E}, R^{HD}_{E}$

\iffalse
\item$\alpha \omega$ for
$R^{HD}_{E}$
\fi

\iffalse
\item$-\alpha \omega$ for
$R^{HE}_{D}$
\fi

\end{itemize}

\section{Representations of braid group $\B_4$} \label{rep of braid}

 In this appendix,  mathematically, we study whether or not the representations of $\B_4$ are irreducible and identify the images of those representation on each irreducible summand.  We will refer to each irreducible summand as a {\it sector}. For our application to anyonic quantum computation, we also determine whether or not there are unitary transformations $($braiding quantum circuits$)$ in the images that are powerful for quantum computation, especially whether or not these circuits lead to a universal gate set.

 We will provide explicitly the braiding matrices for $\sigma_1,\;\sigma_2, \; \sigma_3$, and then compute what is the group generated by them. Without loss of generality, we may multiply the $\sigma_i \, 's$ by a common factor so that they all have determinant $=1$  (Note that all the $\sigma_i \, 's$ are conjugate to each other).  We still denote the new representation by $\rho(m,z).$ We will focus on sectors which are $3$-dimensional. In this case, the images of the representation on such sectors are subgroups of $\textrm{SU}(3)$. As will be seen later, some interesting subgroups of $\textrm{SU}(3)$ will arise as the image.

As in Section \ref{background}, the representations are denoted by $\rho(m,z)$ on the space $V_{z}^{mmmm},$ which corresponds to the following splitting tree:

\setlength{\unitlength}{0.030in}
\begin{picture}(160,60)(0,-10)

\put(50,0){\line(0,1){10}}
\put(50,10){\line(1,1){30}}
\put(50,10){\line(-1,1){30}}
\put(70,30){\line(-1,1){10}}
\put(30,30){\line(1,1){10}}

\put(20,42){$m$}
\put(40,42){$m$}
\put(60,42){$m$}
\put(80,42){$m$}
\put(36,20){$x$}
\put(62,20){$y$}
\put(50,-2){$z$}
\end{picture}

Our results are summarized in Tables $3$ and $4$ in Section \ref{background}.  Now we examine each representation explicitly in the following subsections.

\begin{rem} \label{identical particle}
The matrices $\sigma_i \, 's$ depend on the fusion rules of the two anyons $m$, the $6j$-symbols, and the $R$-symbols $R_x^{mm}.$ From this point of view, the anyon $A$ and $B$ are not interesting because their $R$-matrices are trivial.  It follows that their representations are also projectively  trivial. Also, the anyons $C$ and $F$ have identical $R$-matrices. $D$ and $E$ are identical if we multiply the $R$-matrices of $D$ by $-1$. Similarly, $G$ and $H$ are identical if we replace  $\omega=-\frac{1}{2}+\frac{\sqrt{3}i}{2}$ in the $R$-matrices of $G$ by its complex conjugate. Therefore, it suffcies to consider the cases where the anyon $m$ is $C$, $D$, and $G$.
\end{rem}

\subsection{Representations on $V_z^{CCCC}$}

There are three choices for $z$ that make the following splitting tree admissible, namely $A$, $B$ or $C$.

\setlength{\unitlength}{0.030in}
\begin{picture}(160,60)(0,-10)

\put(50,0){\line(0,1){10}}
\put(50,10){\line(1,1){30}}
\put(50,10){\line(-1,1){30}}
\put(70,30){\line(-1,1){10}}
\put(30,30){\line(1,1){10}}

\put(20,42){$C$}
\put(40,42){$C$}
\put(60,42){$C$}
\put(80,42){$C$}
\put(36,20){$x$}
\put(62,20){$y$}
\put(50,-2){$z$}
\end{picture}

\subsubsection{$z=A$}

The basis of $V_A^{CCCC}$ is $\{|AA \rangle, |BB \rangle, |CC \rangle\}$. Under this basis,

$\sigma_1 = \sigma_3 =
\begin{pmatrix}
-1 &  0  &  0 \\
0 &  1 &  0 \\
0 &  0  &  -1 \\
\end{pmatrix}
\qquad
\sigma_2 = -\frac{1}{2}
\begin{pmatrix}
1  &   -1   & \sqrt{2}\\
-1 &   1    & \sqrt{2}\\
\sqrt{2} & \sqrt{2} & 0 \\
\end{pmatrix}
$

This representation splits into two sectors $S_1$ and $S_2$, where $S_1$ is a $1$-dim irrep mapping $\sigma_i$ to $-1$ and $S_2$ is a $2$-dim irrep spanned by $\{|BB\rangle, \frac{\sqrt{3}}{3}|AA\rangle - \frac{\sqrt{6}}{3}|CC\rangle\}.$ The matrices of the $\sigma_i \, 's$ under the basis of $S_2$ are given by:

$\sigma_1 = \sigma_3 = i
\begin{pmatrix}
1 &  0  \\
0 &  -1  \\
\end{pmatrix}
\qquad
\sigma_2 = \frac{i}{2}
\begin{pmatrix}
-1   & -\sqrt{3}\\
-\sqrt{3} & 1 \\
\end{pmatrix}
$

They generate a group which is isomorphic to $\mathbb{Z}_3 \rtimes \mathbb{Z}_4$.

\subsubsection{$z=B$}

The Hilbert space $V_B^{CCCC}$ is also three dimensional with basis $\{|CC \rangle, |AB \rangle, |BA \rangle\}$. The matrices of the $\sigma_i \, 's$ are given by:

$\sigma_1 =
\begin{pmatrix}
-1 &  0  &  0 \\
0 &  -1 &  0 \\
0 &  0  &  1 \\
\end{pmatrix}
\qquad
\sigma_2 = -\frac{1}{2}
\begin{pmatrix}
0  & -\sqrt{2} & -\sqrt{2}\\
-\sqrt{2} & 1  &  -1      \\
-\sqrt{2} & -1  &  1      \\
\end{pmatrix}
$

$
\sigma_3 =
\begin{pmatrix}
-1 &  0  &  0 \\
0 &  1 &  0 \\
0 &  0  &  -1 \\
\end{pmatrix}
$

The representation is irreducible with image isomorphic to the permutation group $\Sym_4$.

\subsubsection{z = C}

The Hilbert space $V_C^{CCCC}$ is five dimensional with basis $\{|CC \rangle, |AC \rangle, |CA \rangle, |BC \rangle, |CB \rangle\}$. And the image of the $\sigma_i \, 's$ are given by :

$\sigma_1 =
\begin{pmatrix}
1 &  0  &  0 & 0 &  0\\
0 &  1 &  0  & 0 &  0\\
0 &  0  & 1  & 0 &  0\\
0 &  0  & 0  & -1 &  0\\
0 &  0  & 0  & 0 &  1\\
\end{pmatrix}
\qquad
\sigma_2 = \frac{1}{2}
\begin{pmatrix}
1 &  0  &  0 & 0 &  0\\
0 &  1 &  1  & 1 &  -1\\
0 &  1  & 1  & -1 &  1\\
0 &  1  & -1  & 1 &  1\\
0 &  -1  & 1  & 1 &  1\\
\end{pmatrix}
$

$
\sigma_3 =
\begin{pmatrix}
1 &  0  &  0 & 0 &  0\\
0 &  1 &  0  & 0 &  0\\
0 &  0  & 1  & 0 &  0\\
0 &  0  & 0  & 1 &  0\\
0 &  0  & 0  & 0 &  -1\\
\end{pmatrix}
\qquad
$

$V_C^{CCCC}$ splits into the direct sum of two trivial irreps and a 3-dim irrep V.  V has a basis $\{\frac{1}{\sqrt{2}}(|AC\rangle - |CA\rangle),|BC \rangle, |CB \rangle \}.$ Under this basis, the $\sigma_i \, 's$ have the following image:

$\sigma_1 =
\begin{pmatrix}
-1 &  0  &  0 \\
0 &  1 &  0 \\
0 &  0  &  -1 \\
\end{pmatrix}
\qquad
\sigma_2 = -\frac{1}{2}
\begin{pmatrix}
0  & \sqrt{2} & -\sqrt{2}\\
\sqrt{2} & 1  &  1      \\
-\sqrt{2} & 1  &  1      \\
\end{pmatrix}
\qquad
\sigma_3 =
\begin{pmatrix}
-1 &  0  &  0 \\
0 &  -1 &  0 \\
0 &  0  &  1 \\
\end{pmatrix}
$

And they generate a group which is also isomorphic to $\Sym_4$.

Therefore, if we braid four anyons $C$, then all the images of the representations are very small.

\subsection{Representations on $V_z^{DDDD}$ }

There are six choices for $z$, namely $A$, $B$, $C$, $F$, $G$, $H$. By Remark \ref{identical particle}, we only need to consider cases where $z$ = $A$, $B$, $F$ and $G$.

\setlength{\unitlength}{0.030in}
\begin{picture}(160,60)(0,-10)

\put(50,0){\line(0,1){10}}
\put(50,10){\line(1,1){30}}
\put(50,10){\line(-1,1){30}}
\put(70,30){\line(-1,1){10}}
\put(30,30){\line(1,1){10}}

\put(20,42){$D$}
\put(40,42){$D$}
\put(60,42){$D$}
\put(80,42){$D$}
\put(36,20){$x$}
\put(62,20){$y$}
\put(50,-2){$z$}
\end{picture}

\subsubsection{z = A}

The space $V_A^{DDDD}$ is five dimensional with basis $\{|AA\rangle, |GG\rangle,\\ |FF\rangle, |CC\rangle, |HH\rangle \}.$ Under this basis, the matrices of the $\sigma_i \, 's$ are as follow:

$\sigma_1 = \sigma_3 =
\begin{pmatrix}
1 &  0  &  0 & 0 &  0\\
0 &  \omega^2 &  0  & 0 &  0\\
0 &  0  & 1  & 0 &  0\\
0 &  0  & 0  & 1 &  0\\
0 &  0  & 0  & 0 &  \omega\\
\end{pmatrix}
$

$
\sigma_2 = \frac{1}{6}
\begin{pmatrix}
2                     &  \sqrt{2} + \sqrt{6}i   &      2\sqrt{2}     &  2\sqrt{2}     &  -\sqrt{2} - \sqrt{6}i\\
-\sqrt{2} + \sqrt{6}i &  1+ \sqrt{3}i           &   1- \sqrt{3}i     &  1- \sqrt{3}i  &   4                   \\
2\sqrt{2}             &  1- \sqrt{3}i           &    4               &   -2           &  1+ \sqrt{3}i\\
2\sqrt{2}             &  1- \sqrt{3}i           &    -2              &   4            &  1+ \sqrt{3}i\\
-\sqrt{2} - \sqrt{6}i &  4                      & 1+ \sqrt{3}i       & 1+ \sqrt{3}i   &  1- \sqrt{3}i\\
\end{pmatrix}
\qquad
$

where $\omega = -\frac{1}{2} + \frac{\sqrt{3}i}{2} $ is a third root of unity.

This representation splits into the direct sum of two trivial representations and a three-dimensional sector $S$, which is spanned by the basis $\{|GG\rangle, |HH\rangle, \frac{1}{2}(-\sqrt{2}|AA\rangle + |CC\rangle + |FF\rangle )\}.$ The representation on $S$ is generated by the following matrices:

$\sigma_1 = \sigma_3 =
\begin{pmatrix}
\omega^2   &  0  &  0 \\
0 &  \omega &  0 \\
0 &  0  &  1 \\
\end{pmatrix}
\qquad
\sigma_2 = \frac{1}{6}
\begin{pmatrix}
1+ \sqrt{3}i   &  4          &  2- 2\sqrt{3}i \\
4              &1- \sqrt{3}i &  2+ 2\sqrt{3}i \\
2- 2\sqrt{3}i  &2+ 2\sqrt{3}i  &  -2 \\
\end{pmatrix}
$

They generate a group of order 12 which is isomorphic to $\A_4$, the alternating group.

\subsubsection{$z = B$}

$V_B^{DDDD}$ is 4-dimensional with basis $\{ |GG\rangle, |FF\rangle, |CC\rangle, |HH\rangle \}.$ The matrices of the generators are:

$\sigma_1 = \sigma_3 =
\begin{pmatrix}
\omega^2 &  0  & 0 &  0\\
 0  & 1  & 0 &  0\\
 0  & 0  & 1 &  0\\
 0  & 0  & 0 &  \omega\\
\end{pmatrix}
$

$
\sigma_2 = \frac{1}{6}
\begin{pmatrix}
3-\sqrt{3}i  &  3+\sqrt{3}i  & 3+\sqrt{3}i   &  0\\
3+\sqrt{3}i  &   0           & -2\sqrt{3}    &  -3+\sqrt{3}i\\
3+\sqrt{3}i  &   -2\sqrt{3}  &  0            &  3-\sqrt{3}i\\
 0           &  -3+\sqrt{3}i & 3-\sqrt{3}i   &  3+\sqrt{3}i\\
\end{pmatrix}
$

This representation splits into the sum of two $2$-dimensional sectors $S_1$ and $S_2$, where $S_1$ is spanned by $\{|GG\rangle, \frac{1}{\sqrt{2}}(|CC\rangle+ |FF\rangle)\},$ and $S_2$ is spanned by $\{|HH\rangle, \frac{1}{\sqrt{2}}(|FF\rangle- |CC\rangle)\}.$

On the sector $S_1$, the generators have matrices:

$\sigma_1 = \sigma_3 =
\begin{pmatrix}
\omega &  0  \\
0 &  \omega^2  \\
\end{pmatrix}
\qquad
\sigma_2 =
\begin{pmatrix}
-\frac{1}{2} - \frac{\sqrt{3}i}{6}   & -\frac{\sqrt{6}i}{3}\\
-\frac{\sqrt{6}i}{3}                 & -\frac{1}{2} + \frac{\sqrt{3}i}{6} \\
\end{pmatrix}
$

They generate a group of size $24$ which is isomorphic to $\textrm{SL}(2, \mathbb{F}_3)$. Modulo the center, we get $\A_4$.

The representation on the other sector $S_2$ is exactly the same of that on $S_1$.

\subsubsection{$z = F$}

The space $V_F^{DDDD}$ is nine dimensional and has a basis $\{|FF\rangle, |AF\rangle, |FA\rangle, |GC\rangle, |CG\rangle, |GH\rangle, |HG\rangle, |CH\rangle, |HC\rangle \}.$ The generators of $\B_4$ have the following matrices:

$\sigma_1 =
\begin{pmatrix}
1 &    &   &  & &&&&&  \\
  &  1 &   &  & &&&&& \\
  &    & 1 &  & &&&&& \\
  &    &   &\omega^2  & &&&&& \\
  &    &   &          & 1  &&&&& \\
  &    &   &          &    & \omega^2&&&& \\
  &    &   &          &    &         & \omega &&& \\
  &    &   &          &    &         &        & 1 && \\
  &    &   &          &    &         &        &   & \omega & \\
\end{pmatrix}
$

$
\sigma_3 =
\begin{pmatrix}
1 &    &   &  & &&&&&  \\
  &  1 &   &  & &&&&& \\
  &    &1 &  & &&&&& \\
  &    &   &1  & &&&&& \\
  &    &   &          & \omega^2  &&&&& \\
  &    &   &          &    & \omega&&&& \\
  &    &   &          &    &         & \omega^2 &&& \\
  &    &   &          &    &         &        & \omega && \\
  &    &   &          &    &         &        &   & 1 & \\
\end{pmatrix}
$

$\sigma_2 = \frac{1}{3}
\begin{pmatrix}
1             & 1             & 1          & \omega^2  & \omega^2  &    1        &   1        & \omega   & \omega  \\
1             & 1             & 1          & \omega  &    1          & \omega&\omega^2&  1             &\omega^2  \\
1             & 1             & 1          &  1            & \omega  & \omega^2&\omega& \omega^2   &  1           \\
\omega^2  & \omega  & 1          &  1            & \omega^2  & \omega&  1         &  1             &   1          \\
\omega^2  & 1             &\omega& \omega^2  &  1            &  1          &\omega&  1             &  1           \\
1             & \omega  &\omega^2& \omega  &  1            &  1          &  1         & \omega^2    & 1         \\
1             & \omega^2  &\omega&   1           & \omega  &  1          &  1         &  1              &\omega^2 \\
\omega  & 1             &\omega^2&   1           &  1            & \omega^2&  1         &  1              &\omega \\
\omega  & \omega^2  & 1          &   1           &  1            &  1          &\omega^2& \omega    & 1           \\
\end{pmatrix}
$

This representation splits into the sum of a $1$-dim trivial representation and an $8$-dim irrep. The $1$-dim irrep is spanned by the element $\frac{1}{\sqrt{3}}(|FF\rangle + |AF\rangle + |FA\rangle).$ The $8$-dim irrep has an image in $\textrm{U}(8)$ of size $216$ which is isomorphic to the famous Hessian group $\sum(216) $ in physics literature. The following is a presentation of $\sum(216)$ from analyzing the matrices of the generators $\sigma_i$:

$$ <a,b,c | aba = bab, bcb = cbc, ac = ca, a^3 = (ab)^6 = (bc)^6 = (abcaba)^2 = 1>.
$$

As an abstract group, it is isomorphic to $((\mathbb{Z}_3 \times \mathbb{Z}_3) \rtimes Q_8) \times \mathbb{Z}_3,$ where $Q_8$ is the quaternion group of order $8$.

We will see this group again later.

\subsubsection{$z = G$}

The space $V_G^{DDDD}$ is also nine dimensional with a basis $\{|GG\rangle, |AG\rangle,\\ |GA\rangle, |FC\rangle, |CF\rangle, |FH\rangle, |HF\rangle, |CH\rangle, |HC\rangle\}$. As always, we first look at the matrices of the $\sigma_i \, 's$:

$\sigma_1 =
\begin{pmatrix}
\omega^2 &    &   &  & &&&&&  \\
  &  1 &   &  & &&&&& \\
  &    & \omega^2 &  & &&&&& \\
  &    &   & 1  & &&&&& \\
  &    &   &          & 1  &&&&& \\
  &    &   &          &    & 1&&&& \\
  &    &   &          &    &         & \omega &&& \\
  &    &   &          &    &         &        & 1 && \\
  &    &   &          &    &         &        &   & \omega & \\
\end{pmatrix}
$

$
\sigma_3 =
\begin{pmatrix}
\omega^2 &    &   &  & &&&&&  \\
  &  \omega^2 &   &  & &&&&& \\
  &    &1 &  & &&&&& \\
  &    &   &1  & &&&&& \\
  &    &   &          & 1  &&&&& \\
  &    &   &          &    & \omega&&&& \\
  &    &   &          &    &         & 1 &&& \\
  &    &   &          &    &         &        & \omega && \\
  &    &   &          &    &         &        &   & 1 & \\
\end{pmatrix}
$

$\sigma_2 = \frac{1}{3}
\begin{pmatrix}
1             & \omega             & \omega         & \omega^2  & \omega^2  &    1        &   1        & 1   & 1  \\
\omega             & 1             & \omega          & 1  &    1          & 1&\omega^2&  1             &\omega^2  \\
\omega             & \omega             & 1          &  1            & 1  & \omega^2&1& \omega^2   &  1           \\
\omega^2  & 1  & 1          &  1            & \omega^2  & 1&  \omega         &  \omega             &   1          \\
\omega^2  & 1             &1&\omega^2  &  1            &  \omega          &1&  1             &  \omega           \\
1             & 1  &\omega^2& 1  &  \omega            &  1          &  1         & \omega^2    & \omega         \\
1             & \omega^2  &1&   \omega           & 1  &  1          &  1         &  \omega              &\omega^2 \\
1  & 1             &\omega^2&   \omega           &  1            & \omega^2&  \omega         &  1              &1 \\
1  & \omega^2  & 1          &   1           &  \omega            &  \omega          &\omega^2& 1    & 1           \\
\end{pmatrix}
$

The representation splits into the sum of a 6-dim irrep and a 3-dim irrep.

Denote this 3-dim irrep by $W$, which is spanned by the basis $\{\frac{1}{\sqrt{2}}(|FC\rangle - |CF\rangle), \frac{1}{\sqrt{2}}(|CH\rangle - |FH\rangle), \frac{1}{\sqrt{2}}(|HF\rangle - |HC\rangle)\}.$ If we use this subspace $W$ as computational space, we will have a qutrit. The three basis elements above correspond to $|0\rangle, \,  |1\rangle, \,  |2\rangle. $

Under this basis, the matrices of the $\sigma_i \, 's$(unnormalized) on $W$ are as follows:

$\sigma_1 =
\begin{pmatrix}
1 &  0  &  0 \\
0 &  1 &  0 \\
0 &  0  &  \omega \\
\end{pmatrix}
\qquad
\sigma_3 =
\begin{pmatrix}
1 &  0  &  0 \\
0 &  \omega &  0 \\
0 &  0  &  1 \\
\end{pmatrix}
$

$\sigma_2 =
\begin{pmatrix}
\frac{1}{2} + \frac{\sqrt{3}i}{6} &  -\frac{1}{2} + \frac{\sqrt{3}i}{6}  &  -\frac{1}{2} + \frac{\sqrt{3}i}{6} \\
-\frac{1}{2} + \frac{\sqrt{3}i}{6} & \frac{1}{2} + \frac{\sqrt{3}i}{6}   &  -\frac{1}{2} + \frac{\sqrt{3}i}{6} \\
-\frac{1}{2} + \frac{\sqrt{3}i}{6} &  -\frac{1}{2} + \frac{\sqrt{3}i}{6}  &  \frac{1}{2} + \frac{\sqrt{3}i}{6} \\
\end{pmatrix}
$

 The group generated by them has order 648 with a center of size 3. The elements in the center are scalar matrices. And the group modulo center has order 216 which is isomorphic to the Hessian group $\sum(216).$

\subsection{Representations on $V_z^{GGGG}$}

The possible choices of $z$ are $A$, $B$ and $G$.

\setlength{\unitlength}{0.030in}
\begin{picture}(160,60)(0,-10)

\put(50,0){\line(0,1){10}}
\put(50,10){\line(1,1){30}}
\put(50,10){\line(-1,1){30}}
\put(70,30){\line(-1,1){10}}
\put(30,30){\line(1,1){10}}

\put(20,42){$G$}
\put(40,42){$G$}
\put(60,42){$G$}
\put(80,42){$G$}
\put(36,20){$x$}
\put(62,20){$y$}
\put(50,-2){$z$}
\end{picture}

\subsubsection{$z = A$}

The space $V_A^{GGGG}$ is 3-dimensional with a basis $\{|AA\rangle, \\ |BB\rangle, |GG\rangle\}.$ Under this basis, the matrices for the generators $\sigma_i$ $'s$ are given by

$\sigma_1 = \sigma_3 = \tau
\begin{pmatrix}
\omega^2 &  0  &  0 \\
0 &  -\omega^2 &  0 \\
0 &  0  &  \omega \\
\end{pmatrix}
$

$\sigma_2 = \tau
\begin{pmatrix}
\frac{1}{2}\omega & -\frac{1}{2}\omega  & \frac{1}{\sqrt{2}}\omega^2 \\
-\frac{1}{2}\omega & \frac{1}{2}\omega & \frac{1}{\sqrt{2}}\omega^2\\
\frac{1}{\sqrt{2}}\omega^2 & \frac{1}{\sqrt{2}}\omega^2 & 0 \\
\end{pmatrix}
$
where $\tau = e^{-\frac{\pi i}{9}}.$

This representation is irreducible and the group generated by them has a structure of $ (\mathbb{Z}_9 \times \mathbb{Z}_3) \rtimes \Sym_3$ with order 162, which is isomorphic to the group $D(9,1,1; 2,1,1)$.

We recall the definition of $D(n,a,b; d,r,s)$ below. For more information about this type of subgroups of $\textrm{SU}(3)$, see \cite{LP}.

Let

$E$ =
$
\begin{pmatrix}
0 & 1 & 0 \\
0 & 0 & 1 \\
1 & 0 & 0
\end{pmatrix}
\qquad
F = F(n,a,b) =
\begin{pmatrix}
e^{\frac{2\pi i a}{n}} & 0 & 0 \\
0 & e^{\frac{2\pi i b}{n}} & 0 \\
0 & 0 & e^{\frac{2\pi i (-a-b)}{n}}
\end{pmatrix}
$

$
G = G(d,r,s) =
\begin{pmatrix}
e^{\frac{2\pi i r}{d}} & 0 & 0 \\
0 & 0 & e^{\frac{2\pi i s}{d}} \\
0 & -e^{\frac{2\pi i (-r-s)}{d}}
\end{pmatrix}
$

Then $D(n,a,b;d,r,s)$ := $<E, F(n,a,b), G(d,r,s)>$.

Actually one can show that the group generated by the $\sigma_i \, 's$ is isomorphic to $D(9,1,1; 2,1,1)$ via a conjugation by some unitary matrix.

\subsubsection{$z = B$}

$V_B^{GGGG}$) is also 3-dimensional with a basis $\{|GG\rangle, |AB\rangle, |BA\rangle\}.$  The matrices of the $\sigma_i \, 's$ are given by

$\sigma_1  = \tau
\begin{pmatrix}
\omega &  0  &  0 \\
0 &  \omega^2 &  0 \\
0 &  0  &  -\omega^2 \\
\end{pmatrix}
\qquad
\sigma_3  = \tau
\begin{pmatrix}
\omega &  0  &  0 \\
0 &  -\omega^2 &  0 \\
0 &  0  &  \omega^2 \\
\end{pmatrix}
$

$\sigma_2 = \tau
\begin{pmatrix}
0 & -\frac{1}{\sqrt{2}}\omega^2 & -\frac{1}{\sqrt{2}}\omega^2\\
-\frac{1}{\sqrt{2}}\omega^2 &  \frac{1}{2}\omega & -\frac{1}{2}\omega\\
-\frac{1}{\sqrt{2}}\omega^2 &  -\frac{1}{2}\omega & \frac{1}{2}\omega \\
\end{pmatrix}
$
where $\tau = e^{-\frac{\pi i}{9}}.$

Again this representation is irreducible and they generate a group with structure $(\mathbb{Z}_{18} \times \mathbb{Z}_6) \rtimes \Sym_3,$ which is isomorphic to the group $D(18,1,1; 2,1,1)$ \cite{LP}. So it has order $648$.

Let
$
p= \begin{pmatrix}
0    &   0  &  1  \\
\frac{1}{\sqrt{2}} &  -\frac{1}{\sqrt{2}}   &  0 \\
\frac{1}{\sqrt{2}} &  \frac{1}{\sqrt{2}}    &  0 \\
\end{pmatrix}
$

Direct calculations show that conjugation by the matrix $p$ gives the isomorphism from our group generated by the $\sigma_i \, 's$ to $D(18,1,1;2,1,1)$.

\subsubsection{$z = G$}

$V_G^{GGGG}$ is now 5-dimensional with a basis $\{|GG\rangle, |AG\rangle, \\ |GA\rangle, |BG\rangle, |GB\rangle\}.$ The matrices of the generators $\sigma_i \, 's$ are:

$\sigma_1  =
\begin{pmatrix}
\omega &  0  &  0 & 0 &  0\\
0 &  \omega^2 &  0  & 0 &  0\\
0 &  0  & \omega  & 0 &  0\\
0 &  0  & 0  & -\omega^2 &  0\\
0 &  0  & 0  & 0 &  \omega\\
\end{pmatrix}
$

$
\sigma_2 =
\begin{pmatrix}
\omega &  0  &  0 & 0 &  0\\
0 &  \frac{1}{2}\omega &  \frac{1}{2}\omega^2  & \frac{1}{2}\omega &  -\frac{1}{2}\omega^2\\
0 &  \frac{1}{2}\omega^2  & \frac{1}{2}\omega  & -\frac{1}{2}\omega^2 &  \frac{1}{2}\omega\\
0 &  \frac{1}{2}\omega  & -\frac{1}{2}\omega^2  & \frac{1}{2}\omega &  \frac{1}{2}\omega^2\\
0 &  -\frac{1}{2}\omega^2  & \frac{1}{2}\omega  & \frac{1}{2}\omega^2 &  \frac{1}{2}\omega\\
\end{pmatrix}
\qquad
$

$\sigma_3  =
\begin{pmatrix}
\omega &  0  &  0 & 0 &  0\\
0 &  \omega &  0  & 0 &  0\\
0 &  0  & \omega^2  & 0 &  0\\
0 &  0  & 0  & \omega &  0\\
0 &  0  & 0  & 0 &  -\omega^2\\
\end{pmatrix}
$

It's obvious that $|GG\rangle$ is a common eigenvector of the $\sigma_i \, 's$. So it spans a 1-dim irrep of $\B_4$. The orthogonal complement spanned by the other $4$ basis elements is a 4-dim irrep.

The group generated by the $\sigma_i \, 's$ has order 648. And GAP shows that it has a structure of $(((\mathbb{Z}_3 \times ((\mathbb{Z}_3 \times \mathbb{Z}_3) \rtimes \mathbb{Z}_2)) \rtimes \mathbb{Z}_2 ) \rtimes \mathbb{Z}_3) \rtimes \mathbb{Z}_2$.

\bibliographystyle{plain}
\bibliography{mybib_UQGPFA}

\end{document}